\numberwithin{equation}{section}
\theoremstyle{definition}
\newtheorem{theorem}{Theorem}[section]
\newtheorem{lemma}{Lemma}[section]
\newtheorem{remark}{Remark}[section]
\newtheorem{assumption}{Assumption}[section]
\renewcommand{\thetable}{\arabic{section}.\arabic{table}}
\renewcommand{\thefigure}{\arabic{section}.\arabic{figure}}
\begin{document} 	

\begin{frontmatter}

\title{Optimal estimation for regression discontinuity design \\ 
with binary outcomes\protect\thanksref{T1}\protect\thanksref{T2}}

\runtitle{Optimal estimation for regression discontinuity design \\ 
with binary outcomes}
\thankstext{T1}{
This study was supported by JSPS KAKENHI Grant Numbers JP22K13373 (Ishihara) and JP21K13269 (Sawada). We thank Yu-Chang Chen, Atsushi Inoue, Timothy Neal, Michal Koles\'ar, Soonwoo Kwon, Tomasz Olma, and Ke-Li Xu, as well as seminar participants at the Japanese Joint Statistical Meeting, Hitotsubashi University, Kansai Keiryo Keizaigaku Kenkyukai, the Tohoku-NTU Joint Seminar, the Econometric Society World Congress 2025, and LMU-Todai Econometrics Workshop, for their insightful comments.
}
\thankstext{T2}{First Version: September 23, 2025; Current Version: \today}
\begin{aug}
\author{\fnms{Takuya} \snm{Ishihara}\thanksref{a}\ead[label=e1]{takuya.ishihara.b7@tohoku.ac.jp}}
\author{\fnms{Masayuki} \snm{Sawada}\thanksref{b}\ead[label=e2]{masayuki.sawada@r.hit-u.ac.jp}}
\author{\fnms{Kohei} \snm{Yata}\thanksref{c}\ead[label=e3]{yata@wisc.edu}}

\address[a]{Tohoku University, Graduate School of Economics and Management
}

\address[b]{Hitotsubashi University, Institute of Economic Research
}

\address[c]{The University of Wisconsin--Madison, Department of Economics
}

\end{aug}

\begin{abstract}
We develop a finite-sample optimal estimator for regression discontinuity design when the outcomes are bounded, including binary outcomes as the leading case. Our estimator achieves minimax mean squared error among linear shrinkage estimators with nonnegative weights when the regression function lies in a Lipschitz class. Although the original minimax problem involves an iterative noncovex optimization problem, we show that our estimator is obtained by solving a convex optimization problem. A key advantage of the proposed estimator is that the Lipschitz constant is its only tuning parameter. We also propose a uniformly valid inference procedure without a large-sample approximation. In a simulation exercise for small samples, our estimator exhibits smaller mean squared errors and shorter confidence intervals than those of conventional large-sample techniques. In an empirical multi-cutoff design in which the sample size for each cutoff is small, our method yields informative confidence intervals, in contrast to the leading large-sample approach.
\end{abstract}

\begin{keyword}
\kwd{regression discontinuity}
\kwd{finite-sample minimax estimation}
\kwd{bias-aware inference}
\kwd{binary outcome}
\end{keyword}

\end{frontmatter}

\section{Introduction}\label{sec:intro}

A large-sample approximation is the basis for the leading estimators for regression discontinuity (RD) designs \citep[for example]{Imbens.Kalyanaraman2012,Calonico.Cattaneo.Titiunik2014}. 
RD designs involve the estimation of conditional expectation functions at a cutoff point on the support of a running variable. Hence, effective observations are limited to the neighborhood of the cutoff, and the number of these observations can be small, even if the total sample size is large \citep{Cattaneo.Frandsen.Titiunik2015,Canay.Kamat2017}. For example, an effective sample can be small for designs with multiple cutoffs, with a cutoff at the tail of the distribution, or with subgroup analyses. In small samples, the large-sample asymptotics may not provide good approximations of the behaviors of the existing estimators; hence, their desirable properties may be lost.

Some studies have considered finite-sample minimax estimators for RD designs.\footnote{Throughout the manuscript, we compare our estimator with existing finite-sample minimax estimators. Another notable approach is finite-sample valid estimation and inference based on the local randomization of the RD design \citep{Cattaneo.Frandsen.Titiunik2015, cattaneoInferenceRegressionDiscontinuity2016, cattaneoComparingInferenceApproaches2017}. The local randomization approach is based on the assumption that the running variable is randomly assigned with a constant regression function within a given small window around the threshold \citep{Cattaneo_Idrobo_Titiunik_2024}, whereas we consider a smooth but nonconstant regression function within the window.} For example, \cite{Armstrong.Kolesar2018} and \cite{Imbens.Wager2019} propose finite-sample minimax linear estimators under the smoothness of the regression function.
However, these minimax estimators require knowledge of the conditional variance function, which is generally unavailable in practice. Although the variance can be estimated, we cannot guarantee the theoretical validity of the plug-in estimators with the estimated variance in finite samples.
Furthermore, the construction of finite-sample valid confidence intervals based on these estimators additionally requires the normality of the regression errors.

In this study, we propose finite-sample estimation and inference methods for RD designs with binary outcomes. 
For a binary dependent variable, all features of its conditional distribution, including its conditional variance, are \textit{known} functions of its conditional mean function. We establish the finite-sample validity of our methods under a smoothness restriction on the conditional mean function by considering the implicit restrictions imposed on the entire conditional distribution. Therefore, our procedure is both feasible and theoretically valid without knowledge or estimation of the conditional variance or, more generally, any features of the conditional distribution, except for the smoothness of the conditional mean.

Specifically, we consider a minimax optimal estimator among a class of {\it linear shrinkage} estimators for the regression function at a boundary point, under the assumption that the regression function satisfies Lipschitz continuity. The class of linear shrinkage estimators has the form $\sum_{i=1}^nw_i(Y_i-1/2)+1/2$ with $\sum_{i=1}^nw_i\le 1$ and $w_i\ge 0$, where $Y_1,...,Y_n$ are the observed outcomes on either side of the boundary. The shrinkage toward $1/2$ is motivated by the fact that the regression function is bounded and takes values in $[0,1]$, leading to a scope of efficiency gain by shrinkage. Given this class of linear shrinkage estimators, we derive a linear shrinkage estimator that minimizes the maximum mean squared error (MSE) under Lipschitz continuity with a known Lipschitz constant. In other words, we assume the researcher's a priori knowledge of the bound of how much the function value can change if the running variable is changed by one unit. We emphasize that the Lipschitz constant is the only tuning parameter used. Furthermore, we show that the minimax estimator is a solution to a convex optimization problem that is computationally feasible. Thus, we provide a practical estimator that achieves finite-sample optimality in the binary-outcome setting.

Our estimator is applicable to many practical RD designs. Binary outcomes are among the most common types of outcomes in empirical applications. For example, the following outcome variables are all binary: an indicator for winning the next election in the famous U.S. House election study by \cite{Lee2008}; a corruption indicator in \cite{Brollo.Nannicini.Perotti.Tabellini2013}; a mortality indicator in \cite{card_does_2009}; and indicators for student's enrollment and dropout in \cite{Melguizo2016} and \cite{Cattaneo2021multi}. Furthermore, the first stage in fuzzy RD designs often involves treatment status as a binary dependent outcome. Moreover, the minimax optimality of our estimator for binary outcomes immediately extends to that for bounded outcomes because the variance of any linear estimator is maximized when the outcomes are Bernoulli, given the conditional mean function. Hence, our estimator can be applied not only to binary outcomes but also to bounded outcomes, which are also frequently used in the RD design. 

Our method also complements existing minimax estimators. We compare our estimator to a version of the existing minimax estimators \citep{Armstrong.Kolesar2018,Imbens.Wager2019} and demonstrate that our method has better finite-sample performance than the existing approach, although their asymptotic behaviors are similar. Specifically, we consider a minimax linear estimator obtained under a misspecified model in which the conditional mean and variance are unrelated, the variance is known, and the regression function lies in a Lipschitz class with no bounds on the function values. This estimator is not directly feasible in our binary-outcome setting, in which the variance is unknown. As a feasible version of this estimator, we consider one obtained under the assumption of a constant variance of $1/4$, which is the maximum possible variance of a binary variable. For binary outcomes, we theoretically show that the efficiency gain from our estimator, relative to the alternative estimator above, tends to vanish as the sample size increases. Nevertheless, for small samples, we numerically demonstrate that the alternative method can result in a $5\%$ to $20\%$ increase in the worst-case root MSE owing to model misspecification. Hence, our method supplements the existing minimax estimators with better finite-sample performance and similar asymptotic behaviors in a binary-outcome setting.

We also propose confidence intervals with correct coverage in finite samples uniformly over the Lipschitz class. We construct confidence intervals by inverting one- or two-sided uniformly valid tests that use a linear estimator as the test statistic. To construct a uniformly valid test, we propose a simulation-based approximation to the distribution of the test statistic by drawing samples from a multivariate Bernoulli distribution that satisfies the null restriction. We then numerically optimize the critical value so that the worst-case rejection probability is equal to or smaller than the significance level. A computational challenge with this approach is the calculation of the worst-case rejection probability, which involves the optimization of an $(n+1)$-dimensional parameter.
We overcome this challenge by deriving a simple characterization of the worst-case rejection probability under Lipschitz continuity, which significantly reduces the computational burden.
We also emphasize that our confidence intervals are valid in finite samples for binary outcomes.
This contrasts with existing inference methods, which are based on either a large-sample approximation or a restrictive assumption of Gaussian errors with a known variance.

The same inference approach does not apply to bounded outcomes because the simple characterization of the worst-case rejection probability relies on the fact that the outcome is binary. For bounded outcomes, we provide an alternative finite-sample inference procedure based on a uniform bound on the rejection probability obtained using Hoeffding's inequality. The resulting confidence intervals have correct coverage in finite samples but can be conservative, similar to Hoeffding's inequality-based confidence intervals in other contexts.

We demonstrate the performance of our methods through simulations and an empirical application. In the simulations, our estimator achieves substantially smaller MSEs relative to the leading large-sample estimators when the sample size is small. Furthermore, our estimator behaves similarly to the large-sample estimators when the sample size is large; the differences in MSE decrease as the number of observations increases. Our proposed inference method also achieves guaranteed coverage rates with shorter confidence intervals when the sample size is small. Hence, our methods, while theoretically valid, are also useful in practice.

We illustrate our methods by revisiting \cite{Brollo.Nannicini.Perotti.Tabellini2013}, who estimate the impact of additional government revenue on corruption. They exploit the regional fiscal rule in Brazil, where federal transfers to municipal governments change exogenously at given population thresholds.
This is a multi-cutoff RD design with a small sample size near each cutoff point.
We demonstrate that our estimates are similar to the conventional estimates for large-sample pooling of multiple cutoffs. Nevertheless, our inference method provides much shorter confidence intervals than conventional methods when we focus on a small sample near each cutoff value. Consequently, our estimates provide more informative results than conventional methods.

Both the simulation and application results indicate that the small-sample estimations are generally challenging, whereas our estimator has the potential to provide informative estimates. Hence, our estimator is a practical last resort for an empirical researcher facing a research question with a small effective sample size for an RD design.

In addition to contributing to estimation in RD design, 
we contribute to the vast literature on minimax estimation.
\cite{donoho1994} considers minimax affine estimation and inference on linear functionals in nonparametric regression models with Gaussian errors. Recently, this framework has been applied to the estimation and inference of treatment effects in various settings, including RD designs \citep{Armstrong.Kolesar2018,Armstrong2021ATE,Gao2018,Imbens.Wager2019,Kwon.Kwon2020,Chaisemartin2021,rambachan2023parallel}.
We complement these studies by examining nonparametric regression models with Bernoulli dependent variables, which are not covered by their frameworks.
To the best of our knowledge, no general minimax estimator under the squared error loss has been established for the problem of estimating linear functionals in this setting.\footnote{\cite{DeRouen1974} derive a $\Gamma$-minimax estimator for a linear combination of the success probabilities of multiple independent binomial variables when the class of prior distributions consists of distributions with the same, known means.} No solution is known, even for the estimation of the difference in the success probability between two independent binomial variables with unequal numbers of trials \citep[Example 5.1.9]{Lehmann1998}.\footnote{For the estimation of the success probability of a single binomial variable, a linear shrinkage (toward $1/2$) estimator is minimax among all estimators \citep[Example 5.1.7]{Lehmann1998}.
\cite{Marchand2000} consider this problem with a restricted parameter space. They show that, when the success probability is known to lie in a symmetric interval around $1/2$, a linear shrinkage estimator is minimax among all linear estimators.}
We contribute to this underexplored literature by developing a minimax estimator for a regression function at a point within the class of linear shrinkage estimators under the Lipschitz continuity of the regression function.

\section{Our minimax estimator and its properties}\label{sec:main}

RD designs exploit a discontinuous change in treatment status when a running variable exceeds a cutoff point. For example, \cite{Brollo.Nannicini.Perotti.Tabellini2013} exploit discontinuous increases in the amount of central government subsidies for a local government when its population equals or exceeds a threshold level. The target parameter of the RD design is the average treatment effect at the cutoff point, which is identified as the difference between the conditional expectation functions evaluated at the cutoff point. Hence, its estimation involves the nonparametric estimation of the conditional mean functions at their boundary points.

\subsection{Setting}\label{sec:settings}
Suppose that we have a random sample $\{Y_i,D_i,R_i\}_{i=1}^N$, where $R_i \in \mathbb{R}^{d_r}$ is a $d_r (\geq 1)$-dimensional vector of running variables, $Y_i$ is a binary outcome, $D_i$ is a binary treatment assigned as $D_i = 1\{R_i \in \mathcal{T}\}$, and $\mathcal{T} \subset \mathbb{R}^{d_r}$ is a known treated region. The leading case is that in which $R_i$ is univariate ($d_r = 1$) and $\mathcal{T} = [c,\infty)$ for some known cutoff $c$; however, the following arguments also apply to a multidimensional case (i.e., $d_r > 1$). Suppose
\begin{equation}
    Y_i = f(D_i,R_i) + U_i, \ \ E[U_i|D_i,R_i]=0, \label{model} \nonumber
\end{equation}
for some unknown function $f:\{0,1\}\times\mathbb{R}^{d_r}\rightarrow[0,1]$.
Let $R_0$ be a fixed boundary point in the treatment region $\mathcal{T}$.
When $f(d,r)$ represents the conditional expectation function of the underlying potential outcome $Y_{i}(d)$ conditional on $R_i = r$ for each $d \in \{0,1\}$, $f(1,R_0) - f(0,R_0)$ is interpreted as the average treatment effect at the boundary point $R_0$ \citep{hahnIdentificationEstimationTreatment2001}. The data $\{Y_i,D_i,R_i\}_{i=1}^N$ can be divided into $\{ Y_{i,+}, R_{i,+} \}_{i=1}^{n_{+}}$ and $\{ Y_{i,-}, R_{i,-} \}_{i=1}^{n_{-}}$, where the former is the data from the treatment group and the latter is the data from the control group. We use the two samples separately to estimate $f(1,R_0)$ and $f(0,R_0)$, respectively.

Without loss of generality, we consider the estimation of $f(1,R_0)$ throughout this section, except in Remark \ref{rem:ATE} at the end of this section, where we discuss the estimation of $f(1,R_0) - f(0,R_0)$. To simplify the notation, we use $\{Y_i,R_i\}_{i=1}^n$ to denote $\{ Y_{i,+}, R_{i,+} \}_{i=1}^{n_{+}}$, so that $R_i\in\mathcal{T}$ for all $i=1,...n$.
Furthermore, we use $f(\cdot)$ to denote $f(1,\cdot)$.
Additionally, our analysis conditions on the realization of $\{R_i\}_{i=1}^n$, and we treat $\{R_i\}_{i=1}^n$ as deterministic, so that $P(Y_i=1)=f(R_i)$ for all $i=1,\ldots,n$.
Let $p_i \equiv f(R_i)$ for $i=0, 1, \ldots, n$ and $\bm{p} \equiv (p_0,p_1, \ldots, p_n)' \in [0,1]^{n+1}$. Without loss of generality, we assume that $R_0=0$ and $\|R_0\| \leq \|R_1\| \leq \cdots \leq \|R_n\|$, where $\|\cdot\|$ is a norm on $\mathbb{R}^{d_r}$. The following theoretical result holds for any norm. We focus on the Euclidean norm in our numerical exercises, simulations, and empirical application.

For the parameter of interest $p_0 = f(R_0) = f(0)$, we consider the following linear shrinkage estimator:
\begin{equation}
    \hat{p}_0(\bm{w}) \equiv \frac{1}{2} + \sum_{i=1}^n w_i \left( Y_i-\frac{1}{2} \right), \ \ \bm{w} \equiv (w_1, \ldots , w_n)' \in \mathcal{W}, \label{linear_estimator}
\end{equation}
where $\mathcal{W} \equiv \left\{ \bm{w} \in \mathbb{R}^n : \sum_{i=1}^n w_i \leq 1 \ \text{and} \ w_i \geq 0 \ \text{for all $i$}  \right\}$. When $\sum_{i=1}^n w_i =1$, $\hat{p}_0(\bm{w})=\sum_{i=1}^n w_iY_i$, and no shrinkage occurs. When $\sum_{i=1}^n w_i < 1$, $\hat{p}_0(\bm{w})$ is an estimator that shrinks toward $1/2$.

We assume that $f$ belongs to a Lipschitz class:
\begin{equation}
    \mathcal{F}_{\text{Lip}}(C) \equiv \left\{ f: \left| f(r) - f(r') \right| \leq C \| r-r' \| \ \text{and} \ f(r) \in [0,1]  \right\}, \label{Lipschitz_class}
\end{equation}
where $C$ denotes the Lipschitz constant and is known. This assumption implies that $\bm{p} \in [0,1]^{n+1}$ satisfies $|p_i-p_j| \leq C \|R_i - R_j\|$ for all $i$ and $j$. Conversely, if $|p_i-p_j| \leq C \|R_i - R_j\|$ for all $i$ and $j$, we can find a function $f\in\mathcal{F}_{\text{Lip}}(C)$ such that $f(R_i)=p_i$ for all $i$ \citep{Beliakov2006}. Hence, the parameter space of $\bm{p}$ can be expressed as follows:
\begin{equation}
    \mathcal{P} \equiv \left\{ \bm{p} \in [0,1]^{n+1} : |p_i-p_j| \leq C \|R_i - R_j\| \ \text{for all $i$ and $j$} \right\}. \label{P_class}\nonumber
\end{equation}

Since $Y_1,...,Y_n$ are independent binary variables, the mean squared error (MSE) of $\hat{p}_0(\bm{w})$ is given by
\begin{eqnarray*}
    \text{MSE}(\bm{w},\bm{p}) &\equiv & E\left[ \left( \hat{p}_0(\bm{w}) - p_0 \right)^2 \right] \\
    &=& \left\{ \frac{1}{2} + \sum_{i=1}^n w_i \left( p_i - \frac{1}{2} \right) - p_0 \right\}^2 + \sum_{i=1}^n w_i^2 p_i \left( 1 - p_i \right).
\end{eqnarray*}
We consider the linear shrinkage estimator whose corresponding weight vector solves the following problem:
\begin{equation}
    \min_{\bm{w} \in \mathcal{W}}  \max_{\bm{p} \in \mathcal{P}} \text{MSE}(\bm{w},\bm{p}). \label{minimax_problem_1}
\end{equation}
To simplify the expression in \eqref{minimax_problem_1}, we redefine $p_i$ as $\theta_i \equiv p_i-1/2$ for $i=0, 1, \ldots, n$ and let $\bm{\theta} \equiv (\theta_0, \theta_1, \ldots , \theta_n)'$; thus, the problem is
\begin{equation}
    \min_{\bm{w} \in \mathcal{W}} \, \max_{\bm{\theta} \in \Theta} \text{MSE}(\bm{w},\bm{\theta}), \label{minimax_problem_2}
\end{equation}
where $\Theta \equiv \{\bm{\theta} \in [-1/2,1/2]^{n+1} : |\theta_i - \theta_j| \leq C \| R_i-R_j \| \ \text{for all $i$ and $j$} \}$ and
\begin{eqnarray*}
    \text{MSE}(\bm{w},\bm{\theta}) & \equiv & \left( \sum_{i=1}^n w_i \theta_i - \theta_0 \right)^2 + \sum_{i=1}^n w_i^2 \left( \frac{1}{4} - \theta_i^2 \right).
\end{eqnarray*}
Hence, we aim to obtain a weight vector that minimizes the maximum MSE by solving \eqref{minimax_problem_2}.

\begin{remark}\label{rem:shrinkage}
The class of linear shrinkage estimators (\ref{linear_estimator}) eliminates linear estimators with negative weights. Hence, it excludes local polynomial estimators (except for local constant estimators), which are commonly employed in RD designs. Nevertheless, the linear minimax MSE estimator has nonnegative weights in related setups in which the outcome is nonbinary  (e.g., Gaussian outcomes) and its regression function lies in the Lipschitz class with a known conditional variance: see Section \ref{sec:gauss} and Appendix \ref{sec:gauss_weights}.
Hence, we focus on linear shrinkage estimators with nonnegative weights.
\end{remark}

\begin{remark}\label{rem:Lipschitz}
Shape restrictions on second derivatives are common in studies of honest inference in RD designs \citep[e.g.,][]{kolesar2018discrete,Imbens.Wager2019,noack_bias_aware_2024}. One example is imposing bounds on second derivatives, which aligns with local linear estimators. We focus on the Lipschitz class for two reasons. First, restrictions on the second derivatives are less transparent and more challenging to evaluate than the Lipschitz constraints, which bound the partial effects of the running variable on the outcome. Second, the bounded second derivative implies the bounded first derivative when the regression function is bounded.
To see this, suppose that the domain of $f$ is $\mathbb{R}$ and the absolute value of the second derivative $f''(x)$ is bounded by $C > 0$, so that $f'(x+u) \ge f'(x) - Cu$ for $u>0$. Then, we obtain $f(x+\delta)-f(x) = \int_0^\delta f'(x+u) du \geq f'(x) \delta - C \delta^2 / 2$ for any $\delta > 0$. If the range of $f$ is $[0,1]$, $f(x+\delta)-f(x)$ must be less than or equal to $1$. Consequently, the first derivative satisfies $f'(x) \leq \delta^{-1} + C \delta / 2$ for any $\delta > 0$, which implies that $f'(x) \leq \min_{\delta>0}(\delta^{-1} + C \delta / 2)=\sqrt{2C}$.
Similarly, we have $f'(x) \ge -\sqrt{2C}$.
In other words, the absolute value of the first derivative is bounded by $\sqrt{2C}$ when the absolute value of the second derivative $f''(x)$ is bounded by $C$ and the range of $f$ is $[0,1]$. Thus, the second-derivative restriction is closely related to the Lipschitz constraint for bounded outcomes.
\end{remark}

\begin{remark}\label{rem:bounded_outcome}
The solution to (\ref{minimax_problem_1}) is also a minimax linear shrinkage estimator for bounded outcomes. Consider the estimation of $p_0$ under the assumption that $P(0 \leq Y_i \leq 1) = 1$ and $\bm{p}\in \mathcal{P}$, where $p_i=E[Y_i]$. We impose no additional assumptions on $Y_i$. Then, the variance of $Y_i$ must be less than or equal to $p_i(1-p_i)$ because we have
\[
Var(Y_i) = E[Y_i^2] - E[Y_i]^2 \leq E[Y_i] - E[Y_i]^2 = p_i(1-p_i),
\]
where the inequality follows from $P(Y_i^2 \leq Y_i)=1$. Since the bias of a linear estimator is the same for bounded and binary outcomes, the worst-case MSE for bounded outcomes is equal to the worst-case MSE for binary outcomes. Hence, the solution to (\ref{minimax_problem_1}) is also a minimax linear shrinkage estimator when $Y_i \in [0,1]$ and $\bm{p} \in \mathcal{P}$.
\end{remark}

\begin{remark}\label{rem:holder}
Our theoretical analysis extends to a class of H\"{o}lder continuous functions:
\begin{align}
    \mathcal{F}_{\text{H\"{o}l},\gamma}(C) \equiv \left\{ f: \left| f(r) - f(r') \right| \leq C \| r-r' \|^\gamma \ \text{and} \ f(r) \in [0,1]  \right\}, \label{holder_class}
\end{align}
where $\gamma\in (0,1]$ and $C$ are known constants.\footnote{The corresponding space of $\bm{p}$ can be expressed as $\left\{ \bm{p} \in [0,1]^{n+1} : |p_i-p_j| \leq C \|R_i - R_j\|^\gamma \ \text{for all $i$ and $j$} \right\}$. This is because, as in the Lipschitz case, we can show that $|p_i-p_j| \leq C \|R_i - R_j\|^\gamma$ for all $i$ and $j$ if and only if there exists $f\in\mathcal{F}_{\text{H\"{o}l},\gamma}(C)$ such that $f(R_i)=p_i$ for all $i$, using arguments similar to those used in Theorem 4 of \cite{Beliakov2006}.} This class includes functions that are not smooth enough to satisfy Lipschitz continuity; a larger exponent $\gamma$ corresponds to a smoother class, with $\gamma=1$ corresponding to the Lipchitz class.
All of our main theoretical results, except for the asymptotic result in Theorem \ref{thm:asymptotic} in Section \ref{sec:compare_gauss}, also hold for the H\"{o}lder class $\mathcal{F}_{\text{H\"{o}l},\gamma}(C)$ by replacing $\|\cdot\|$ with $\|\cdot\|^{\gamma}$ in the results for the Lipschitz class.
This extension is possible because the proofs for the Lipschitz class rely only on the fact that every norm $\|\cdot\|$ is subadditive, nonnegative, and even, and these properties are also satisfied by $\|\cdot\|^{\gamma}$ for $\gamma\in (0,1)$ (i.e., $\|r+r'\|^\gamma\le \|r\|^\gamma+\|r'\|^\gamma$, $\|r\|^\gamma\ge 0$, and $\|-r\|^\gamma=\|r\|^\gamma$).
Furthermore, the asymptotic result in Theorem \ref{thm:asymptotic} can also be extended to the H\"{o}lder class: see the proof of Theorem \ref{thm:asymptotic} in Appendix \ref{sec:proofs} for details.
\end{remark}

\subsection{Computing the worst-case MSE of a linear shrinkage estimator}\label{sec:minimax}

Our goal is to obtain a weight vector $\bm{w}$ that minimizes the maximum MSE. First, we consider the maximization part of (\ref{minimax_problem_2}) for a given weight vector $\bm{w}\in \mathcal{W}$. Note that the objective function $\text{MSE}(\bm{w},\bm{\theta})$ is generally nonconcave in $\bm{\theta} = (\theta_0, \ldots, \theta_n)'$, as the squared bias $\left( \sum_{i=1}^n w_i \theta_i - \theta_0 \right)^2$ is convex in $\bm{\theta}$, whereas the variance $\sum_{i=1}^n w_i^2 \left( \frac{1}{4} - \theta_i^2 \right)$ is concave in $\bm{\theta}$.
Nevertheless, we show that this nonconvex ($n+1$)-dimensional optimization problem can be reduced to an optimization problem over a single parameter $\theta_0$, and is therefore computationally tractable.

Note first that $\Theta$ is centrosymmetric (i.e., $\bm{\theta}\in \Theta$ implies $-\bm{\theta}\in\Theta$) and that $\text{MSE}(\bm{w},\bm{\theta})=\text{MSE}(\bm{w},-\bm{\theta})$ for all $\bm{\theta}\in \Theta$.
Therefore, it suffices to consider maximizing the MSE over $\bm{\theta}\in\Theta$ such that $\theta_0\leq 0$.
In addition, the following lemma implies that it suffices to consider $\bm{\theta} = (\theta_0, \ldots, \theta_n)'$ satisfying $\theta_i \geq \theta_0$ for all $i$.

\begin{lemma}\label{lem:max_prob_1}
Suppose that $\bm{w} \in \mathcal{W}$. If $\bm{\theta}$ satisfies $\theta_0 \leq 0$, there exists $\tilde{\bm{\theta}} \equiv (\tilde{\theta}_0, \tilde{\theta}_1, \ldots , \tilde{\theta}_n)' \in \Theta$ such that $\text{MSE}(\bm{w},\bm{\theta}) \leq \text{MSE}(\bm{w}, \tilde{\bm{\theta}})$ and $\tilde{\theta}_i \geq \tilde{\theta}_0$ for all $i$.
\end{lemma}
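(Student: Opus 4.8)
The plan is to obtain $\tilde{\bm{\theta}}$ from $\bm{\theta}$ by \emph{reflecting} every coordinate that lies below $\theta_0$ across the level $\theta_0$. Concretely, I would set
\[
\tilde{\theta}_i \;\equiv\; \theta_0 + \left| \theta_i - \theta_0 \right|, \qquad i = 0,1,\ldots,n ,
\]
so that $\tilde{\theta}_0 = \theta_0$, $\tilde{\theta}_i = \theta_i$ when $\theta_i \geq \theta_0$, and $\tilde{\theta}_i = 2\theta_0 - \theta_i$ when $\theta_i < \theta_0$; in every case $\tilde{\theta}_i \geq \theta_0 = \tilde{\theta}_0$, which is already the monotonicity conclusion. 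It is worth flagging that the more naive truncation $\tilde{\theta}_i = \max\{\theta_i,\theta_0\}$ also stays in $\Theta$ but can \emph{strictly decrease} the MSE (for instance when several coordinates carrying large weight sit near $-1/2$), so it is reflection rather than truncation that does the job.

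The first step is feasibility, $\tilde{\bm{\theta}} \in \Theta$. Since $x \mapsto \theta_0 + |x - \theta_0|$ is $1$-Lipschitz, we get $|\tilde{\theta}_i - \tilde{\theta}_j| = \big| |\theta_i-\theta_0| - |\theta_j-\theta_0| \big| \leq |\theta_i - \theta_j| \leq C\|R_i - R_j\|$ for all $i,j$, so the Lipschitz constraints are simply inherited from $\bm{\theta}$. For the box constraint, $\tilde{\theta}_i \geq \theta_0 \geq -1/2$, while $\theta_0 \leq 0$ together with $\theta_i \in [-1/2,1/2]$ gives $|\theta_i - \theta_0| \leq 1/2 - \theta_0$, hence $\tilde{\theta}_i \leq 1/2$. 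Thus $\tilde{\bm{\theta}} \in \Theta$.

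The second step is the MSE comparison. Write $b \equiv \sum_{i=1}^n w_i\theta_i - \theta_0$ for the bias at $\bm{\theta}$, $\tilde{b}$ for the bias at $\tilde{\bm{\theta}}$, and $I \equiv \{ i \geq 1 : \theta_i < \theta_0 \}$, so that $\tilde{\theta}_i$ differs from $\theta_i$ only for $i \in I$. Using $\text{MSE}(\bm{w},\bm{\theta}) = b^2 + \tfrac14\sum_i w_i^2 - \sum_i w_i^2\theta_i^2$, the difference decomposes as
\[
\text{MSE}(\bm{w},\tilde{\bm{\theta}}) - \text{MSE}(\bm{w},\bm{\theta}) \;=\; \big( \tilde{b}^2 - b^2 \big) \;+\; \sum_{i \in I} w_i^2 \big( \theta_i^2 - \tilde{\theta}_i^2 \big) .
\]
The variance part is nonnegative: for $i \in I$ one computes $\theta_i^2 - \tilde{\theta}_i^2 = \theta_i^2 - (2\theta_0-\theta_i)^2 = 4\theta_0(\theta_i - \theta_0) \geq 0$, both factors being $\leq 0$. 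For the bias part, $\tilde{b} - b = \sum_{i \in I} w_i(\tilde{\theta}_i - \theta_i) = 2\delta$ with $\delta \equiv \sum_{i \in I} w_i(\theta_0 - \theta_i) \geq 0$, so $\tilde{b}^2 - b^2 = 4\delta(b + \delta)$.

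The one genuine obstacle is then to show $b + \delta \geq 0$; granting this, both terms in the decomposition are nonnegative and the lemma follows. This is exactly the place where $\bm{w} \in \mathcal{W}$ and $\theta_0 \leq 0$ are both used. Expanding, and bounding $w_i\theta_i \geq w_i\theta_0$ for $i \notin I$ (legitimate since $\theta_i \geq \theta_0$ and $w_i \geq 0$ there),
\[
b + \delta \;=\; \sum_{i \notin I} w_i\theta_i - \theta_0\Big( 1 - \sum_{i \in I} w_i \Big) \;\geq\; \theta_0 \sum_{i \notin I} w_i - \theta_0\Big( 1 - \sum_{i \in I} w_i \Big) \;=\; \theta_0\Big( \sum_{i=1}^n w_i - 1 \Big) \;\geq\; 0 ,
\]
where in the first line $\sum_{i\notin I}$ runs over $i \in \{1,\ldots,n\}\setminus I$, and the final inequality holds because $\theta_0 \leq 0$ and $\sum_i w_i \leq 1$. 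Intuitively, truncation fails because it does not generate a large enough post-modification bias, whereas reflection doubles the ``gap'' $\delta$, and the weight budget $\sum_i w_i \leq 1$ is precisely what is needed to control the sign of $\theta_0$ in the last display.
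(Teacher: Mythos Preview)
Your proof is correct and follows exactly the same construction as the paper's: reflect each $\theta_i$ that falls below $\theta_0$ across the level $\theta_0$, then verify feasibility and that both the variance and squared-bias terms do not decrease. Your presentation is in fact somewhat tidier---using the $1$-Lipschitz property of $x\mapsto \theta_0+|x-\theta_0|$ in place of the paper's three-case check, and packaging the bias comparison as $\tilde b^2-b^2=4\delta(b+\delta)$ with the single inequality $b+\delta\ge 0$---but the underlying argument and the key use of $\sum_i w_i\le 1$, $w_i\ge 0$, and $\theta_0\le 0$ are identical to the paper's.
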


The proofs of all theoretical results in the main text are provided in Appendix \ref{sec:proofs}.
In the proof of Lemma \ref{lem:max_prob_1}, we show that $\tilde{\bm{\theta}} = (\theta_0, \theta_1 + 2\cdot \max\{0, \theta_0 - \theta_1\}, \ldots, \theta_n + 2\cdot \max\{0, \theta_0 - \theta_n\})'$ satisfies $\text{MSE}(\bm{w},\bm{\theta}) \leq \text{MSE}(\bm{w}, \tilde{\bm{\theta}})$.
We construct $\tilde{\bm{\theta}}$ by increasing $\theta_i$ to $\theta_0+\theta_0-\theta_i$ for each $i$ if $\theta_i$ is less than $\theta_0$. The new value is larger than $\theta_0$ by $\theta_0-\theta_i$.
The change from $\bm{\theta}$ to $\tilde{\bm{\theta}}$ increases the variance while maintaining the Lipschitz constraint.
Furthermore, we can show that this change results in a positive bias whose absolute value is larger than that of the bias at the original $\bm{\theta}$.

In view of Lemma \ref{lem:max_prob_1}, we may consider the maximization of the MSE over $\bm{\theta}\in\Theta$ subject to the following restriction:
\begin{equation}
    \text{$\theta_0 \leq 0$ and $\theta_i \geq \theta_0$ for all $i$.} \label{theta_condition}
\end{equation}
By calculating the derivatives of the MSE, we can show that $\text{MSE}(\bm{w},\bm{\theta})$ is nondecreasing in $\theta_j$ under (\ref{theta_condition}).
To see this, note that
\begin{eqnarray}
    \frac{\partial}{\partial \theta_j} \text{MSE}(\bm{w},\bm{\theta}) &=& 2 w_j \left( \sum_{i \neq j} w_i \theta_i - \theta_0 \right), \hspace{0.3in} j = 1, \ldots , n. \label{derivative_j}
\end{eqnarray}
Because we have $\sum_{i \neq j} w_i \theta_i - \theta_0 \geq \left( \sum_{i \neq j} w_i  - 1 \right) \theta_0 \geq 0$ for all $\bm{w} \in \mathcal{W}$ under (\ref{theta_condition}), it follows from (\ref{derivative_j}) that $\text{MSE}(\bm{w},\bm{\theta})$ is nondecreasing in $\theta_j$ under (\ref{theta_condition}). This monotonicity of the MSE implies that $\text{MSE}(\bm{w},(\theta_0,\theta_1,\ldots,\theta_n)')$ is maximized by setting $\theta_1,\ldots,\theta_n$ to the largest possible values that satisfy the Lipschitz constraint for each fixed value of $\theta_0$.

\begin{figure}[h]
    \centering
    \includegraphics[width=12cm]{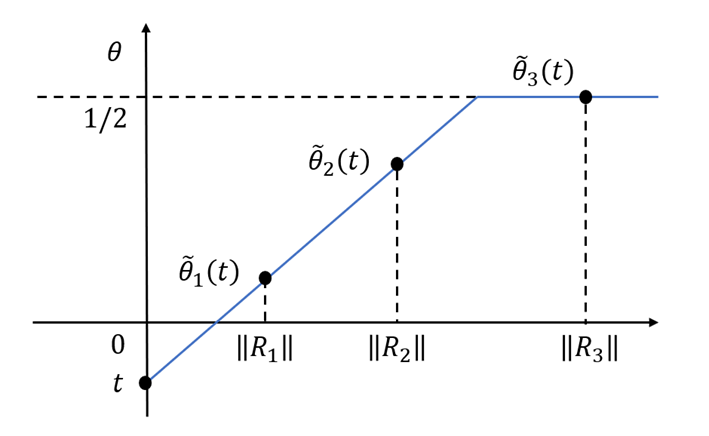}
    \caption{An illustration of the shape of $\tilde{\bm{\theta}}(t)$. The blue solid line denotes a function $r \mapsto \min \{t + C r, \frac{1}{2}\}$.}
    \label{fig:tilde_theta}
\end{figure}

Formally, we define the largest possible values of $\theta_0, \theta_1,\ldots, \theta_n$ given $\theta_0 = t$ as
\[
\tilde{\bm{\theta}}(t) \equiv \left( \tilde{\theta}_0(t), \tilde{\theta}_1(t), \ldots, \tilde{\theta}_n(t) \right)' \ \text{and} \ \tilde{\theta}_i(t) \equiv \min \{t + C \|R_i\|, 1/2 \} \ \text{for $i = 0,1, \ldots, n$,}
\]
as illustrated in Figure \ref{fig:tilde_theta}.
For any $\bm{\theta}=(\theta_0,\theta_1,\ldots,\theta_n)'\in\Theta$, we have $\theta_0=\tilde\theta_0(\theta_0)$ and $\theta_i \leq \tilde{\theta}_i(\theta_0)$ for $i=1,\ldots,n$. From (\ref{derivative_j}), if $\bm{\theta}\in\Theta$ satisfies (\ref{theta_condition}), we can increase the MSE by increasing $\theta_i$ to $\tilde{\theta}_i(\theta_0)$:
\begin{equation*}
    \text{MSE}(\bm{w},\bm{\theta}) \ \leq \ \text{MSE}(\bm{w},\tilde{\bm{\theta}}(\theta_0)) \ \ \text{for all $\bm{w} \in \mathcal{W}$.} 
\end{equation*}
Also, $\tilde{\bm{\theta}}(t) \in \Theta$ for any $t \in [-1/2,1/2]$ because $\tilde{\bm{\theta}}(t)$ satisfies $\tilde{\bm{\theta}}(t) \in [-1/2, 1/2]^{n+1}$ and
\begin{equation*}
    \left| \tilde{\theta}_i(t) - \tilde{\theta}_j(t) \right| \ \leq \ C \left| \|R_i\| - \|R_j\|  \right| \ \leq \  C \| R_i - R_j \|,
\end{equation*}
where the second inequality follows from the reverse triangle inequality.
Hence, we can reduce the ($n+1$)-dimensional maximization problem in (\ref{minimax_problem_2}) to a one-dimensional problem with a single parameter $\theta_0$, as in the following theorem.
\begin{theorem}\label{thm:max_problem}
Suppose that $\sum_{i=1}^n w_i \leq 1$ and $w_i \geq 0$ for all $i$. Then, we have
\begin{equation}
    \max_{\bm{\theta} \in \Theta} \text{MSE}(\bm{w},\bm{\theta}) \ = \ \max_{\theta_0 \in [-1/2, 0]} \text{MSE}(\bm{w},\tilde{\bm{\theta}}(\theta_0)). \label{max_problem}
\end{equation}
\end{theorem}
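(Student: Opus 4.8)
\emph{Proof proposal.} The plan is to establish the two inequalities in \eqref{max_problem} separately, assembling the pieces already developed above. For the direction ``$\geq$'', recall it was verified that $\tilde{\bm{\theta}}(t)\in\Theta$ for every $t\in[-1/2,1/2]$; in particular the curve $\{\tilde{\bm{\theta}}(\theta_0):\theta_0\in[-1/2,0]\}$ is a subset of $\Theta$. Hence the right-hand side of \eqref{max_problem} is a maximum of $\text{MSE}(\bm{w},\cdot)$ over a subset of $\Theta$, and therefore cannot exceed $\max_{\bm{\theta}\in\Theta}\text{MSE}(\bm{w},\bm{\theta})$. This direction needs no further work.

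For the direction ``$\leq$'', I would fix an arbitrary $\bm{\theta}\in\Theta$ and produce a $\theta_0'\in[-1/2,0]$ with $\text{MSE}(\bm{w},\bm{\theta})\leq\text{MSE}(\bm{w},\tilde{\bm{\theta}}(\theta_0'))$, in three steps. First, using that $\Theta$ is centrosymmetric and $\text{MSE}(\bm{w},\bm{\theta})=\text{MSE}(\bm{w},-\bm{\theta})$, replace $\bm{\theta}$ by $-\bm{\theta}$ if necessary so that its zeroth coordinate $\theta_0'$ lies in $[-1/2,0]$. Second, apply Lemma \ref{lem:max_prob_1}: since $\theta_0'\leq 0$, there is $\tilde{\bm{\theta}}\in\Theta$ with $\text{MSE}(\bm{w},\bm{\theta})\leq\text{MSE}(\bm{w},\tilde{\bm{\theta}})$, whose construction in the proof of that lemma leaves the zeroth coordinate unchanged (so $\tilde\theta_0=\theta_0'$) and yields $\tilde\theta_i\geq\tilde\theta_0$ for all $i$; thus $\tilde{\bm{\theta}}$ satisfies \eqref{theta_condition}. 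Third, starting from $\tilde{\bm{\theta}}$, increase the coordinates $j=1,\ldots,n$ one at a time up to $\tilde\theta_j(\theta_0')=\min\{\theta_0'+C\|R_j\|,1/2\}$, holding $\theta_0'$ fixed; since $\tilde\theta_j(\theta_0')\geq\theta_0'$, every coordinate stays $\geq\theta_0'$ throughout the ascent, so the sign bound behind \eqref{derivative_j}, namely $\sum_{i\neq j}w_i\theta_i-\theta_0'\geq(\sum_{i\neq j}w_i-1)\theta_0'\geq 0$ for $\bm{w}\in\mathcal{W}$ under \eqref{theta_condition}, gives $\partial\,\text{MSE}/\partial\theta_j\geq 0$ at every point of the path, so no step decreases the MSE. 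After the $n$ increases the vector equals $\tilde{\bm{\theta}}(\theta_0')$, hence $\text{MSE}(\bm{w},\tilde{\bm{\theta}})\leq\text{MSE}(\bm{w},\tilde{\bm{\theta}}(\theta_0'))\leq\max_{\theta_0\in[-1/2,0]}\text{MSE}(\bm{w},\tilde{\bm{\theta}}(\theta_0))$. Chaining the inequalities and taking the supremum over $\bm{\theta}\in\Theta$ (attained by continuity and compactness) completes the argument.

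The computations here are routine; the only points requiring care are (i) that the $\tilde{\bm{\theta}}$ supplied by Lemma \ref{lem:max_prob_1} keeps the zeroth coordinate fixed and lands in $\Theta$ satisfying \eqref{theta_condition} — which is precisely how it was constructed — and (ii) that \eqref{theta_condition} is preserved along the coordinate-by-coordinate ascent, so that the nonnegativity of the partial derivative in \eqref{derivative_j} applies at each step; this uses $\tilde\theta_j(\theta_0')\geq\theta_0'$, $\theta_0'\leq 0$, and $\bm{w}\in\mathcal{W}$ together. I do not expect a genuine \textbf{obstacle}: all the substantive work already appears in Lemma \ref{lem:max_prob_1}, in the derivative formula \eqref{derivative_j}, and in the verification that $\tilde{\bm{\theta}}(t)\in\Theta$, so the theorem is essentially an assembly of these facts.
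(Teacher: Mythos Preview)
Your proposal is correct and follows essentially the same approach as the paper's own proof: reduce to $\theta_0\le 0$ by centrosymmetry, apply Lemma \ref{lem:max_prob_1} to land in the set \eqref{theta_condition}, then push each $\theta_j$ up to $\tilde\theta_j(\theta_0)$ using the nonnegativity of \eqref{derivative_j}, with the reverse inequality coming from $\tilde{\bm\theta}(\theta_0)\in\Theta$. Your write-up is simply more explicit than the paper's one-line appeal to the discussion in Section \ref{sec:minimax}, and your checks (i) and (ii) are exactly the right points of care.
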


\subsection{The minimax linear shrinkage estimator}\label{sec:minimax_est}
Next, we derive a weight vector that minimizes the maximum MSE. The following two lemmas show that the optimal weight vector is nonincreasing and that the $i$th element of the optimal weight vector is zero if $R_i$ is sufficiently far away from $R_0$.

\begin{lemma}\label{lem:monotone}
We obtain
\begin{equation*}
    \min_{\bm{w} \in \mathcal{W}} \max_{\bm{\theta} \in \Theta} \text{MSE}(\bm{w},\bm{\theta}) \ = \ \min_{\bm{w} \in \mathcal{W}_0} \max_{\bm{\theta} \in \Theta} \text{MSE}(\bm{w},\bm{\theta}),
\end{equation*}
where $\mathcal{W}_0 \equiv \left\{ \bm{w} \in \mathcal{W} : w_1 \geq w_2 \geq \cdots \geq w_n \right\}$.
\end{lemma}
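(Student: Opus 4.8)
The plan is to prove the nontrivial direction $\min_{\bm w\in\mathcal W_0}\max_{\bm\theta\in\Theta}\text{MSE}(\bm w,\bm\theta)\le\min_{\bm w\in\mathcal W}\max_{\bm\theta\in\Theta}\text{MSE}(\bm w,\bm\theta)$ (the reverse being immediate since $\mathcal W_0\subseteq\mathcal W$) by a rearrangement argument. Given an arbitrary $\bm w\in\mathcal W$, let $\bm w^{\downarrow}$ be the vector obtained by sorting the entries of $\bm w$ in nonincreasing order; then $\bm w^{\downarrow}\in\mathcal W_0$, as it has the same nonnegative entries and the same sum as $\bm w$, and it suffices to show $\max_{\bm\theta\in\Theta}\text{MSE}(\bm w^{\downarrow},\bm\theta)\le\max_{\bm\theta\in\Theta}\text{MSE}(\bm w,\bm\theta)$. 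By Theorem~\ref{thm:max_problem}, each side equals the corresponding $\max_{t\in[-1/2,0]}\text{MSE}(\,\cdot\,,\tilde{\bm\theta}(t))$, so it is enough to establish the pointwise inequality $\text{MSE}(\bm w^{\downarrow},\tilde{\bm\theta}(t))\le\text{MSE}(\bm w,\tilde{\bm\theta}(t))$ for every $t\in[-1/2,0]$.

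Because $\bm w^{\downarrow}$ is reached from $\bm w$ by a finite sequence of transpositions, each exchanging two coordinates $j<k$ with $w_j<w_k$ (for instance by bubble sort, which uses only adjacent such transpositions), it suffices to check that one such transposition never increases $\text{MSE}(\,\cdot\,,\tilde{\bm\theta}(t))$. Fix $t\in[-1/2,0]$ and abbreviate $a_i\equiv\tilde\theta_i(t)=\min\{t+C\|R_i\|,1/2\}$; since $\|R_0\|\le\|R_1\|\le\cdots\le\|R_n\|$, the $a_i$ are nondecreasing in $i$, so $j<k$ implies $a_j\le a_k$. Let $\bm w'$ be $\bm w$ with $w_j$ and $w_k$ swapped, and set $u\equiv w_k-w_j\ge0$ and $v\equiv a_k-a_j\ge0$. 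Using $\tilde\theta_0(t)=t$ (as $\|R_0\|=0$) one has $\text{MSE}(\bm w,\tilde{\bm\theta}(t))=(\sum_{i=1}^n w_ia_i-t)^2+\sum_{i=1}^n w_i^2(1/4-a_i^2)$, and a direct computation (using $\sum_{i=1}^n w_i'a_i-\sum_{i=1}^n w_ia_i=-uv$, $w_k^2-w_j^2=u(w_j+w_k)$, and $a_k^2-a_j^2=v(a_j+a_k)$) gives, with $B\equiv\sum_{i=1}^n w_ia_i-t$ and $B'\equiv\sum_{i=1}^n w_i'a_i-t$,
\[
\text{MSE}(\bm w',\tilde{\bm\theta}(t))-\text{MSE}(\bm w,\tilde{\bm\theta}(t))\;=\;uv\left[(w_j+w_k)(a_j+a_k)-(B+B')\right]\;=\;2uv\left(t-\sum_{i\ne j,k}w_ia_i\right),
\]
where the second equality substitutes $B+B'=2\sum_{i\ne j,k}w_ia_i+(w_j+w_k)(a_j+a_k)-2t$. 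Now $a_i\ge t$ for all $i$ (both arguments of the defining minimum are at least $t$, since $C\|R_i\|\ge0$ and $1/2\ge0\ge t$), so because $w_i\ge0$, $\sum_{i\ne j,k}w_i\le\sum_{i=1}^n w_i\le1$, and $t\le0$, we get $\sum_{i\ne j,k}w_ia_i\ge t\sum_{i\ne j,k}w_i\ge t$; hence the displayed difference is $\le0$. Iterating over the transpositions and taking the maximum over $t\in[-1/2,0]$ yields $\max_{\bm\theta}\text{MSE}(\bm w^{\downarrow},\bm\theta)\le\max_{\bm\theta}\text{MSE}(\bm w,\bm\theta)$, which gives the lemma.

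The step I expect to be the main obstacle is controlling the variance term $\sum_{i=1}^n w_i^2(1/4-a_i^2)$ under the transposition: its coefficients $1/4-a_i^2$ need not be monotone in $i$ (the $a_i$ can pass from negative to positive as $\|R_i\|$ grows), so no coordinatewise rearrangement inequality is available and the variance may genuinely increase when a larger weight is moved toward $R_0$. The resolution is that any such increase is always dominated by the accompanying decrease in the squared bias---which ultimately rests on the bias $\sum_{i=1}^n w_ia_i-t$ being nonnegative for every $\bm w\in\mathcal W$ when $t\le0$, the same fact already used in Section~\ref{sec:minimax}---and, concretely, on the algebraic identity above, which collapses the two competing effects into the single, manifestly nonpositive quantity $2uv(t-\sum_{i\ne j,k}w_ia_i)$. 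One should also take care to confirm that $\bm w'$ remains in $\mathcal W$ and that the chosen sequence of transpositions does produce the nonincreasing rearrangement $\bm w^{\downarrow}$.
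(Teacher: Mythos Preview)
Your proof is correct and follows essentially the same approach as the paper. Both arguments reduce via Theorem~\ref{thm:max_problem} to showing that a single out-of-order transposition of the weights does not increase $\text{MSE}(\cdot,\tilde{\bm\theta}(t))$, and both arrive at the same factorized expression for the difference (your $2uv\bigl(t-\sum_{i\ne j,k}w_ia_i\bigr)$ is exactly the paper's $(w_j-w_{j+1})(\theta_j-\theta_{j+1})\cdot 2\bigl(\sum_{i\ne j,j+1}w_i\theta_i-\theta_0\bigr)$ with the sign convention reversed), whose sign is then controlled by the same inequality $\sum_{i\ne j,k}w_ia_i\ge t$ following from $a_i\ge t$, $w_i\ge0$, and $\sum w_i\le1$. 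The only cosmetic differences are that the paper performs the algebra for a general $\bm\theta$ satisfying~(\ref{theta_condition}) before specializing to $\tilde{\bm\theta}(\theta_0)$, and restricts to adjacent transpositions from the outset.
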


\begin{lemma}\label{lem:zero_weight}
We obtain
\begin{equation*}
    \min_{\bm{w} \in \mathcal{W}} \max_{\bm{\theta} \in \Theta} \text{MSE}(\bm{w},\bm{\theta}) \ = \ \min_{\bm{w} \in \mathcal{W}_1} \max_{\bm{\theta} \in \Theta} \text{MSE}(\bm{w},\bm{\theta}),
\end{equation*}
where $\mathcal{W}_1 \equiv \left\{ \bm{w} \in \mathcal{W}_0 : w_i = 0 \ \text{if $C\|R_i\| \geq 1/2$} \right\}$.
\end{lemma}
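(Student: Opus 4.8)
The plan is to combine Lemma~\ref{lem:monotone} and Theorem~\ref{thm:max_problem}. By Lemma~\ref{lem:monotone} we may fix a minimizer $\bm{w}^{\ast}\in\mathcal{W}_0$ of the worst-case MSE. Since $\|R_1\|\le\cdots\le\|R_n\|$, the indices with $C\|R_i\|\ge 1/2$ form a (possibly empty) terminal block $\{k,k{+}1,\ldots,n\}$; if this block is empty then $\mathcal{W}_1=\mathcal{W}_0$ and there is nothing to prove. Otherwise define $\bm{w}'$ by zeroing the weights on the block, i.e.\ $w'_i=w^{\ast}_i$ for $i<k$ and $w'_i=0$ for $i\ge k$. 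Because $\bm{w}^{\ast}$ is nonincreasing and we only zero a terminal block, $\bm{w}'$ is still nonincreasing, and $\sum_i w'_i\le\sum_i w^{\ast}_i\le 1$, so $\bm{w}'\in\mathcal{W}_1$. It then remains to show $\max_{\bm{\theta}\in\Theta}\text{MSE}(\bm{w}',\bm{\theta})\le\max_{\bm{\theta}\in\Theta}\text{MSE}(\bm{w}^{\ast},\bm{\theta})$; together with the trivial inclusion $\mathcal{W}_1\subseteq\mathcal{W}_0$ this yields the claimed equality.

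By Theorem~\ref{thm:max_problem}, both worst-case MSEs equal the maximum of $\text{MSE}(\bm{w},\tilde{\bm{\theta}}(\theta_0))$ over $\theta_0\in[-1/2,0]$, so it suffices to compare $\text{MSE}(\bm{w}',\tilde{\bm{\theta}}(\theta_0))$ and $\text{MSE}(\bm{w}^{\ast},\tilde{\bm{\theta}}(\theta_0))$ for each fixed $\theta_0\in[-1/2,0]$. The key observation is that for any such $\theta_0$ and any $i\ge k$ we have $\tilde{\theta}_i(\theta_0)=\min\{\theta_0+C\|R_i\|,1/2\}\in[0,1/2]$, since $\theta_0+C\|R_i\|\ge -1/2+1/2=0$. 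For the variance term $\sum_i w_i^2\bigl(1/4-\tilde{\theta}_i(\theta_0)^2\bigr)$, zeroing the block only removes nonnegative summands (each $1/4-\tilde{\theta}_i(\theta_0)^2\ge 0$ because $\tilde{\theta}_i(\theta_0)\in[-1/2,1/2]$), so it weakly decreases. For the bias term, write $B(\bm{w})\equiv\sum_i w_i\tilde{\theta}_i(\theta_0)-\theta_0$; using $\tilde{\theta}_i(\theta_0)\ge\theta_0$ for all $i$ together with $\sum_i w_i\le 1$ and $\theta_0\le 0$ (exactly as in the derivation following \eqref{derivative_j}), one checks that $B(\bm{w}^{\ast})\ge 0$ and $B(\bm{w}')\ge 0$. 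Moreover $B(\bm{w}^{\ast})-B(\bm{w}')=\sum_{i\ge k}w^{\ast}_i\tilde{\theta}_i(\theta_0)\ge 0$ because each omitted $\tilde{\theta}_i(\theta_0)$ is nonnegative. Hence $0\le B(\bm{w}')\le B(\bm{w}^{\ast})$, so the squared-bias term also weakly decreases. Adding the two comparisons gives the pointwise inequality, and maximizing over $\theta_0$ completes the argument.

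The main obstacle is the bias term: a priori, deleting observations could make the worst-case bias overshoot and grow in magnitude, so the decrease of the squared bias is not automatic. The resolution rests on two facts tailored to the ``far'' observations and to the relevant range $\theta_0\le 0$: that $\tilde{\theta}_i(\theta_0)\ge 0$ precisely when $C\|R_i\|\ge 1/2$ (which forces the bias to move toward $0$ when these observations are dropped), and that the worst-case bias $B(\bm{w})$ is always nonnegative (which forces it not to pass $0$). The appeal to Lemma~\ref{lem:monotone} is likewise essential, since it is what makes the far observations a terminal block, so that zeroing their weights preserves monotonicity and keeps $\bm{w}'$ inside $\mathcal{W}_1$.
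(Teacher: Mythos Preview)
Your proof is correct and follows essentially the same route as the paper's. Both arguments reduce via Theorem~\ref{thm:max_problem} to the family $\tilde{\bm{\theta}}(\theta_0)$ with $\theta_0\in[-1/2,0]$, use that $\tilde{\theta}_i(\theta_0)\ge 0$ whenever $C\|R_i\|\ge 1/2$, and exploit the nonnegativity of the worst-case bias $B(\bm{w})$; the only cosmetic difference is that the paper shows $\partial\text{MSE}(\bm{w},\tilde{\bm{\theta}}(\theta_0))/\partial w_j\ge 0$ for each far index and zeros the $w_j$'s one at a time, whereas you compare $\bm{w}^{\ast}$ and $\bm{w}'$ directly by splitting the MSE into its squared-bias and variance parts.
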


Lemma \ref{lem:monotone} shows that the optimal weight vector must be nonincreasing. In the proof of Lemma \ref{lem:monotone}, we show that if $\bm{w} \in \mathcal{W}$ satisfies $w_j < w_{j+1}$, then the maximum MSE can be reduced by swapping the positions of $w_j$ and $w_{j+1}$. By repeating this procedure until the weight vector becomes monotone, we can obtain $\tilde{\bm{w}} \in \mathcal{W}_0$ such that $\max_{\bm{\theta} \in \Theta} \text{MSE}(\tilde{\bm{w}},\bm{\theta}) \leq \max_{\bm{\theta} \in \Theta} \text{MSE}(\bm{w},\bm{\theta})$. Lemma \ref{lem:zero_weight} shows that the $i$th element of the optimal weight vector is zero if $C\|R_i\| \geq 1/2$. By calculating the derivative of $\text{MSE}(\bm{w}, \tilde{\bm{\theta}}(\theta_0))$ with respect to $w_{i}$, we can show that $\text{MSE}(\bm{w}, \tilde{\bm{\theta}}(\theta_0))$ is nondecreasing in $w_i$ when $C\|R_i\| \geq 1/2$, and hence, setting $w_i = 0$ is optimal.

These two lemmas allow us to restrict our search space for the optimal $\bm{w}$ to nonincreasing vectors that place no weight on the observations with $C\|R_i\| \geq 1/2$.
For notational simplicity, without loss of generality, we assume that our sample includes observations with $C \|R_i\| < 1/2$ only, so that $\mathcal{W}_0=\mathcal{W}_1$.
Theorem \ref{thm:max_problem} and Lemma \ref{lem:monotone} then imply that the minimax problem is reduced to
\begin{align}
    \min_{\bm{w} \in \mathcal{W}_0} \max_{\theta_0 \in [-1/2,0]}\text{MSE}(\bm{w},\tilde{\bm{\theta}}(\theta_0)),
    \label{minimax_problem_simple}
\end{align}
where
\begin{eqnarray*}
    \text{MSE}(\bm{w},\tilde{\bm{\theta}}(\theta_0)) &=& \left\{ \sum_{i=1}^n w_i (\theta_0 + C \|R_i\|) - \theta_0 \right\}^2 + \sum_{i=1}^n w_i^2 \left\{ \frac{1}{4} - (\theta_0 + C \|R_i\|)^2 \right\}.
\end{eqnarray*}

We now present a method for numerically solving the minimax problem (\ref{minimax_problem_simple}). We define $g(\bm{w};\theta_0) \equiv \text{MSE}(\bm{w},\tilde{\bm{\theta}}(\theta_0))$ and $\overline{g}(\bm{w}) \equiv \max_{\theta_0 \in [-1/2,0]} g(\bm{w};\theta_0)$. Because both $\bm{w} \mapsto \left( \sum_{i=1}^n w_i \theta_i - \theta_0 \right)^2$ and $\bm{w} \mapsto \sum_{i=1}^n w_i^2 \left( \frac{1}{4} - \theta_i^2 \right)$ are convex for any $\bm{\theta} \in \Theta$, $g(\bm{w};\theta_0)$ is also convex with respect to $\bm{w}$ for any $\theta_0 \in [-1/2, 0]$. As the maximum of convex functions is also convex, $\overline{g}(\bm{w})$ is a convex function. Therefore, the minimax problem (\ref{minimax_problem_simple}) becomes the following convex optimization problem with linear constraints:
\begin{equation}
    \min \overline{g}(\bm{w}) \hspace{0.2in} \text{subject to $\sum_{i=1}^n w_i \leq 1$ and $w_1 \geq w_2 \geq \cdots \geq w_n \geq 0$.} \nonumber
\end{equation}
We use nonlinear optimization via the augmented Lagrange method \citep{package_rsolnp,Ye_1987} in the implementation in simulations and applications. 

\begin{remark}
In the implementation, we compute $\bar{g}(\bm{w})$ by conducting a scalar-valued grid search to optimize $\theta_0$. Nevertheless, $g(\bm{w}; \theta_0)$ is a quadratic function in $\theta_0$ and $\overline{g}(\bm{w})$ has a closed-form expression. Let $u(\bm{w}) \equiv \sum_{i=1}^n w_i$ and $k(\bm{w}) \equiv \sum_{i=1}^n w_i \|R_i\|$. Then, $g(\bm{w};\theta_0)$ can be written as
\begin{eqnarray*}
g(\bm{w};\theta_0) &=& \left\{ C k(\bm{w}) - (1-u(\bm{w})) \theta_0 \right\}^2 + \sum_{i=1}^n w_i^2 \left( - \theta_0^2 - 2 C\|R_i\| \theta_0 + \frac{1}{4} - C^2 \|R_i\|^2 \right) \\
&=& \left\{ (1-u(\bm{w}))^2 - \sum_{i=1}^n w_i^2 \right\} \theta_0^2 - 2C \left\{ k(\bm{w})(1-u(\bm{w})) + \sum_{i=1}^n w_i^2 \|R_i\| \right\} \theta_0 \\
& & \hspace{1.0in} + C^2 k(\bm{w})^2 + \sum_{i=1}^n w_i \left( \frac{1}{4} - C^2 \|R_i\|^2 \right),
\end{eqnarray*}
where $k(\bm{w})(1-u(\bm{w})) + \sum_{i=1}^n w_i^2 \|R_i\| = \sum_{i=1}^n w_i\|R_i\|(1-\sum_{j\neq i}w_j) \geq 0$ for any $\bm{w} \in \mathcal{W}$. Hence, if $(1-u(\bm{w}))^2 - \sum_{i=1}^n w_i^2 \geq 0$, then $g(\bm{w};\theta_0)$ is maximized at $\theta_0 = -1/2$. If $(1-u(\bm{w}))^2 - \sum_{i=1}^n w_i^2 < 0$, $g(\bm{w};\theta_0)$ is maximized at $\theta_0 = \max\{-1/2, \beta(\bm{w})\}$, where
\begin{equation*}
\beta(\bm{w}) \ \equiv \ \frac{ C \left\{ k(\bm{w})(1-u(\bm{w})) + \sum_{i=1}^n w_i^2 \|R_i\| \right\}}{(1-u(\bm{w}))^2 - \sum_{i=1}^n w_i^2}.
\end{equation*}
Combining these two cases, $g(\bm{w};\theta_0)$ is maximized at $\theta_0 = -1/2$ if and only if the following inequality holds:
\begin{eqnarray}
    & & C \left\{ k(\bm{w})(1-u(\bm{w})) + \sum_{i=1}^n w_i^2 \|R_i\| \right\} + \frac{1}{2}\left\{ (1-u(\bm{w}))^2 - \sum_{i=1}^n w_i^2 \right\} \ \geq \ 0. \label{interior_condition}
\end{eqnarray}
If (\ref{interior_condition}) does not hold, then $g(\bm{w}; \theta_0)$ is maximized at $\theta_0 = \beta(\bm{w})$. As a result, we obtain
\begin{equation}
    \overline{g}(\bm{w}) \ = \ \begin{cases}
        g\left(\bm{w}; -\frac{1}{2} \right), & \text{if (\ref{interior_condition}) holds} \\
        \psi ( \bm{w}), & \text{if (\ref{interior_condition}) does not hold}
    \end{cases},\nonumber
\end{equation}
where $\psi (\bm{w}) \equiv C^2 k(\bm{w})^2 + \sum_{i=1}^n w_i^2 (1/4 - C^2 \|R_i\|^2) - \frac{ C^2 \left\{ k(\bm{w})(1-u(\bm{w})) +\sum_{i=1}^n w_i^2 \|R_i\| \right\}^2}{(1-u(\bm{w}))^2 -  \sum_{i=1}^n w_i^2}$.
\end{remark}

\begin{remark}\label{rem:ATE}
In this remark, we return to the original setup introduced in Section \ref{sec:settings}, where we observe both the treated sample $\{ Y_{i,+}, R_{i,+} \}_{i=1}^{n_{+}}$ and the untreated sample $\{ Y_{i,-}, R_{i,-} \}_{i=1}^{n_{-}}$.
We consider the estimation of $f(1,R_0)-f(0,R_0)$, which can be interpreted as the conditional average treatment effect (ATE) at the cutoff $R_0$. We can estimate the ATE by separately constructing the minimax linear shrinkage estimators for $f(1,R_0)$ and $f(0,R_0)$ using the treated and untreated samples, respectively.
Specifically, let $\hat{\bm{w}}_{+}$ and $\hat{\bm{w}}_{-}$ be the optimal weights minimizing the maximum MSEs among the linear shrinkage estimators for $f(1,R_0)$ and $f(0,R_0)$.
We can then estimate the conditional ATE $f(1,R_0)-f(0,R_0)$ using the following estimator:
\begin{align}
\sum_{i=1}^{n_{+}}\hat{w}_{i,+} \left( Y_{i,+} - \frac{1}{2} \right) - \sum_{i=1}^{n_{-}}\hat{w}_{i,-} \left( Y_{i,-} - \frac{1}{2} \right).\label{ATE_estimator}
\end{align}
Note that this estimator does not minimize the maximum MSE for the ATE estimation among the estimators that take the difference between two linear shrinkage estimators; the MSE for $f(1,R_0)-f(0,R_0)$ is not equal to the sum of the MSEs for $f(1,R_0)$ and $f(0,R_0)$. In Appendix \ref{sec:ATE}, we consider the joint optimization of the weight vectors ${\bm{w}}_{+}$ and ${\bm{w}}_{-}$ and obtain results similar to those of Theorem \ref{thm:max_problem} and Lemmas \ref{lem:monotone} and \ref{lem:zero_weight}.
Specifically, the maximum MSE for the ATE can be calculated by simultaneously optimizing two parameters, $f(1,R_0)$ and $f(0,R_0)$. Moreover, the optimal weight vectors are nonincreasing in the distance from the cutoff $R_0$.
Although joint optimization of ${\bm{w}}_{+}$ and ${\bm{w}}_{-}$ is thus possible, it may require a two-dimensional grid search to calculate the worst-case MSE and potentially introduce instability in the resulting ATE estimates. Therefore, we use the separately optimized weights $\hat{\bm{w}}_{+}$ and $\hat{\bm{w}}_{-}$ to estimate the ATE in our simulations and empirical application.
\end{remark}

\section{Comparison with Gaussian-motivated estimators}\label{sec:gauss}

Many existing studies have considered minimax estimation problems for unbounded outcomes with known variances, primarily motivated by the Gaussian model. We compare our proposed estimator with a Gaussian-motivated minimax linear estimator when the underlying data-generating process is a binary-outcome model.

As a Gaussian-motivated estimator, we consider the minimax linear estimator for an unbounded space of mean vectors with known variances under a smoothness restriction, following the existing minimax analysis in RD designs \citep{Armstrong.Kolesar2018,Imbens.Wager2019}. Note that if the outcome $Y_i$ is normally distributed, that is, $Y_i \sim N(p_i, \sigma_i^2)$, the MSE of a linear estimator $\hat{p}_0(\bm{w}) = \frac{1}{2} + \sum_{i=1}^n w_i \left( Y_i-\frac{1}{2} \right)$ with $\bm{w}\in \mathbb{R}^{n}$ is given by
\[
E\left[ (\hat{p}_0(\bm{w}) - p_0)^2 \right] \ = \ \left\{ \frac{1}{2} + \sum_{i=1}^n w_i \left( p_i - \frac{1}{2} \right) - p_0 \right\}^2 + \sum_{i=1}^n w_i^2 \sigma_i^2.
\]
Letting $\theta_i  = p_i - 1/2$, the MSE can be written as follows:
\[
\left( \sum_{i=1}^n w_i \theta_i - \theta_0 \right)^2 + \sum_{i=1}^n w_i^2 \sigma_i^2.
\]
As a smoothness restriction, we impose the Lipschitz constraint, leading to the following parameter space:
\[
\Theta_g \ \equiv \ \left\{ \bm{\theta} \in \mathbb{R}^{n+1} : |\theta_i - \theta_j| \leq C \| R_i-R_j \| \ \text{for all $i$ and $j$} \right\}.
\]
The minimax linear estimator is a solution of the following problem:
\begin{equation}
\min_{\bm{w}\in\mathbb{R}^n} \max_{\bm{\theta} \in \Theta_g} \left\{ \left( \sum_{i=1}^n w_i \theta_i - \theta_0 \right)^2 + \sum_{i=1}^n w_i^2 \sigma_i^2  \right\}. \label{minimax_normal}
\end{equation}

We refer to the linear estimator that solves (\ref{minimax_normal}) as the Gaussian estimator.\footnote{Note that this estimator is a minimax linear estimator without normality of $Y_i$ as long as the variance is known and the parameter space is $\Theta_g$.
The normality of $Y_i$ is exploited for finite-sample valid inference based on a linear estimator.}
The above minimax problem (\ref{minimax_normal}) differs from the original binary-outcome problem (\ref{minimax_problem_2}) in three respects.
First, the minimum in (\ref{minimax_normal}) is considered among all linear estimators, including those with negative weights. Second, the parameter space in (\ref{minimax_normal}) is unbounded. Finally, but most importantly, the variance in (\ref{minimax_normal}) does not depend on the parameter $\bm{\theta}$; hence, the maximum MSE is attained at the parameter values that maximize the squared bias.

In Appendix \ref{sec:gauss_weights}, we derive the form of the optimal weights that solve the minimax problem (\ref{minimax_normal}) by applying the results of \cite{donoho1994} to our Gaussian setting.
We show that the optimal weights satisfy $\sum_{i=1}^n w_i = 1$ and $w_i \geq 0$ for all $i$. Hence, the minimax problem (\ref{minimax_normal}) can be solved by minimizing the maximum MSE over $\mathcal{W}$. Specifically, the Gaussian estimator is obtained by solving the following quadratic program:
\[
\min_{\bm{w}} \left\{ C^2 \left( \sum_{i=1}^n w_i \|R_i\| \right)^2 + \sum_{i=1}^n w_i^2 \sigma_i^2 \right\} \ \ \text{s.t.} \ \ \sum_{i=1}^n w_i = 1 \ \text{and} \ w_i \geq 0 \ \text{for all $i$},
\]
where $C^2\left( \sum_{i=1}^n w_i \|R_i\| \right)^2$ is the maximum squared bias of the estimator $\hat{p}_0(\bm{w})$ with $\sum_{i=1}^n w_i = 1$ over $\Theta_g$.

\subsection{Theoretical comparisons}\label{sec:compare_gauss}

We compare the maximum MSE of the proposed estimator with that of the Gaussian estimator in a setting in which the true model is the binary-outcome model described in Section \ref{sec:main}. The Gaussian estimator requires the specification of variance. In the following, we focus on the Gaussian estimator with $\sigma_1^2 = \cdots = \sigma_n^2 = 1/4$ because the variance of a binary variable is less than or equal to $1/4$. Define 
\[
\hat{\bm{w}} \in \mathrm{arg} \min_{\bm{w} \in \mathcal{W}} \max_{\bm{\theta} \in \Theta} \mathrm{MSE}(\bm{w},\bm{\theta}) \ \ \text{and} \ \ \tilde{\bm{w}} \in \mathrm{arg} \min_{\bm{w} \in \mathcal{W}} \max_{\bm{\theta} \in \Theta_g} \mathrm{MSE}_g(\bm{w},\bm{\theta}),
\]
where
\begin{eqnarray*}
\text{MSE}(\bm{w},\bm{\theta}) &=& \left( \sum_{i=1}^n w_i \theta_i - \theta_0 \right)^2 + \sum_{i=1}^n w_i^2 \left( \frac{1}{4} - \theta_i^2 \right), \\
\text{MSE}_g(\bm{w},\bm{\theta}) &=& \left( \sum_{i=1}^n w_i \theta_i - \theta_0 \right)^2 + \frac{1}{4} \sum_{i=1}^n w_i^2.
\end{eqnarray*}
Then, $\hat{p}_0(\hat{\bm{w}})$ is the minimax linear shrinkage estimator when $Y_i$ is binary, and $\hat{p}_0(\tilde{\bm{w}})$ is the minimax linear estimator when $Y_i \sim N(p_i,1/4)$. The following lemma compares the maximum MSEs of $\hat{p}_0(\hat{\bm{w}})$ and $\hat{p}_0(\tilde{\bm{w}})$ when $Y_i$ is binary and the parameter space is bounded.

\begin{lemma} \label{lem:compare_gauss}
If $\hat{u} \equiv \sum_{i=1}^n \hat{w}_i > 0$, then we obtain
\begin{equation}
1 \ \leq \ \frac{\max_{\bm{\theta} \in \Theta} \text{MSE}(\tilde{\bm{w}},\bm{\theta})}{\max_{\bm{\theta} \in \Theta} \text{MSE}(\hat{\bm{w}},\bm{\theta})} \ \leq \ \hat{u}^{-2} \left( 1 + \frac{C^2 \sum_{i=1}^n \hat{w}_i^2 \|R_i\|^2}{\frac{1}{4} \sum_{i=1}^n \hat{w}_i^2 } \right). \label{compare_gauss}
\end{equation}
In addition, the upper bound of (\ref{compare_gauss}) is bounded above by $2 \hat{u}^{-2}$.
\end{lemma}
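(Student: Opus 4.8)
The plan is to establish the two bracketing inequalities in \eqref{compare_gauss}, and then the $2\hat u^{-2}$ refinement, after reducing both the numerator and the denominator to the three nonnegative scalars $A \equiv C^2\left(\sum_{i=1}^n \hat w_i\|R_i\|\right)^2$, $B \equiv \tfrac14\sum_{i=1}^n \hat w_i^2$, and $D \equiv C^2\sum_{i=1}^n \hat w_i^2\|R_i\|^2$. The lower bound needs only feasibility of $\tilde{\bm{w}}$: by the characterization of the Gaussian-optimal weights in Appendix~\ref{sec:gauss_weights}, $\sum_i\tilde w_i=1$ and $\tilde w_i\ge0$, hence $\tilde{\bm{w}}\in\mathcal{W}$, so $\max_{\bm{\theta}\in\Theta}\mathrm{MSE}(\hat{\bm{w}},\bm{\theta})=\min_{\bm{w}\in\mathcal{W}}\max_{\bm{\theta}\in\Theta}\mathrm{MSE}(\bm{w},\bm{\theta})\le\max_{\bm{\theta}\in\Theta}\mathrm{MSE}(\tilde{\bm{w}},\bm{\theta})$. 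The positivity of the denominator, which makes the ratio well defined, will fall out of the denominator bound below.

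For the upper bound I would bound the numerator from above and the denominator from below. Numerator: since $\tfrac14-\theta_i^2\le\tfrac14$ gives $\mathrm{MSE}(\bm{w},\bm{\theta})\le\mathrm{MSE}_g(\bm{w},\bm{\theta})$ pointwise and $\Theta\subseteq\Theta_g$, we have $\max_{\bm{\theta}\in\Theta}\mathrm{MSE}(\tilde{\bm{w}},\bm{\theta})\le\max_{\bm{\theta}\in\Theta_g}\mathrm{MSE}_g(\tilde{\bm{w}},\bm{\theta})$; because $\tilde{\bm{w}}$ minimizes $\max_{\bm{\theta}\in\Theta_g}\mathrm{MSE}_g(\cdot,\bm{\theta})$ over $\mathcal{W}$ and the rescaled competitor $\bar{\bm{w}}\equiv\hat{\bm{w}}/\hat u$ lies in $\mathcal{W}$ with $\sum_i\bar w_i=1$, this is at most $\max_{\bm{\theta}\in\Theta_g}\mathrm{MSE}_g(\bar{\bm{w}},\bm{\theta})$, which by the closed-form worst-case MSE of a unit-sum nonnegative-weight Gaussian estimator recorded in Section~\ref{sec:gauss} equals $C^2\left(\sum_i\bar w_i\|R_i\|\right)^2+\tfrac14\sum_i\bar w_i^2=\hat u^{-2}(A+B)$. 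Denominator: $\tilde{\bm{\theta}}(0)=(0,C\|R_1\|,\ldots,C\|R_n\|)'$ lies in $\Theta$ (as shown in the discussion preceding Theorem~\ref{thm:max_problem}, using the standing normalization $C\|R_i\|<1/2$ for all $i$ adopted in Section~\ref{sec:minimax_est}), so $\max_{\bm{\theta}\in\Theta}\mathrm{MSE}(\hat{\bm{w}},\bm{\theta})\ge\mathrm{MSE}(\hat{\bm{w}},\tilde{\bm{\theta}}(0))=A+B-D$; moreover, since each $\hat w_i\|R_i\|\ge0$ we have $\left(\sum_i\hat w_i\|R_i\|\right)^2\ge\sum_i\hat w_i^2\|R_i\|^2$, i.e.\ $A\ge D$, whence $A+B-D\ge B>0$ (here $\hat u>0$ forces $\hat{\bm{w}}\ne\bm{0}$, so $B>0$).

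Combining, $\max_{\bm{\theta}\in\Theta}\mathrm{MSE}(\tilde{\bm{w}},\bm{\theta})/\max_{\bm{\theta}\in\Theta}\mathrm{MSE}(\hat{\bm{w}},\bm{\theta})\le\hat u^{-2}(A+B)/(A+B-D)$, and the elementary inequality $(A+B)/(A+B-D)\le(B+D)/B$ --- valid because $(B+D)(A+B-D)-B(A+B)=D(A-D)\ge0$ under $A\ge D$ --- turns this into $\hat u^{-2}(1+D/B)$, which is exactly the right-hand side of \eqref{compare_gauss}. For the final claim, the same normalization $C\|R_i\|<1/2$ (equivalently, Lemma~\ref{lem:zero_weight}, which lets one take $\hat w_i=0$ whenever $C\|R_i\|\ge1/2$) makes every term contributing to $D$ satisfy $C^2\|R_i\|^2<1/4$, so $D<B$ and therefore $\hat u^{-2}(1+D/B)<2\hat u^{-2}$.

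The only steps that are not routine bookkeeping are the relaxation chain in the numerator --- passing from $\mathrm{MSE}$ on $\Theta$ to $\mathrm{MSE}_g$ on $\Theta_g$, then invoking optimality of $\tilde{\bm{w}}$ against the \emph{rescaled} vector $\hat{\bm{w}}/\hat u$ so as to reach the unit-sum regime where $\mathrm{MSE}_g$ has a closed form --- and the choice $\theta_0=0$ for the denominator lower bound, which is precisely what collapses both sides to $A$, $B$, $D$ so that the one-line inequality $D(A-D)\ge0$ finishes the argument. I expect the rescaling to be the delicate point: it produces the $\hat u^{-2}$ factor and is the reason the hypothesis $\hat u>0$ is imposed.
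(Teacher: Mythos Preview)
Your proposal is correct and follows essentially the same route as the paper's proof: the same relaxation chain $\mathrm{MSE}\to\mathrm{MSE}_g$, $\Theta\to\Theta_g$ combined with the rescaled competitor $\hat{\bm w}/\hat u$ for the numerator, the same evaluation at $\tilde{\bm\theta}(0)$ for the denominator, and the same algebraic step dropping the nonnegative cross terms $C^2\sum_{i\neq j}\hat w_i\hat w_j\|R_i\|\|R_j\|$ (your $A-D\ge 0$) to reach $\hat u^{-2}(1+D/B)$. Your compact $A,B,D$ notation and the one-line inequality $(A+B)/(A+B-D)\le(B+D)/B$ make the final simplification slightly cleaner than the paper's two-step display, but the argument is the same.
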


Lemma \ref{lem:compare_gauss} provides lower and upper bounds of the ratio of the maximum MSEs. As $\hat{\bm{w}}$ minimizes $\max_{\bm{\theta} \in \Theta} \mathrm{MSE}(\bm{w},\bm{\theta})$ over $\mathcal{W}$, the lower bound is trivial. In the proof of Lemma \ref{lem:compare_gauss}, we derive the upper bound based on an upper bound on the numerator and a lower bound on the denominator.

Although the finite-sample bounds in Lemma \ref{lem:compare_gauss} may be loose, sharper bounds can be obtained if we consider an asymptotic setting in which the sample size increases.
In the following, we consider a triangular array $\{(R_{n,1},\ldots,R_{n,n})\}_{n\in \mathbb{N}}$, where $(R_{n,1},\ldots,R_{n,n})$ is a deterministic vector that collects the values of the running variable when the sample size is $n$.
We fix the value of the Lipschitz constant $C$ as $n$ varies.
In this asymptotic regime, we show that under mild conditions, the convergence rate of $\hat{p}_0(\hat{\bm{w}})$ is $O_p(n^{-1/3})$ and the ratio of the maximum MSEs of $\hat{p}_0(\hat{\bm{w}})$ and $\hat{p}_0(\tilde{\bm{w}})$ converges to one as $n \to \infty$.
For the brevity of the notation, we suppress the first index $n$ of $(R_{n,1},\ldots,R_{n,n})$ below.

To establish the asymptotic result, we consider a univariate running variable $R_i$ and assume that it is bounded and that the empirical distribution of $\|R_i\|$ is bounded above and below by linear functions.\footnote{The convergence holds under a weaker condition, which may be plausible for a multivariate running variable. See Remark \ref{rem:hat_u} for a discussion of the general case.}
\begin{assumption}\label{ass:asymptotic}
The running variables $\{R_1, \ldots, R_n\} \in \mathbb{R}$ satisfy the following conditions:
\begin{itemize}
\item[(i)] $0 \leq \|R_1\| \leq \ldots \leq \|R_n\| \leq 1$.
\item[(ii)] There exist constants $c_1 > c_0 > 0$ such that, for any sufficiently large $n \in \mathbb{N}$, $c_0 x - n^{-1/3} \leq F_n(x) \leq c_1 x  + n^{-1/3}$ for all $x \in [0,1]$, where $F_n(\cdot)$ is the empirical distribution of $\|R_i\|$ when the sample size is $n$, that is,
\[
F_n(x) \ \equiv \ \frac{1}{n} \sum_{i=1}^n 1\{\|R_i\| \leq x\}.
\]
\end{itemize}
\end{assumption}

Figure \ref{fig:empirical_dist} illustrates Assumption \ref{ass:asymptotic} (ii). For example, when $R_i = i/n$ for all $i=1, \ldots, n$, this assumption is satisfied for $0 < c_0 < 1 < c_1$. More generally, Assumption \ref{ass:asymptotic} (ii) requires the empirical distribution $F_n(x)$ to be bounded by a pair of linear functions. 

\begin{figure}[H]
    \centering
    \includegraphics[width=12cm]{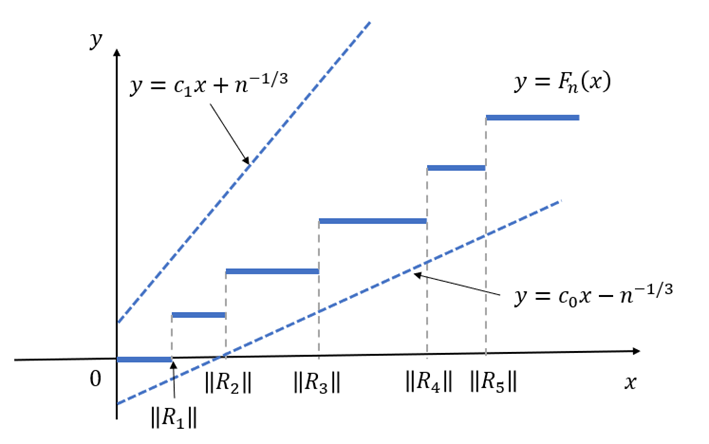}
    \caption{The blue solid line denotes $y = F_n(x)$. The blue dotted lines denote functions $y = c_1 x+n^{-1/3}$ and $y=c_0 x - n^{-1/3}$.}
    \label{fig:empirical_dist}
\end{figure}

\begin{theorem}\label{thm:asymptotic}
Under Assumption \ref{ass:asymptotic}, we obtain $\max_{\bm{\theta} \in \Theta} \text{MSE}(\hat{\bm{w}},\bm{\theta}) = O(n^{-2/3})$ and
\begin{equation}
\frac{\max_{\bm{\theta} \in \Theta} \text{MSE}(\tilde{\bm{w}},\bm{\theta})}{\max_{\bm{\theta} \in \Theta} \text{MSE}(\hat{\bm{w}},\bm{\theta})} \ \to \ 1 . \nonumber
\end{equation}
\end{theorem}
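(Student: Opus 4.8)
My plan is to prove the rate by exhibiting an explicit near-optimal weight vector, and to prove the ratio convergence by combining a few structural facts about the minimax weights $\hat{\bm{w}}$ — all obtained by evaluating $\text{MSE}(\hat{\bm{w}},\cdot)$ at the explicit configurations $\tilde{\bm{\theta}}(0)$ and $\tilde{\bm{\theta}}(-1/2)$ — with a concentration bound for $\sum_i\hat{w}_i^2\|R_i\|^2$. For the rate, set $h_n\equiv 2n^{-1/3}/c_0$. Assumption \ref{ass:asymptotic}(ii) gives $m_n\equiv\#\{i\geq 1:\|R_i\|\leq h_n\}=nF_n(h_n)\geq n^{2/3}$ for large $n$, so the uniform weight $w_i^{\ast}\equiv 1\{\|R_i\|\leq h_n\}/m_n$ lies in $\mathcal{W}$ with $\sum_i w_i^{\ast}=1$ and $\sum_i(w_i^{\ast})^2\leq n^{-2/3}$. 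Since $h_n\to 0$, for large $n$ one has $\tilde{\theta}_i(\theta_0)=\theta_0+C\|R_i\|$ on the support of $\bm{w}^{\ast}$ for every $\theta_0\in[-1/2,0]$, so the bias of $\hat{p}_0(\bm{w}^{\ast})$ equals $C\sum_i w_i^{\ast}\|R_i\|\leq Ch_n$ and its variance is at most $\tfrac14\sum_i(w_i^{\ast})^2$; by Theorem \ref{thm:max_problem}, $\max_{\bm{\theta}\in\Theta}\text{MSE}(\bm{w}^{\ast},\bm{\theta})\leq C^2h_n^2+\tfrac14 n^{-2/3}=O(n^{-2/3})$, and minimax optimality of $\hat{\bm{w}}$ yields the first claim. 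The same competitor, evaluated in the Gaussian problem, shows $B_g^{\ast}\equiv\max_{\bm{\theta}\in\Theta_g}\text{MSE}_g(\tilde{\bm{w}},\bm{\theta})=\min_{\bm{w}\geq 0,\sum_i w_i=1}\{C^2 k(\bm{w})^2+\tfrac14\sum_i w_i^2\}$ is $O(n^{-2/3})$, while a standard bias--variance tradeoff on the simplex (with $s=\sum_{\|R_i\|>n^{-1/3}}w_i$, using $k(\bm{w})\geq n^{-1/3}s$ and $\sum_i w_i^2\geq(1-s)^2/(nF_n(n^{-1/3}))\geq(1-s)^2/((c_1+1)n^{2/3})$) gives $B_g^{\ast}\geq\delta n^{-2/3}$ for some $\delta>0$.

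Next I would record three facts about $\hat{\bm{w}}$, which by Lemmas \ref{lem:monotone} and \ref{lem:zero_weight} may be taken nonincreasing with $\hat{w}_i=0$ whenever $C\|R_i\|\geq 1/2$. Write $\hat{u}\equiv\sum_i\hat{w}_i$, $\hat{k}\equiv\sum_i\hat{w}_i\|R_i\|$, and $M\equiv\max_{\bm{\theta}\in\Theta}\text{MSE}(\hat{\bm{w}},\bm{\theta})=O(n^{-2/3})$. Evaluating $\text{MSE}(\hat{\bm{w}},\cdot)$ at $\tilde{\bm{\theta}}(-1/2)$, where $\tilde{\theta}_i(-1/2)=-1/2+C\|R_i\|$ on the support, the bias equals $\tfrac12(1-\hat{u})+C\hat{k}\geq\tfrac12(1-\hat{u})\geq 0$, so $\tfrac14(1-\hat{u})^2\leq M$, giving $1-\hat{u}=O(n^{-1/3})$ and $\hat{u}\to 1$. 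Evaluating at $\tilde{\bm{\theta}}(0)$, the bias equals $C\hat{k}$ and the variance is at least $\hat{w}_1^2(\tfrac14-C^2\|R_1\|^2)$; since $F_n$ vanishes below $\|R_1\|$ while $F_n(x)\geq c_0 x-n^{-1/3}$, one gets $\|R_1\|\leq n^{-1/3}/c_0\to 0$, so the variance is eventually at least $\tfrac18\hat{w}_1^2$. Hence $C^2\hat{k}^2\leq M$ and $\tfrac18\hat{w}_1^2\leq M$, i.e.\ $\hat{k}=O(n^{-1/3})$ and $\hat{w}_1=O(n^{-1/3})$.

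The main obstacle is to upgrade these first-order bounds to $\sum_i\hat{w}_i^2\|R_i\|^2=o(n^{-2/3})$; without this, the second-order quantity is only $O(n^{-2/3})$, the same order as $B_g^{\ast}$, and the ratio would not be forced to $1$. Set $b_n\equiv n^{-1/3}\log n\to 0$. For $\|R_i\|\leq b_n$ one has $\hat{w}_i\|R_i\|\leq\hat{w}_1 b_n=O(n^{-2/3}\log n)=o(n^{-1/3})$. For $\|R_i\|>b_n$ with $\hat{w}_i>0$, monotonicity of $\hat{\bm{w}}$ bounds $\hat{w}_i$ by the average of $\{\hat{w}_j:\|R_j\|\leq b_n\}$, and Assumption \ref{ass:asymptotic}(ii) makes the number of such indices at least $nF_n(b_n)\geq\tfrac{c_0}{2}n^{2/3}\log n$, so $\hat{w}_i\leq 2/(c_0 n^{2/3}\log n)$ and, using $\|R_i\|\leq 1$, $\hat{w}_i\|R_i\|=o(n^{-1/3})$. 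Thus $\sup_{i:\hat{w}_i>0}\hat{w}_i\|R_i\|=o(n^{-1/3})$, and therefore $\sum_i\hat{w}_i^2\|R_i\|^2\leq\big(\sup_i\hat{w}_i\|R_i\|\big)\sum_i\hat{w}_i\|R_i\|=o(n^{-1/3})\cdot O(n^{-1/3})=o(n^{-2/3})$. This is the step that does the real work — it is exactly where the monotonicity of the optimal weights and the lower bound on $F_n$ are indispensable.

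To conclude: since binary variances are at most $\tfrac14$ and $\Theta\subseteq\Theta_g$, $\max_{\bm{\theta}\in\Theta}\text{MSE}(\tilde{\bm{w}},\bm{\theta})\leq\max_{\bm{\theta}\in\Theta_g}\text{MSE}_g(\tilde{\bm{w}},\bm{\theta})=B_g^{\ast}$. For the denominator, evaluating $\text{MSE}(\hat{\bm{w}},\cdot)$ at $\tilde{\bm{\theta}}(0)$ and using that $\hat{\bm{w}}/\hat{u}$ lies on the simplex for large $n$ (so $\hat{u}^{-2}\{C^2\hat{k}^2+\tfrac14\sum_i\hat{w}_i^2\}=C^2 k(\hat{\bm{w}}/\hat{u})^2+\tfrac14\sum_i(\hat{w}_i/\hat{u})^2\geq B_g^{\ast}$), we get
\[
M\;\geq\;C^2\hat{k}^2+\tfrac14\sum_i\hat{w}_i^2-C^2\sum_i\hat{w}_i^2\|R_i\|^2\;\geq\;\hat{u}^2 B_g^{\ast}-o(n^{-2/3})\;\geq\;\hat{u}^2 B_g^{\ast}\,(1-o(1)),
\]
the last step because $B_g^{\ast}\geq\delta n^{-2/3}$. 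Hence $\max_{\bm{\theta}\in\Theta}\text{MSE}(\tilde{\bm{w}},\bm{\theta})/M\leq 1/(\hat{u}^2(1-o(1)))\to 1$; and since $\tilde{\bm{w}}\in\mathcal{W}$ and $\hat{\bm{w}}$ is minimax over $\mathcal{W}$, the ratio is also $\geq 1$, so it converges to $1$. (Alternatively one may feed $\hat{u}\to 1$ and $\sum_i\hat{w}_i^2\|R_i\|^2=o(n^{-2/3})$ into Lemma \ref{lem:compare_gauss}, once it is also checked that $\sum_i\hat{w}_i^2$ is bounded below by a constant multiple of $n^{-2/3}$.)
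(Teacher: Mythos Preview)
Your proof is correct and follows the same overall scaffolding as the paper's (rate via a uniform-weight competitor, $\hat u\to1$ from $\tilde{\bm\theta}(-1/2)$, then a smallness estimate for $\sum_i\hat w_i^2\|R_i\|^2$), but the execution of the key technical step is genuinely different. The paper bounds $\hat w_{N(\epsilon)}$ via the bias inequality $M\geq C^2\hat w_{N(\epsilon)}^2\big(\sum_{i\leq N(\epsilon)}\|R_i\|\big)^2$ and then spends effort on an integral-type lower bound for $\sum_{i\leq N(\epsilon)}\|R_i\|$ (using the \emph{upper} linear bound on $F_n$), finally plugging into Lemma~\ref{lem:compare_gauss}. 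You instead extract $\hat w_1=O(n^{-1/3})$ from the variance at $\tilde{\bm\theta}(0)$, bound $\hat w_{N(b_n)}$ by the elementary $\hat w_{N(b_n)}\leq 1/N(b_n)$ (from monotonicity and $\sum_j\hat w_j\leq 1$, using only the \emph{lower} linear bound on $F_n$), and combine these via $\sum_i\hat w_i^2\|R_i\|^2\leq(\sup_i\hat w_i\|R_i\|)\cdot\hat k$ to get $o(n^{-2/3})$ directly. Your route avoids the integral computation and is more elementary at that step, but the price is that you need the extra (easy) lower bound $B_g^\ast\geq\delta n^{-2/3}$ to close the argument; the paper's route via Lemma~\ref{lem:compare_gauss} does not require this, since it controls the \emph{ratio} $\sum_i\hat w_i^2\|R_i\|^2\big/\sum_i\hat w_i^2$ directly rather than numerator and denominator separately. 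Both approaches use both sides of Assumption~\ref{ass:asymptotic}(ii), just in different places.
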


Theorem \ref{thm:asymptotic} shows that the convergence rate of $\hat{p}_0(\hat{\bm{w}})$ is $O_p(n^{-1/3})$. This convergence rate is the same as that of standard nonparametric estimators under the Lipschitz constraint in univariate RD designs.
Theorem \ref{thm:asymptotic} also shows that the maximum MSE of $\hat{p}_0(\tilde{\bm{w}})$ is asymptotically identical to that of $\hat{p}_0(\hat{\bm{w}})$. The Gaussian estimator $\hat{p}_0(\tilde{\bm{w}})$ minimizes the maximum MSE when $Y_i \sim N(p_i, 1/4)$ and the parameter space is unbounded. This result implies that the Gaussian estimator is asymptotically optimal in terms of the maximum MSE for a particular sequence of distributions of the running variable, even when outcomes are binary.

\begin{remark}\label{rem:hat_u}
The convergence of $\max_{\bm{\theta} \in \Theta}\mathrm{MSE}(\hat{\bm{w}},\bm{\theta})$ and $\max_{\bm{\theta} \in \Theta}\mathrm{MSE}(\tilde{\bm{w}},\bm{\theta})$ to zero requires weaker conditions than Assumption \ref{ass:asymptotic}. Specifically, the convergence may hold for multidimensional $R_i$. For example, suppose that for any $\epsilon > 0$, the sample size satisfying $\|R_i\| \leq \epsilon$ goes to infinity as $n \to \infty$. That is, letting $N(\epsilon) \equiv \max\{i \in \{1,\ldots,n\} : \|R_i\| \leq \epsilon \}$, $N(\epsilon) \to \infty$ holds for all $\epsilon > 0$. This condition is weaker than Assumption \ref{ass:asymptotic} (ii) and is plausible in a multidimensional case as well.
To show the convergence of $\max_{\bm{\theta} \in \Theta}\mathrm{MSE}(\hat{\bm{w}},\bm{\theta})$ and $\max_{\bm{\theta} \in \Theta}\mathrm{MSE}(\tilde{\bm{w}},\bm{\theta})$ under this condition, we use the following relationship:
\begin{equation*}
    \max_{\bm{\theta} \in \Theta} \mathrm{MSE}(\hat{\bm{w}},\bm{\theta}) \ \leq \ \max_{\bm{\theta} \in \Theta} \mathrm{MSE}(\tilde{\bm{w}},\bm{\theta}) \ \leq \ \max_{\bm{\theta} \in \Theta_g} \mathrm{MSE}_g(\tilde{\bm{w}},\bm{\theta}),
\end{equation*}
where the second inequality holds since $\mathrm{MSE}(\bm{w},\bm{\theta}) \leq \mathrm{MSE}_g(\bm{w},\bm{\theta})$ and $\Theta \subset \Theta_g$.
For any $\epsilon > 0$, we obtain
\begin{eqnarray*}
    \max_{\bm{\theta} \in \Theta_g} \mathrm{MSE}_g(\tilde{\bm{w}},\bm{\theta}) &=& 
    \min_{\bm{w} \in \mathcal{W}: \sum_{i=1}^n w_i = 1} \left\{ C^2 \left( \sum_{i=1}^n w_i \|R_i\| \right)^2 + \frac{1}{4} \sum_{i=1}^n w_i^2 \right\} \\
    & \leq & C^2 \left( \frac{1}{N(\epsilon)} \sum_{i=1}^{N(\epsilon)} \|R_i\| \right)^2 + \frac{1}{4 N(\epsilon)} \ \leq \  C^2 \epsilon^2 + \frac{1}{4 N(\epsilon)} \ \to \ C^2 \epsilon^2,
\end{eqnarray*}
where the first inequality is obtained by setting $\bm{w}=\left( \underbrace{\frac{1}{N(\epsilon)}, \ldots , \frac{1}{N(\epsilon)}}_{N(\epsilon)}, 0, \ldots , 0 \right)'$ and the convergence follows from the assumption that  $N(\epsilon) \to \infty$. Hence, $\max_{\bm{\theta} \in \Theta_g} \mathrm{MSE}_g(\tilde{\bm{w}},\bm{\theta}) \rightarrow 0$ since $\epsilon$ can be arbitrarily small.
Therefore, $\max_{\bm{\theta} \in \Theta} \mathrm{MSE}(\hat{\bm{w}},\bm{\theta})$ and $\max_{\bm{\theta} \in \Theta} \mathrm{MSE}(\tilde{\bm{w}},\bm{\theta})$ also converge to zero.
\end{remark}

\begin{remark}\label{rem:hat_u_2}
The shrinkage factor $\hat{u}=\sum_{i=1}^n \hat{w}_i$ converges to one under mild conditions. Consequently, the upper bound $2\hat u^{-2}$ of the ratio of the maximum MSEs given in Lemma \ref{lem:compare_gauss} converges to $2$.
To see that $\hat{u}$ converges to one, we use the following relationship:
$\frac{1}{4}(1-\hat{u})^2 \leq \max_{\bm{\theta} \in \Theta} \mathrm{MSE}(\hat{\bm{w}},\bm{\theta})$, as shown in the proof of Theorem \ref{thm:asymptotic}.
From the discussion in Remark \ref{rem:hat_u}, under mild conditions, we have $\max_{\bm{\theta} \in \Theta} \mathrm{MSE}(\hat{\bm{w}},\bm{\theta})\rightarrow 0$, which implies the shrinkage factor $\hat{u}$ converges to one.
\end{remark}

\subsection{Numerical comparisons}\label{sec:numerical_gauss}

Although the efficiency gain from our estimator relative to the Gaussian estimator can be small in large samples, their behaviors are quite different in finite samples. We demonstrate the finite-sample comparisons of our estimator with the Gaussian estimator through numerical analyses. Figures \ref{fig:compare_gauss_50} and \ref{fig:compare_gauss_500} plot the weights $w_1,\ldots,w_n$ for samples of observations whose running-variable values are equally spaced between $0$ and $1$. Figure \ref{fig:compare_gauss_50} plots the weights of our estimator (\textit{rdbinary}) and the Gaussian estimator (\textit{gauss}) for a sample size of $50$ and four Lipschitz constant values. Figure \ref{fig:compare_gauss_500} shows the corresponding plots for a sample size of $500$. The weights of the Gaussian estimator are computed under the assumption that the variance is homoskedastic and $1/4$ for all units, as in Section \ref{sec:compare_gauss}. For a small sample size of $50$, our estimator exhibits moderate size of shrinkage, whereas the Gaussian estimator exhibits no shrinkage. For $C > 0$, the weights of the Gaussian estimator are triangular, while our estimator's weights display mild nonlinearity. Also, the Gaussian weights have thicker tails than ours. These differences in shape arise because the Gaussian estimator is constructed under homoskedasticity and the maximum possible variance of $1/4$, whereas our estimator optimizes the weights under potential heteroskedasticity.

\begin{figure}[H]
    \centering
    \includegraphics[width=0.8\hsize]{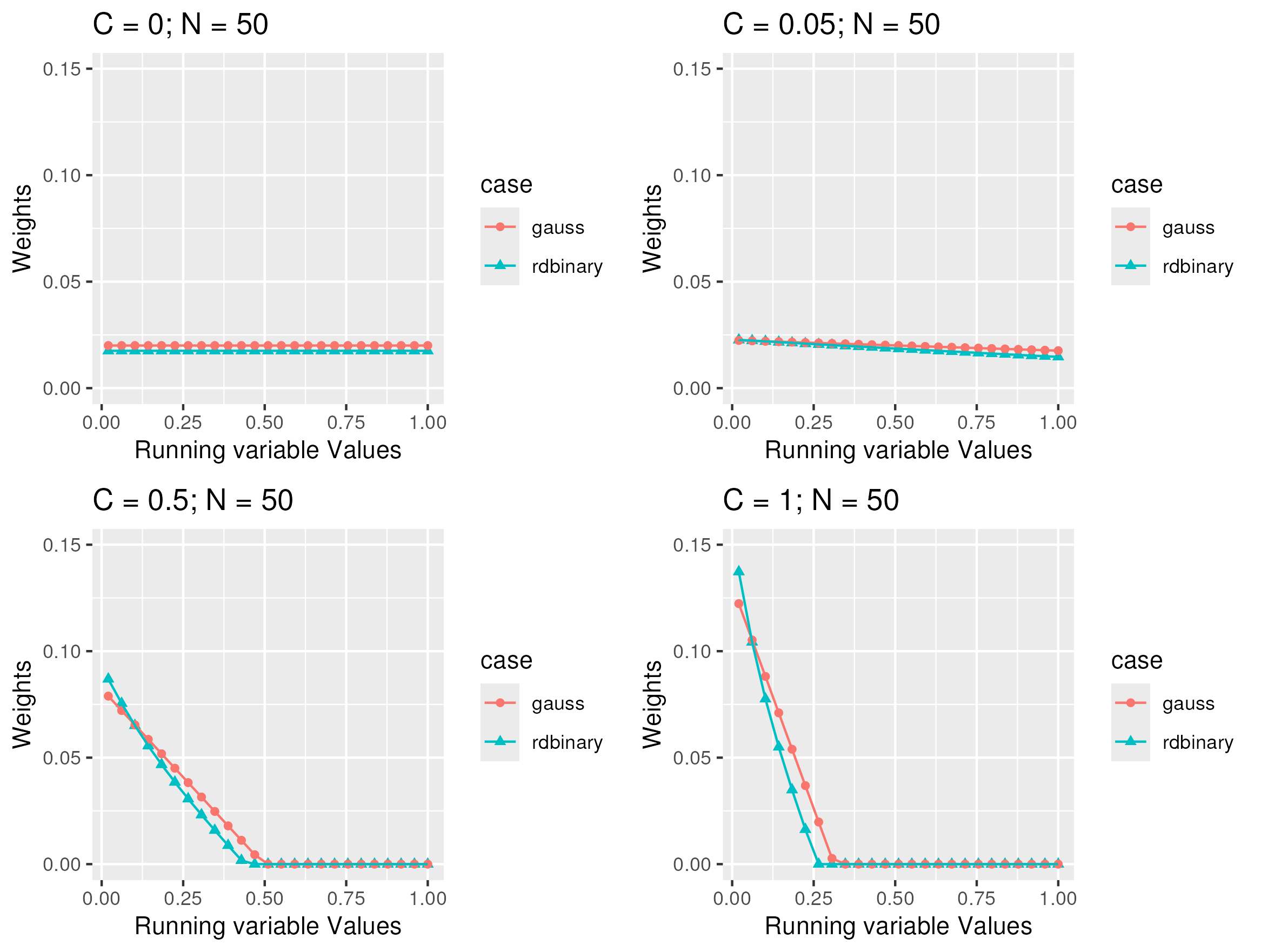}
    \caption{Comparison of weights for equally spaced grids (N = 50)}
    \label{fig:compare_gauss_50}
\end{figure}

\begin{figure}[H]
    \centering
    \includegraphics[width=0.8\hsize]{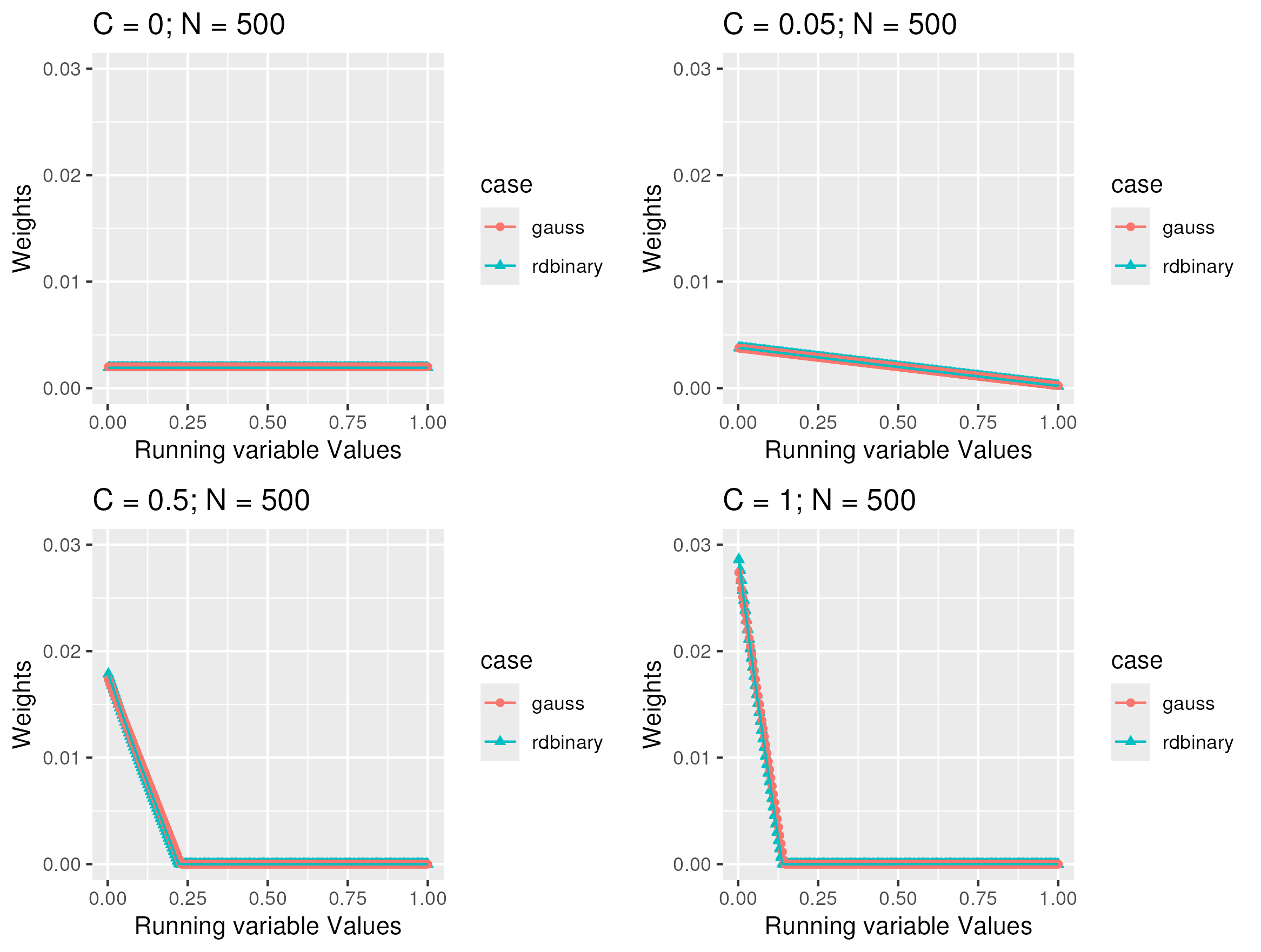}
    \caption{Comparison of weights for equally spaced grids (N = 500)}
    \label{fig:compare_gauss_500}
\end{figure}

By contrast, the two estimators appear almost equivalent for a sufficiently large sample size of $500$. The shape of our estimator's weights is still sharper than that of the Gaussian weights for $C = 1$; however, the differences between the two weights are negligible compared to the case with a small sample size of $50$. 

\begin{figure}[H]
    \centering
    \includegraphics[width=\hsize]{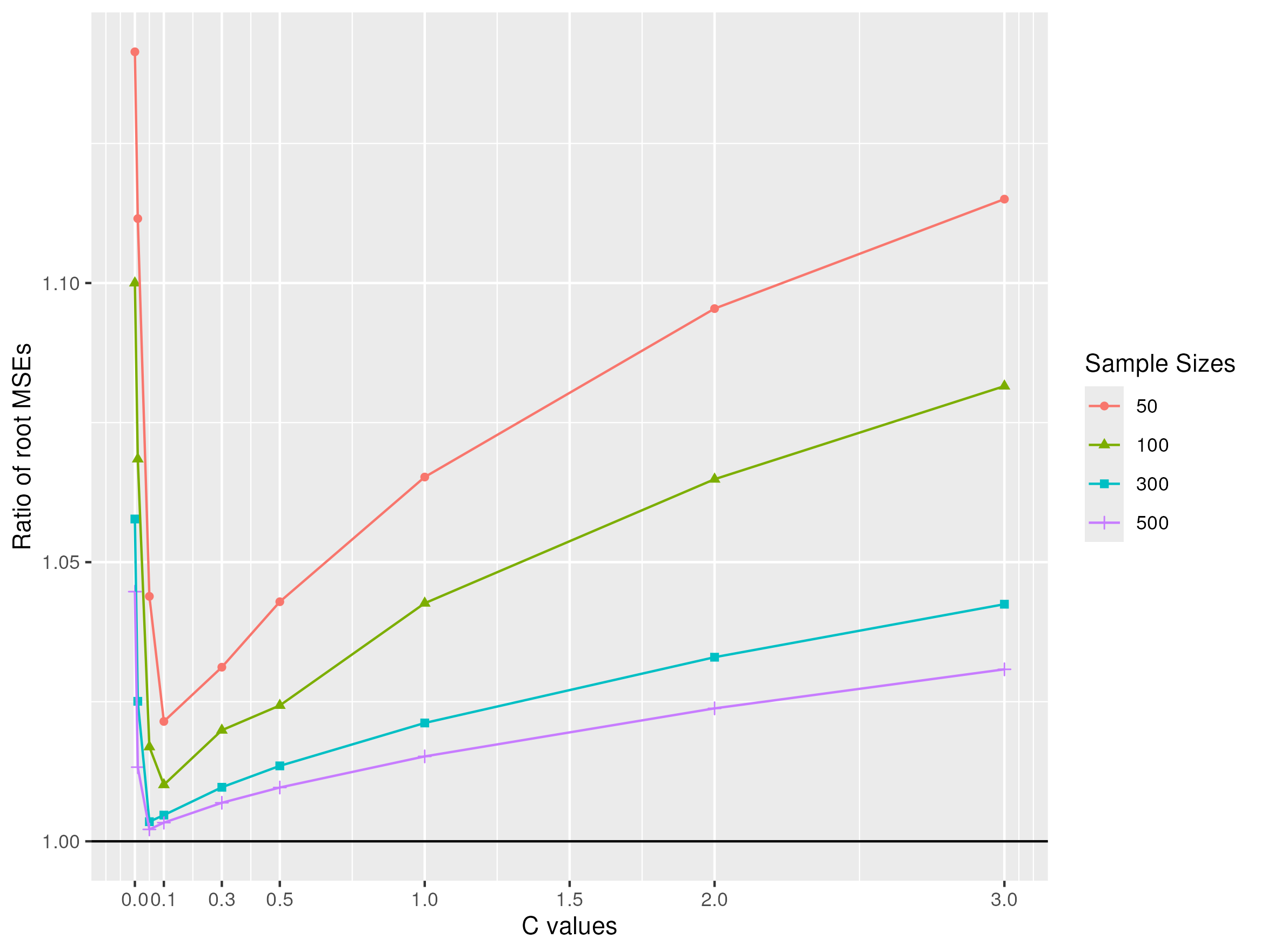}
    \caption{Ratio of maximum root MSE of Gaussian to that of rdbinary}
    \label{fig:compare_mse}
\end{figure}

More distinct differences appear in the maximum root MSEs in small samples. Figure \ref{fig:compare_mse} reports the ratio of the maximum root MSE of the Gaussian estimator to that of our estimator, calculated in the binary-outcome model. For a small sample size of $50$, the Gaussian estimator has $5\%$ to $20\%$ larger root MSEs than our estimator. Hence, our estimator achieves substantial improvement over the Gaussian estimator in small samples.

Nevertheless, these ratios shrink as the sample size increases and the gaps shrink below $5\%$ when $N = 500$. This property is consistent with the theoretical result that the ratio of the worst-case MSEs converges to one as the sample size increases. In summary, our estimator is substantially different from and superior to the Gaussian estimator in finite samples, whereas the two estimators behave similarly in large samples.

\section{Uniformly valid finite-sample inference}\label{sec:inference}

In this section, we return to the original setup introduced in Section \ref{sec:settings}, where we observe both the treated sample $\{ Y_{i,+}, R_{i,+} \}_{i=1}^{n_{+}}$ and the untreated sample $\{ Y_{i,-}, R_{i,-} \}_{i=1}^{n_{-}}$.
We propose an inference procedure with respect to $\tau \equiv f(1,R_0)-f(0,R_0)$ based on a given linear shrinkage estimator. Let $p_{i,+} \equiv f(1,R_{i,+})$, $p_{i,-} \equiv f(0,R_{i,-})$, and $R_{0,+}=R_{0,-}=0$, so that $Y_{i,+}$ and $Y_{i,-}$ follow Bernoulli distribution with parameters $p_{i,+}$ and $p_{i,-}$, respectively. Similar to the previous sections, we assume that $p_{i,+}$ and $p_{i,-}$ satisfy $\bm{p}_{+} \equiv (p_{0,+}, p_{1,+}, \ldots, p_{n_{+},+})' \in \mathcal{P}_{+}$ and $\bm{p}_{-} \equiv (p_{0,-}, p_{1,-}, \ldots, p_{n_{-},-})' \in \mathcal{P}_{-}$, where
\begin{eqnarray*}
    \mathcal{P}_{+} &\equiv & \left\{ \bm{p}_{+} \in [0,1]^{n_{+}+1}: |p_{i,+}-p_{j,+}| \leq C \| R_{i,+} - R_{j,+} \| \ \text{for all $i$ and $j$} \right\}, \\
    \mathcal{P}_{-} &\equiv & \left\{ \bm{p}_{-} \in [0,1]^{n_{-}+1}: |p_{i,-}-p_{j,-}| \leq C \| R_{i,-} - R_{j,-} \| \ \text{for all $i$ and $j$} \right\}.
\end{eqnarray*}
We propose an inference procedure of $\tau = p_{0,+}-p_{0,-}$ based on the estimator $\hat{\tau} \equiv \hat{p}_{0,+}(\bm{w}_{+}) - \hat{p}_{0,-}(\bm{w}_{-})$, where
\begin{eqnarray*}
    \hat{p}_{0,+}(\bm{w}_{+}) &\equiv &  \frac{1}{2} + \sum_{i=1}^{n_{+}}w_{i,+} \left( Y_{i,+} - \frac{1}{2} \right), \\
    \hat{p}_{0,-}(\bm{w}_{-}) &\equiv &  \frac{1}{2} + \sum_{i=1}^{n_{-}}w_{i,-} \left( Y_{i,-} - \frac{1}{2} \right).
\end{eqnarray*}
Our inference procedure is valid for any linear estimator with nonnegative weights (even if $\sum_{i=1}^{n_{+}}w_{i,+}>1$ or $\sum_{i=1}^{n_{-}}w_{i,-}>1$) when the outcome is binary. Hence, we can conduct an inference using the linear shrinkage estimator proposed in the previous sections. The following argument does not extend to general bounded outcomes. In Appendix \ref{sec:inf_bounded}, we consider an inference procedure for general bounded outcomes based on Hoeffding's inequality.

\subsection{One-sided test}\label{sec:one-sided}
We provide finite-sample valid confidence intervals by inverting tests that are uniformly valid over the Lipschitz class. We begin our analysis with a one-sided test. Using a uniformly valid one-sided test, we then construct a uniformly valid two-sided test and confidence interval.

Specifically, we consider a one-sided test for the following null and alternative hypotheses:
\begin{equation}
    H_0: \tau = \tau_0 \ \ \text{vs.} \ \ H_1: \tau > \tau_0. \label{hypothesis_one-sided}\nonumber
\end{equation}
We propose the following testing procedure based on the linear estimator $\hat{\tau}$:
\begin{equation}
    \hat{\tau} > \gamma \ \ \Rightarrow \ \ \text{reject $H_0$}, \label{one-sided_test}\nonumber
\end{equation}
where $\gamma$ is a critical value. The critical value $\gamma$ must satisfy $P_{\bm{p}}(\hat{\tau} - \tau_0 > \gamma) \leq \alpha$ for any parameter $\bm{p} \equiv (\bm{p}_{+}', \bm{p}_{-}')' \in \mathcal{P}_{\ast} \equiv \mathcal{P}_{+} \times \mathcal{P}_{-}$ satisfying $H_0$. Hence, we must choose the critical value $\gamma^{\ast}(\tau_0)$ such that
\begin{equation}
    \max_{\bm{p} \in \mathcal{P}(\tau_0)} P_{\bm{p}}\left( \hat{\tau} > \gamma^{\ast}(\tau_0) \right) \ \leq \ \alpha, \label{one-sided_prob}
\end{equation}
where $\mathcal{P}(\tau_0) \equiv \left\{ \bm{p} \in \mathcal{P}_{\ast} : p_{0,+} - p_{0,-} = \tau_0 \right\}$. The critical value $\gamma^*(\tau_0)$ provides a uniformly valid one-sided test for finite samples. 

To obtain an appropriate critical value, we must calculate $\max_{\bm{p} \in \mathcal{P}(\tau_0)} P(\hat{\tau} > \gamma)$. The following theorem shows that we can calculate $\max_{\bm{p} \in \mathcal{P}(\tau_0)} P(\hat{\tau} > \gamma)$ by optimizing a single parameter.

\begin{theorem}\label{thm:inference}
Define
\begin{eqnarray*}
    \tilde{\bm{p}}_{+}(p) &\equiv & \left( p, \min\{p+C\|R_{1,+}\|, 1 \}, \ldots , \min\{p+C\|R_{n_{+},+}\|, 1 \} \right)', \\
    \tilde{\bm{p}}_{-}(p) &\equiv & \left( p, \max\{p-C\|R_{1,-}\|, 0 \}, \ldots , \max\{p-C\|R_{n_{-},-}\|, 0 \} \right)', \\
    \tilde{\bm{p}}(p,\tau_0) &\equiv & \left( \tilde{\bm{p}}_{+}(p)', \tilde{\bm{p}}_{-}(p-\tau_0)' \right)'.
\end{eqnarray*}
If $w_{i,+} \geq 0$ and $w_{i,-} \geq 0$ for all $i$, we obtain
\begin{equation}
    \max_{\bm{p} \in \mathcal{P}(\tau_0)} P_{\bm{p}}(\hat{\tau} > \gamma) \ = \ \max_{p \in [\max \{0,\tau_0\}, \min \{1,1+\tau_0 \}]} P_{\tilde{\bm{p}}(p,\tau_0)} (\hat{\tau} > \gamma). \label{inference_max}
\end{equation}
\end{theorem}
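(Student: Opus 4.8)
The plan is to exploit that $\hat\tau=\sum_{i=1}^{n_{+}}w_{i,+}(Y_{i,+}-1/2)-\sum_{i=1}^{n_{-}}w_{i,-}(Y_{i,-}-1/2)$ does not involve $Y_{0,+}$ or $Y_{0,-}$ and, since every weight is nonnegative, is coordinatewise monotone in the remaining outcomes: nondecreasing in each $Y_{i,+}$ and nonincreasing in each $Y_{i,-}$. First I would build a coupling. Draw independent uniforms $U_{i,+}$ ($i=1,\ldots,n_{+}$) and $V_{i,-}$ ($i=1,\ldots,n_{-}$) on $[0,1]$, and for a given $\bm p\in\mathcal P_{\ast}$ set $Y_{i,+}=1\{U_{i,+}\le p_{i,+}\}$ and $Y_{i,-}=1\{V_{i,-}\le p_{i,-}\}$; these have the required Bernoulli marginals and independence. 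On this common space $1\{\hat\tau>\gamma\}$ is, pointwise in the uniforms, nondecreasing in each $p_{i,+}$ ($i\ge 1$) and nonincreasing in each $p_{i,-}$ ($i\ge 1$), so after taking expectations $P_{\bm p}(\hat\tau>\gamma)$ inherits the same monotonicities; moreover it does not depend on $p_{0,+}$ or $p_{0,-}$ at all, since the estimator ignores $Y_{0,+}$ and $Y_{0,-}$.

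Next I would fix the anchors $p_{0,+}=p$ and $p_{0,-}=p-\tau_0$ (the identity $p_{0,+}-p_{0,-}=\tau_0$ being the null restriction). With these fixed, the constraint sets for $(p_{1,+},\ldots,p_{n_{+},+})$ and $(p_{1,-},\ldots,p_{n_{-},-})$ decouple across the two samples, and I claim each has a pointwise extremal feasible element: $\tilde{\bm p}_{+}(p)$ is the pointwise largest vector in $\{\bm p_{+}\in\mathcal P_{+}:p_{0,+}=p\}$ and $\tilde{\bm p}_{-}(p-\tau_0)$ the pointwise smallest in $\{\bm p_{-}\in\mathcal P_{-}:p_{0,-}=p-\tau_0\}$. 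The coordinatewise bounds $p_{i,+}\le\min\{p+C\|R_{i,+}\|,1\}$ come from the Lipschitz constraint anchored at index $0$ (using $R_{0,+}=0$) together with $p_{i,+}\le 1$; the point that needs checking is that the vector attaining all these bounds simultaneously is itself feasible, i.e.\ respects every pairwise Lipschitz constraint. This is precisely the truncation argument already used in the paper for $\tilde{\bm\theta}(t)$ preceding Theorem~\ref{thm:max_problem}: $|\min\{p+C\|R_{i,+}\|,1\}-\min\{p+C\|R_{j,+}\|,1\}|\le C\bigl|\,\|R_{i,+}\|-\|R_{j,+}\|\,\bigr|\le C\|R_{i,+}-R_{j,+}\|$, and symmetrically for the $\max\{\cdot,0\}$ truncations on the control side. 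Combining this with the monotonicity from the first step, every $\bm p\in\mathcal P(\tau_0)$ with $p_{0,+}=p$ satisfies $P_{\bm p}(\hat\tau>\gamma)\le P_{\tilde{\bm p}(p,\tau_0)}(\hat\tau>\gamma)$, and since $\tilde{\bm p}(p,\tau_0)\in\mathcal P(\tau_0)$ this bound is attained, giving equality for each fixed $p$.

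Finally I would take the maximum over $p$. Membership in $\mathcal P(\tau_0)$ forces $p_{0,+}\in[0,1]$ and $p_{0,-}=p_{0,+}-\tau_0\in[0,1]$, so $p=p_{0,+}$ ranges exactly over $[\max\{0,\tau_0\},\min\{1,1+\tau_0\}]$, and conversely each such $p$ yields a valid $\tilde{\bm p}(p,\tau_0)\in\mathcal P(\tau_0)$; this gives \eqref{inference_max}. I expect the main obstacle to be the \emph{simultaneous} feasibility claim — verifying that the coordinatewise extremal configurations are genuine points of $\mathcal P_{+}$ and $\mathcal P_{-}$, which rests on the truncated-tent-function inequality above holding for all index pairs and not merely those tied to the boundary index — together with the bookkeeping that $p_{0,+}$ and $p_{0,-}$ enter the rejection probability only through the constraint set (as $\hat\tau$ does not depend on $Y_{0,+}$ or $Y_{0,-}$), which is what reduces the $(n_{+}+n_{-}+1)$-dimensional maximization to a one-dimensional one in $p$.
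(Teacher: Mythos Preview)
Your proposal is correct and follows essentially the same approach as the paper's proof: both reduce the problem to first-order stochastic dominance for $\hat\tau$ (which is monotone in each $Y_{i,+}$ and $-Y_{i,-}$ because the weights are nonnegative), then note that $\tilde{\bm p}(p,\tau_0)\in\mathcal P(\tau_0)$ so the bound is attained. The only difference is cosmetic: the paper invokes an external stochastic-dominance lemma for independent coordinates, while you spell out the standard uniform-coupling construction directly; your write-up is also more explicit about the pairwise feasibility of the extremal vector and about the range of $p$, points the paper's proof leaves implicit.
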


Theorem \ref{thm:inference} is obtained using first-order stochastic dominance. Suppose that $(Y_1, \ldots, Y_n)' \in \{0,1\}^n$ and $(\tilde{Y}_1, \ldots, \tilde{Y}_n)' \in \{0,1\}^n$ follow $n$-dimensional independent Bernoulli distributions with parameters $\bm{p} \in \mathbb{R}^n$ and $\tilde{\bm{p}} \in \mathbb{R}^n$, respectively, and each element of $\bm{p}$ is smaller than or equal to that of $\tilde{\bm{p}}$. Then, if $w_i$ is nonnegative for all $i$, $\sum_{i=1}^n w_i \tilde{Y}_i$ has first-order stochastic dominance over $\sum_{i=1}^n w_i Y_i$. Hence, if we fix $p_{0,+}$ and $p_{0,-}$, then $P_{\bm{p}}(\hat{\tau} > \gamma)$ is maximized at $\bm{p} = \left( \tilde{\bm{p}}_{+}(p_{0,+})', \tilde{\bm{p}}_{-}(p_{0,-})' \right)'$, implying (\ref{inference_max}) holds.

From Theorem \ref{thm:inference}, we can obtain the critical value $\gamma^{\ast}(\tau_0)$ satisfying (\ref{one-sided_prob}) using the following algorithm:\vspace{0.05in}
\begin{enumerate}
    \item Fix $\gamma \in [-1,1]$ and $p \in [\max \{0,\tau_0\}, \min \{1,1+\tau_0 \}]$.
    \item Calculate the probability
    \begin{equation}
        P\left( \sum_{i=1}^{n_{+}} w_{i,+} (\tilde{Y}_{i,+}-1/2) - \sum_{i=1}^{n_{-}} w_{i,-} (\tilde{Y}_{i,-} - 1/2) \ > \ \gamma \right) \label{prob_reject}
    \end{equation}
    by drawing a large number of samples $\{\tilde{Y}_{1,+}, \ldots \tilde{Y}_{n_{+},+}, \tilde{Y}_{1,-}, \ldots, \tilde{Y}_{n_{-},-}\}$ from the $(n_{+}+n_{-})$-dimensional independent Bernoulli distribution with parameter $\tilde{\bm{p}} = \left( \tilde{\bm{p}}_{+}(p)', \tilde{\bm{p}}_{-}(p-\tau_0)' \right)'$.
    \item Maximize the probability (\ref{prob_reject}) with respect to $p \in [\max \{0,\tau_0\}, \min \{1,1+\tau_0 \}]$ numerically and define $\pi(\gamma)$ as the maximum of (\ref{prob_reject}).
    \item Derive $\gamma^{\ast}(\tau_0) = \text{arg} \min \{ \gamma : \pi(\gamma) \leq \alpha \}$.\vspace{0.05in}
\end{enumerate}

\begin{remark}\label{rem:one-sided_gamma}
As the critical value $\gamma^{\ast}(\tau_0)$ depends on the hypothesized value $\tau_0$, the critical value must be calculated for each hypothesized value.
We can show that the critical value $\gamma^{\ast}(\tau_0)$ increases with the hypothesized value $\tau_0$. Suppose that $-1 \leq \tau_0 \leq \tilde{\tau}_0 \leq 1$ and $p_{0,+} - p_{0,-} = \tau_0$. Then, there exist $\tilde{p}_{0,+}$ and $\tilde{p}_{0,-}$ such that $\tilde{p}_{0,-} \leq p_{0,-}$, $\tilde{p}_{0,+} \geq p_{0,+}$, and $\tilde{p}_{0,+}-\tilde{p}_{0,+} = \tilde{\tau}_0$. From an argument similar to that in the proof of Theorem \ref{thm:inference}, we obtain
\[
P_{\left( \tilde{\bm{p}}_{+}(p_{0,+}), \tilde{\bm{p}}_{-}(p_{0,-}) \right)}\left( \hat{\tau} > \gamma \right) \ \leq \ P_{\left( \tilde{\bm{p}}_{+}(\tilde{p}_{0,+}), \tilde{\bm{p}}_{-}(\tilde{p}_{0,-}) \right)} \left( \hat{\tau} > \gamma \right) \ \ \text{for any $\gamma$.}
\]
This result implies that $\gamma^{\ast}(\tau_0)$ is increasing in $\tau_0$. Hence, if the null hypothesis $H_0:\tau = \tilde{\tau}_0$ is rejected, then the null hypothesis $H_0:\tau = \tau_0$ must also be rejected for any $\tau_0<\tilde\tau_0$.
\end{remark}

\subsection{Two-sided test and confidence interval}\label{sec:two-sided}
Next, we construct a uniformly valid two-sided test and confidence interval using the one-sided test proposed in Section \ref{sec:one-sided}. We consider the following null and alternative hypotheses:
\begin{equation}
    H_0: \tau = \tau_0 \ \ \text{vs.} \ \ H_1: \tau \neq \tau_0. \label{hypothesis_two-sided}\nonumber
\end{equation}
Similar to the one-sided test, we propose the following testing procedure based on the linear estimator $\hat{\tau}$:
\begin{equation}
    \hat{\tau} \not\in [\gamma_l,\gamma_r]  \ \ \Rightarrow \ \ \text{reject $H_0$}, \label{two-sided_test}\nonumber
\end{equation}
where the critical values $\gamma_l$ and $\gamma_r$ must satisfy $P_{\bm{p}}(\hat{\tau} \not\in [\gamma_l,\gamma_r] ) \leq \alpha$ under $H_0$. Hence, we must choose the critical values $\gamma_l^{\ast}(\tau_0)$ and $\gamma_r^{\ast}(\tau_0)$ such that
\begin{equation}
    \max_{\bm{p} \in \mathcal{P}(\tau_0)} P_{\bm{p}}\left( \hat{\tau}  \not\in [\gamma_l^{\ast}(\tau_0),\gamma_r^{\ast}(\tau_0)] \right) \ \leq \ \alpha. \label{two-sided_prob}
\end{equation}

However, it is challenging to derive a simple expression for the maximum of the probability $P_{\bm{p}}\left( \hat{\tau}\not\in [\gamma_l,\gamma_r] \right)$, unlike for the one-sided testing. Therefore, we instead calculate an upper bound on the maximum of $P_{\bm{p}}\left( \hat{\tau} \not\in [\gamma_l,\gamma_r] \right)$:
\begin{eqnarray*}
    \max_{\bm{p} \in \mathcal{P}(\tau_0)} P_{\bm{p}}( \hat{\tau} \not\in [\gamma_l,\gamma_r] ) &=& \max_{\bm{p} \in \mathcal{P}(\tau_0)} \left\{ P_{\bm{p}}( \hat{\tau}  > \gamma_r) + P_{\bm{p}}( \hat{\tau}  < \gamma_l) \right\} \\
    & \leq & \max_{\bm{p} \in \mathcal{P}(\tau_0)} P_{\bm{p}}( \hat{\tau} > \gamma_r) + \max_{\bm{p} \in \mathcal{P}(\tau_0)} P_{\bm{p}}( \hat{\tau}  < \gamma_l)\\
    &=& \pi_r(\gamma_r) + \pi_l(\gamma_l),
\end{eqnarray*}
where $\pi_r(\gamma_r) \equiv \max_{\bm{p} \in \mathcal{P}(\tau_0)} P_{\bm{p}}( \hat{\tau}  > \gamma_r)$ and $\pi_l(\gamma_l) \equiv \max_{\bm{p} \in \mathcal{P}(\tau_0)} P_{\bm{p}}( \hat{\tau}  < \gamma_l)$. We can calculate $\pi_r(\gamma_r)$ as in Section \ref{sec:one-sided} and $\pi_l(\gamma_l)$ similarly. We then propose the following critical values $\gamma_r^{\ast}(\tau_0)$ and $\gamma_l^{\ast}(\tau_0)$:
\begin{equation*}
    \gamma_r^{\ast}(\tau_0) = \text{arg} \min \{\gamma_r:\pi_r(\gamma_r) \leq \alpha / 2\} \ \ \text{and} \ \ \gamma_l^{\ast}(\tau_0) = \text{arg} \max \{ \gamma_l:\pi_l(\gamma_l) \leq \alpha / 2 \}.
\end{equation*}
By construction, these critical values $\gamma_r^{\ast}(\tau_0)$ and $\gamma_l^{\ast}(\tau_0)$ satisfy (\ref{two-sided_prob}).

We obtain the confidence region of $\tau$ by inverting the testing procedure. We define $\widehat{CR}_{1-\alpha}$ as the set of hypothesized values that are not rejected by the proposed two-sided test, that is,
\[
\widehat{CR}_{1-\alpha} \ \equiv \ \left\{ \tau_0 \in [0,1] :   \gamma_l^{\ast}(\tau_0) \leq \hat{\tau} \leq \gamma_r^{\ast}(\tau_0) \right\}.
\]
By construction, $\widehat{CR}_{1-\alpha}$ satisfies
\[
\min_{\bm{p} \in \mathcal{P}_{\ast}} P_{\bm{p}} \left( \tau \in \widehat{CR}_{1-\alpha} \right) \ \geq \ 1-\alpha.
\]
In other words, this confidence region is uniformly valid over the Lipschitz class.

This confidence region is an interval. As discussed in Remark \ref{rem:one-sided_gamma}, $\gamma_r^{\ast}(\tau_0)$ is increasing in $\tau_0$. Similarly, $\gamma_l^{\ast}(\tau_0)$ is increasing in $\tau_0$. Suppose that $t_1 < t_2$ and $t_1, t_2 \in \widehat{CR}_{1-\alpha}$. Then, for any $t \in [t_1,t_2]$, we obtain
\[
\gamma_l^{\ast}(t) \leq \gamma_l^{\ast}(t_2) \leq \hat{\tau} \ \ \text{and} \ \ \hat{\tau} \leq \gamma_r^{\ast}(t_1) \leq \gamma_r^{\ast}(t).
\]
Hence, any $t$ within the interval $[t_1,t_2]$ must be contained in the confidence region $\widehat{CR}_{1-\alpha}$, which means that $\widehat{CR}_{1-\alpha}$ is an interval. Consequently, searching for the boundary points of $\widehat{CR}_{1-\alpha}$ is sufficient for constructing the confidence interval.

\begin{remark}\label{rem:CI_computation}
For example, we can calculate the left boundary point of $\widehat{CR}_{1-\alpha}$ using the following algorithm:\vspace{0.05in}
\begin{enumerate}
    \item Let $t_0 = 0$ and calculate $\gamma^{\ast}_r(t_0)$.
    \item For $k \geq 0$, if $\hat{\tau} > \gamma^{\ast}_r(t_k)$, set $t_{k+1} = t_k + 2^{-k-1}$. If not, set $t_{k+1} = t_k - 2^{-k-1}$.
    \item By repeating the above process, $t_k$ converges to the left boundary point of $\widehat{CR}_{1-\alpha}$.\vspace{0.05in}
\end{enumerate}
Using this algorithm, we can avoid calculating the critical value $\gamma^{\ast}_r(\tau_0)$ for every $\tau_0 \in [-1,1]$. We can calculate the right boundary point of $\widehat{CR}_{1-\alpha}$ similarly.
\end{remark}

\section{Simulation Results and an Empirical Application}\label{sec:numerical}
\subsection{Monte Carlo simulation}\label{sec:mc}
We demonstrate the performance of our estimator relative to existing estimators in Monte Carlo simulations. We compare our estimator (\textit{rdbinary}) with three different estimators: (1) the Gaussian estimator (\textit{gauss}) with homoskedastic variance $\sigma_i^2 = 1/4$ as in Section \ref{sec:compare_gauss}; (2) the \cite{XU20171}'s estimator (\textit{rd.mnl}), which is specific to multinomial outcomes, including binary outcomes as a special case; and (3) the \cite{Calonico.Cattaneo.Titiunik2014}'s estimator (\textit{rdrobust}).\footnote{We obtained the \textit{rd.mnl} package files from the author's personal website (\url{https://sites.google.com/view/kelixu/home}). For \textit{rd.mnl} and \textit{rdrobust}, we use their default specifications with bias-corrected robust estimation and inference.}

\begin{figure}[H]
    \centering
    \begin{minipage}{0.49\textwidth}
        \centering
        \includegraphics[width=0.95\textwidth]{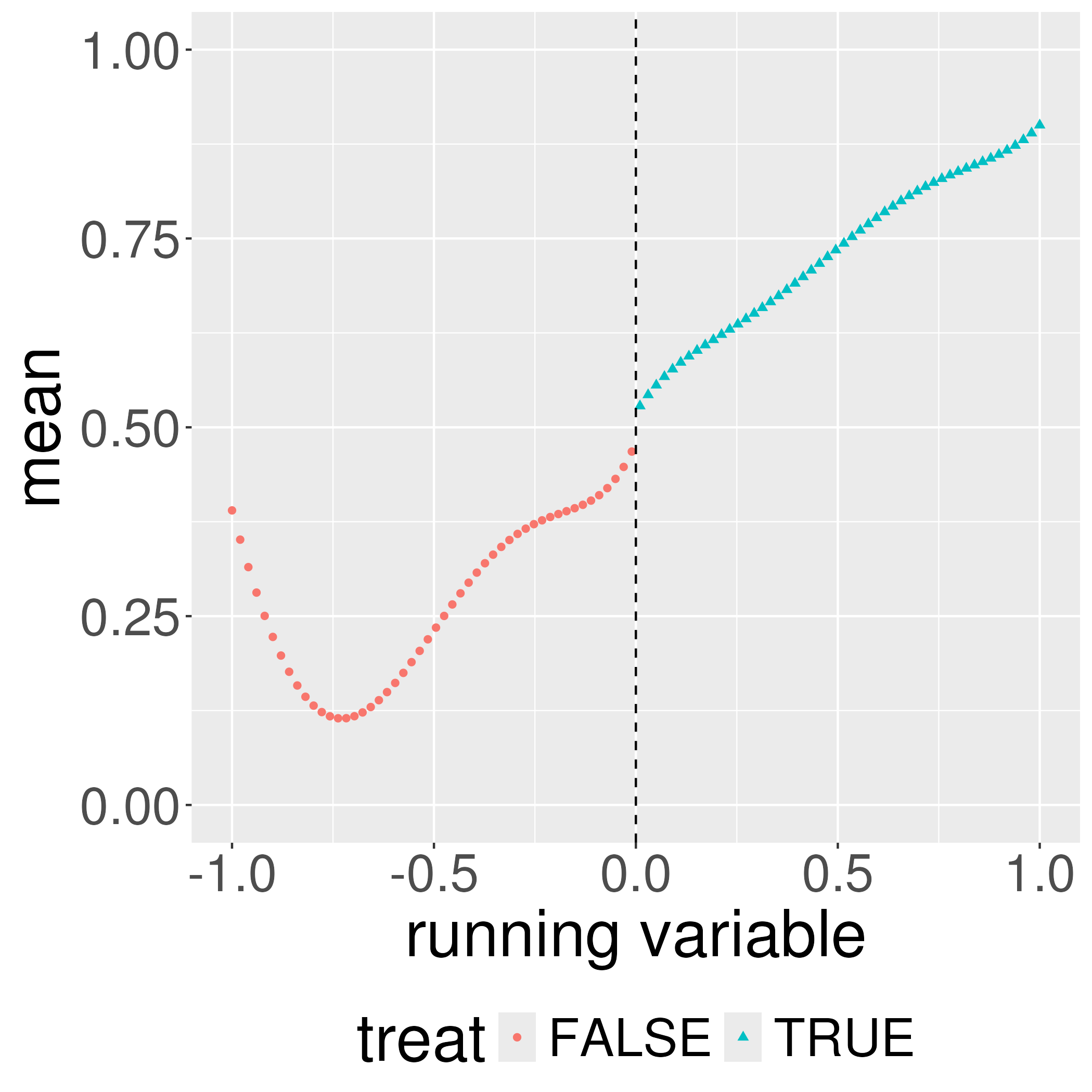}
        \caption{The \cite{Lee2008} model}
        \label{fig:shape_lee}
    \end{minipage}
    \hfill
    \begin{minipage}{0.49\textwidth}
        \centering
        \includegraphics[width=0.95\textwidth]{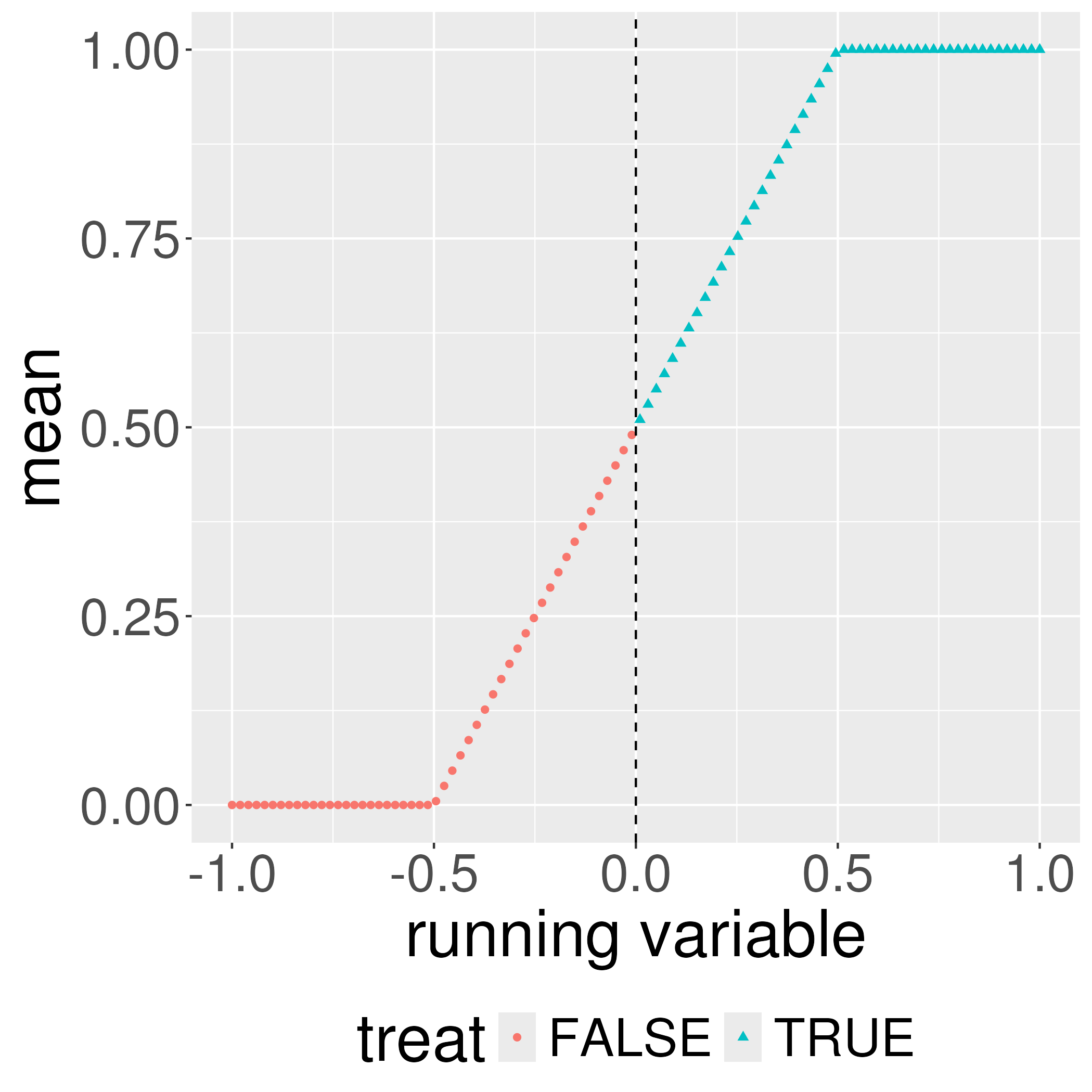}
        \caption{The worst-case model}
        \label{fig:shape_worst}
    \end{minipage}
    \vfill
    \begin{minipage}{0.49\textwidth}
        \centering
        \includegraphics[width=0.95\textwidth]{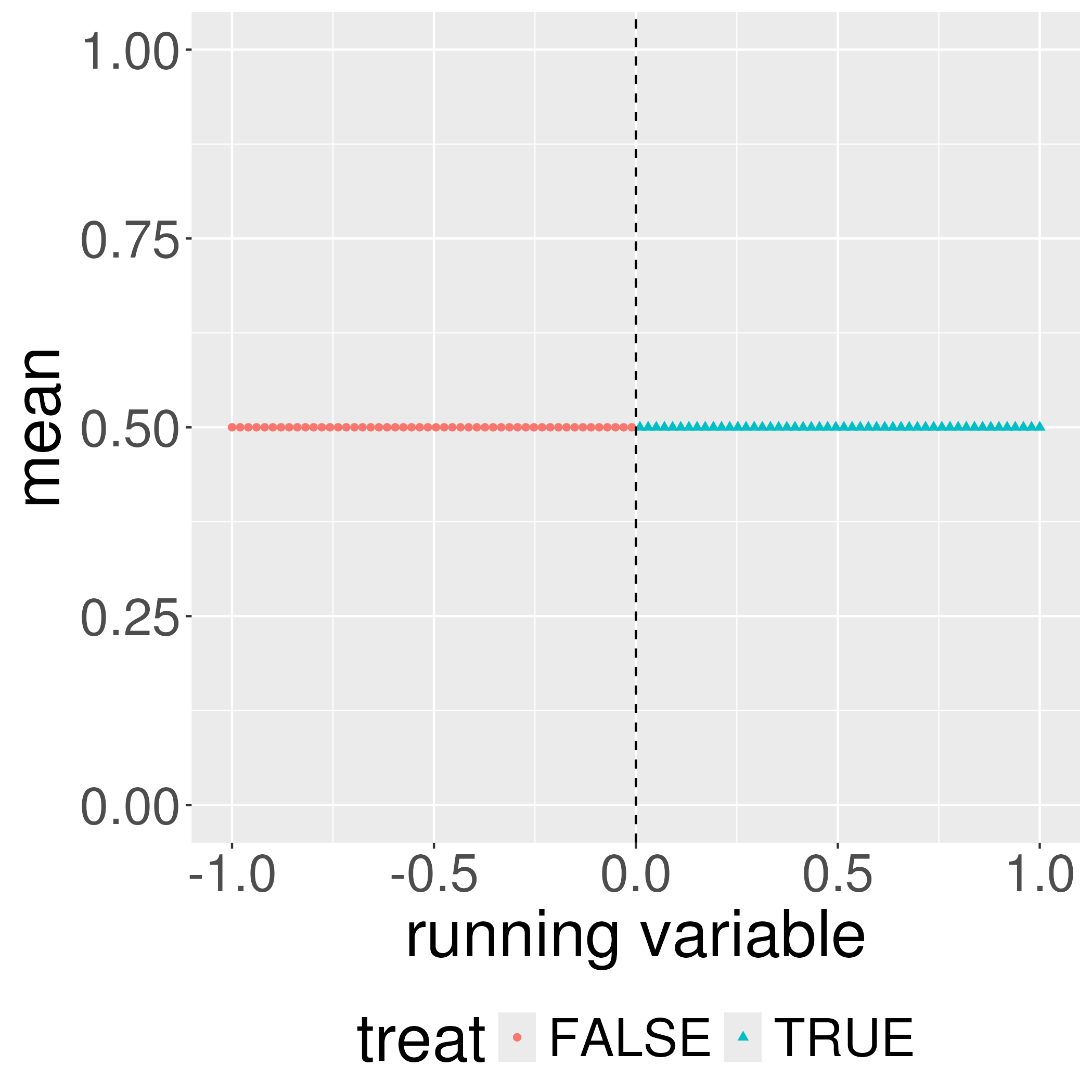}
        \caption{The flat model}
        \label{fig:shape_flat}
    \end{minipage}
\end{figure}

We compare their performance for three sample sizes $(N \in \{50, 100, 500\})$ of observations whose running-variable values are equally spaced between $-1$ and $1$. We consider the following three different models of the conditional mean of a binary dependent variable: (1) the \cite{Lee2008} model, which is a polynomial approximation of the conditional mean for \cite{Lee2008}'s data and is frequently used in simulation studies for RD designs; (2) the ``worst-case'' model, which is the parameter value $\bm{p}$ maximizing the MSE of any linear shrinkage estimator among parameter values such that $p_{0,+}=p_{0,-}=1/2$;\footnote{Note that the worst-case MSE of a linear shrinkage estimator is not necessarily attained at the parameter values of this model, since $p_{0,+}$ and $p_{0,-}$ are fixed at $1/2$.} and (3) a flat model in which the conditional probability is constant at $0.5$.
The three designs are illustrated in Figures \ref{fig:shape_lee}--\ref{fig:shape_flat}.
For each model, the dependent variable takes $1$ with the probability specified as \textit{mean} and otherwise $0$.

We consider the estimation and inference of $\tau=p_{0,+}-p_{0,-}$.
We use the true value of the Lipschitz constant $C$ for each design to implement our proposed method and the Gaussian method.
Our proposed estimator for $\tau$ is given by $\hat{\tau} = \hat{p}_{0,+}(\hat{\bm{w}}_{+}) - \hat{p}_{0,-}(\hat{\bm{w}}_{-})$, where $\hat{\bm{w}}_{+}$ and $\hat{\bm{w}}_{-}$ are chosen to minimize the worst-case MSE for the estimation of $p_{0,+}$ and $p_{0,-}$, respectively, as in Section \ref{sec:main}.
We then use $\hat{\tau}$ to construct a two-sided confidence interval for $\tau$ following the procedure in Section \ref{sec:inference}.\footnote{We computed the pair of critical values $\gamma_{r}^*(\tau_0)$ and $\gamma_{l}^*(\tau_0)$ by computing $\pi_{r}(\gamma_r)$ and $\pi_l(\gamma_l)$ with $3000$ draws of an $n$-dimensional Bernoulli random vector for each. The confidence intervals were constructed by inverting tests evaluated at $300$ grid points of $\tau_0$.}
An alternative, the Gaussian estimator, is $\tilde{\tau} = \hat{p}_{0,+}(\tilde{\bm{w}}_{+}) - \hat{p}_{0,-}(\tilde{\bm{w}}_{-})$, where $\tilde{\bm{w}}_{+}$ and $\tilde{\bm{w}}_{-}$ minimize the worst-case MSE for the estimation of $p_{0,+}$ and $p_{0,-}$, respectively, under the misspecified model where $Y_i \sim N(p_i,1/4)$, as in Section \ref{sec:gauss}.
Following \cite{kolesar2018discrete} and \cite{Armstrong2021ATE}, we construct a two-sided fixed-length confidence interval centered at $\tilde{\tau}$ with finite-sample validity under the Gaussian model.
Specifically, the $100\cdot (1-\alpha)\%$ confidence interval is given by $\left(\tilde{\tau}\pm \rm{cv}_\alpha\left(\rm{maxbias}(\tilde{\tau})/\rm{sd}(\tilde{\tau})\right)\cdot \rm{sd}(\tilde{\tau})\right)$, where $\rm{maxbias}(\tilde{\tau})$ denotes the maximum bias of $\tilde{\tau}$ under the Lipschitz class and $\rm{cv}_\alpha(b)$ denotes the $1-\alpha$ quantile of $|N(b,1)|$, the folded normal distribution with location and scale parameters $(b,1)$.

First, we demonstrate the point-estimation properties of our proposed estimator. Tables \ref{tab:point_Lee} and \ref{tab:point_100} compare the root MSE and bias for the estimation of the ATE at the cutoff, computed from $3000$ replication draws. Table \ref{tab:point_Lee} compares three different sample sizes under the Lee model. For all sample sizes, our estimator has substantially smaller MSEs than the other estimators. Furthermore, the differences decrease as the sample size increases, and the MSEs are relatively similar for $N = 500$. We note that the same pattern is confirmed for different designs with different Lipschitz constants $C$. Hence, our estimator is superior to the existing estimators in small samples, whereas their behaviors resemble in larger samples.


\begin{table}[ht]
\centering
\begin{tabular}{lrrrrrr}
  \hline
 & \multicolumn{2}{c}{N = 50} & \multicolumn{2}{c}{N = 100} & \multicolumn{2}{c}{N = 500} \\ \hline
 & root &  & root &  & root &  \\ 

 & MSE & Bias & MSE & Bias & MSE & Bias \\ 

  \hline
rdbinary & 0.264 & 0.063 & 0.223 & 0.067 & 0.141 & 0.065 \\ 
  gauss & 0.302 & 0.124 & 0.248 & 0.107 & 0.149 & 0.078 \\ 
  rd.mnl & 0.356 & 0.020 & 0.284 & 0.027 & 0.142 & 0.042 \\ 
  rdrobust & 0.578 & 0.037 & 0.423 & 0.033 & 0.190 & 0.036 \\ 
   \hline
\end{tabular}
\caption{Simulation: Point Estimates (Lee)} 
\label{tab:point_Lee}
\end{table}

\begin{table}[ht]
\centering
\begin{tabular}{lrrrrrr}
  \hline
 & \multicolumn{2}{c}{worst-case} & \multicolumn{2}{c}{Lee} & \multicolumn{2}{c}{flat-50} \\ \hline
 & root &  & root &  & root &  \\ 

 & MSE & Bias & MSE & Bias & MSE & Bias \\ 

  \hline
rdbinary & 0.239 & 0.136 & 0.223 & 0.067 & 0.088 & 0.000 \\ 
  gauss & 0.288 & 0.205 & 0.248 & 0.107 & 0.100 & 0.000 \\ 
  rd.mnl & 0.349 & -0.006 & 0.284 & 0.027 & 0.253 & -0.004 \\ 
  rdrobust & 0.417 & 0.001 & 0.423 & 0.033 & 0.423 & -0.004 \\ 
   \hline
\end{tabular}
\caption{Simulation: Point Estimates N = 100} 
\label{tab:point_100}
\end{table}

Second, we demonstrate the inference properties of our proposed method. Tables \ref{tab:CI_Lee} and \ref{tab:CI_100} compare the average length and coverage probability of the four confidence intervals, computed from $5000$ replication draws. Table \ref{tab:CI_Lee} shows that our confidence interval has shorter lengths with guaranteed coverage than \textit{rd.mnl} and \textit{rdrobust} for different sample sizes. Unlike the point estimation results, the differences in lengths remain similar as the sample size increases. Note that the Gaussian confidence interval has slightly shorter lengths while achieving the $95\%$ coverage for the Lee design. Nevertheless, the Gaussian confidence interval does \textit{not guarantee} uniform coverage across designs; for instance, it falls below $95\%$ for the flat design when $N=100$. This behavior is consistent with the fact that the Gaussian confidence interval is constructed under a \textit{missspecified} model, where the outcomes, and hence the linear estimators, are assumed to follow a normal distribution.
In contrast, our proposed confidence interval provides guaranteed coverage, a feature that makes it preferable.
We also note that the \textit{rdrobust} confidence interval is based on large-sample asymptotics and not specifically designed for binary outcomes, which results in unsatisfactory coverage properties for all designs, especially with small samples.

\begin{table}[ht]
\centering
\begin{tabular}{lrrrrrr}
  \hline
 & \multicolumn{2}{c}{N = 50} & \multicolumn{2}{c}{N = 100} & \multicolumn{2}{c}{N = 500} \\ \hline
 & CI length & coverage & CI length & coverage & CI length & coverage \\ 

  \hline
rdbinary & 1.463 & 0.989 & 1.232 & 0.988 & 0.763 & 0.991 \\ 
  gauss & 1.417 & 0.992 & 1.172 & 0.987 & 0.691 & 0.984 \\ 
  rd.mnl & 1.712 & 0.946 & 1.625 & 0.953 & 1.161 & 0.967 \\ 
  rdrobust & 1.615 & 0.888 & 1.481 & 0.906 & 0.814 & 0.929 \\ 
   \hline
\end{tabular}
\caption{Simulation: Interval Estimates (Lee)} 
\label{tab:CI_Lee}
\end{table}

\begin{table}[ht]
\centering
\begin{tabular}{lrrrrrr}
  \hline
 & \multicolumn{2}{c}{worst-case} & \multicolumn{2}{c}{Lee} & \multicolumn{2}{c}{flat-50} \\ \hline
 & CI length & coverage & CI length & coverage & CI length & coverage \\ 

  \hline
rdbinary & 1.156 & 0.978 & 1.232 & 0.988 & 0.414 & 0.963 \\ 
  gauss & 1.090 & 0.961 & 1.172 & 0.987 & 0.392 & 0.943 \\ 
  rd.mnl & 1.455 & 0.932 & 1.625 & 0.953 & 1.667 & 0.968 \\ 
  rdrobust & 1.469 & 0.908 & 1.481 & 0.906 & 1.498 & 0.906 \\ 
   \hline
\end{tabular}
\caption{Simulation: Interval Estimates N = 100} 
\label{tab:CI_100}
\end{table}

\subsection{Application}\label{sec:application}
We apply our estimator to a small-sample RD study by \cite{Brollo.Nannicini.Perotti.Tabellini2013} \citep{brolloReplicationDataPolitical2019}. \cite{Brollo.Nannicini.Perotti.Tabellini2013} exploit a regional fiscal rule in Brazil to study the impact of an additional government fiscal transfer on the frequency of corruption in local politics. In Brazil, 40 percent of municipal revenue comes from the \textit{Fundo de Participação dos Municipios} (FPM), which is allocated based on the population size of municipalities. Specifically, each municipality is allocated to one of the nine brackets based on its population. The bracketing fiscal rule induces population thresholds that discontinuously alter the amount of FPM transfers. Following \cite{Brollo.Nannicini.Perotti.Tabellini2013}, we focus on the first seven thresholds because of sample size limitations.

This study is particularly suited for our method for two reasons. First, their primary dependent variables are binary indicators. Specifically, they study the impact of the fiscal rule on two measures of \textit{corruption} indicators:
\begin{quote}
    broad corruption, which includes irregularities that could also be interpreted as bad administration rather than as overt corruption; and narrow corruption, which only includes severe irregularities that are also more likely to be visible to voters. \citep[page 1774]{Brollo.Nannicini.Perotti.Tabellini2013}
\end{quote}
Second, the sample size is relatively small. In particular, within each cutoff neighborhood, the sample size is limited to less than $400$ and is mostly around $100$ to $200$. In these small samples, our estimator is expected to be superior to other estimators that are based on asymptotic approximations.

The following tables present our \textit{rdbinary} estimates and \textit{rdrobust} estimates.\footnote{The original study runs global polynomial estimations for each cutoff neighborhood, as well as for the whole sample by pooling across cutoff neighborhoods. Their primary estimation is the fuzzy design, but we focus on the reduced-form sharp design estimates.} Tables \ref{tab:application_broad_pool} and \ref{tab:application_narrow_pool} report the pooled estimates over multiple cutoffs for the broad and narrow corruption indicators. $Crot$ denotes the rule-of-thumb value for the Lipschitz constant $C$, which is the largest (in absolute value) slope estimate from the binscatter estimation using the \textit{binsreg} package \citep{CattaneoCrumpFarrellFeng_2024}. In all tables, we report the point estimates and confidence intervals for three different values of the constant $C$: $Crot$; one-half of $Crot$; and $1.5$ times $Crot$.

\begin{table}[H]
\centering
\begin{tabular}{llll}
  \hline
estimator & C & point & CI \\ 
  \hline
rdrobust &  & 0.160 & [-0.033, 0.325] \\ 
  rdbinary & 0.5*Crot & 0.130 & [-0.021, 0.283] \\ 
  rdbinary & Crot & 0.147 & [-0.038, 0.342] \\ 
  rdbinary & 1.5*Crot & 0.145 & [-0.078, 0.367] \\ 
   \hline
\end{tabular}
\caption{Broad corruption pooled (N = 1202)} 
\label{tab:application_broad_pool}
\end{table}

\begin{table}[H]
\centering
\begin{tabular}{llll}
  \hline
estimator & C & point & CI \\ 
  \hline
rdrobust &  & 0.164 & [-0.054, 0.387] \\ 
  rdbinary & 0.5*Crot & 0.131 & [-0.011, 0.276] \\ 
  rdbinary & Crot & 0.154 & [-0.024, 0.338] \\ 
  rdbinary & 1.5*Crot & 0.155 & [-0.057, 0.366] \\ 
   \hline
\end{tabular}
\caption{Narrow corruption pooled (N = 1202)} 
\label{tab:application_narrow_pool}
\end{table}

For both indicators, our \textit{rdbinary} estimates appear to be similar to the \textit{rdrobust} estimates, which are valid for large samples. The sample size is $1,202$ for the entire pooling sample and, hence, is sufficiently large for the \textit{rdrobust} estimator.\footnote{We do not report \textit{rd.mnl} estimates because \textit{rd.mnl} sometimes fails to select a bandwidth in this dataset, particularly for small samples.} For both methods and outcomes, the $95\%$ confidence intervals include $0$. This finding differs from that of the original study,  which reports significant positive effects on the frequency of corruption. This difference highlights the importance of using local nonparametric estimations in RD designs. 

By pooling samples across multiple cutoffs, we obtain a sufficiently large sample. Nevertheless, heterogeneity across different cutoffs may be of interest, as the original study explores cutoff-specific estimates. However, only a few hundred observations are available around each individual cutoff, and the asymptotic approximation may not perform well in such small samples.

Tables \ref{tab:application_broad_each_1}, \ref{tab:application_broad_each_3}, and \ref{tab:application_broad_each_6} present our \textit{rdbinary} and \textit{rdrobust} estimates of the impact on the broad corruption indicator for seven different subsamples around each individual cutoff. See the Online Appendix for qualitatively similar results for the narrow corruption indicator. For all specifications, the confidence intervals for each subsample are much wider than those for the pooled sample. Nevertheless, our \textit{rdbinary} method tends to yield much shorter confidence intervals than \textit{rdrobust}. For example, Cutoff 3 has a sample size of $225$, which is too small for \textit{rdrobust} to offer any meaningful implications from its confidence interval. By contrast, our \textit{rdbinary} confidence intervals offer reasonable lower bounds for the impact on the broad corruption measure, which are not very negative compared with the lower bound of the confidence interval from \textit{rdrobust}.

\begin{table}[H]
\centering
\begin{tabular}{lllll}
  & \multicolumn{2}{c}{Cutoff 1} & \multicolumn{2}{c}{Cutoff 2}\\ \hline
estimator & point & CI & point & CI \\ 
  \hline
rdrobust & 0.038 & [-0.372, 0.447] & 0.057 & [-0.307, 0.422] \\ 
  rdbinary (0.5Crot) & 0.075 & [-0.128, 0.280] & 0.168 & [-0.186, 0.520] \\ 
  rdbinary (Crot) & 0.071 & [-0.193, 0.337] & 0.146 & [-0.298, 0.576] \\ 
  rdbinary (1.5Crot) & 0.072 & [-0.234, 0.375] & 0.140 & [-0.352, 0.632] \\ 
   \hline
\end{tabular}
\caption{Broad: at cutoffs 1 (N = 385) and  2 (N = 218)} 
\label{tab:application_broad_each_1}
\end{table}

\begin{table}[H]
\centering
\begin{tabular}{lllll}
  & \multicolumn{2}{c}{Cutoff 3} & \multicolumn{2}{c}{Cutoff 4}\\ \hline
estimator & point & CI & point & CI \\ 
  \hline
rdrobust & -0.099 & [-0.533, 0.335] &  0.058 & [-0.572, 0.687] \\ 
  rdbinary (0.5Crot) &  0.192 & [-0.088, 0.467] & -0.045 & [-0.458, 0.364] \\ 
  rdbinary (Crot) &  0.229 & [-0.117, 0.572] & -0.015 & [-0.518, 0.469] \\ 
  rdbinary (1.5Crot) &  0.232 & [-0.173, 0.635] &  0.011 & [-0.542, 0.570] \\ 
   \hline
\end{tabular}
\caption{Broad: at cutoffs 3 (N = 225) and  4 (N = 139)} 
\label{tab:application_broad_each_3}
\end{table}

\begin{table}[H]
\centering
\begin{tabular}{lllllll}
  & \multicolumn{2}{c}{Cutoff 5} & \multicolumn{2}{c}{Cutoff 6} & \multicolumn{2}{c}{Cutoff 7}\\ \hline
estimator & point & CI & point & CI & point & CI \\ 
  \hline
rdrobust & 0.719 & [-0.863, 2.302] & -0.078 & [-1.157, 1.000] & 2.096 & [-1.431, 5.623] \\ 
  rdbinary (0.5Crot) & 0.185 & [-0.232, 0.607] &  0.151 & [-0.307, 0.603] & 0.039 & [-0.490, 0.567] \\ 
  rdbinary (Crot) & 0.279 & [-0.263, 0.816] &  0.109 & [-0.458, 0.679] & 0.199 & [-0.512, 0.863] \\ 
  rdbinary (1.5Crot) & 0.330 & [-0.313, 0.936] &  0.081 & [-0.586, 0.721] & 0.246 & [-0.562, 0.963] \\ 
   \hline
\end{tabular}
\caption{Broad: at cutoffs 5 (N = 116) and  6 (N = 73) and  7 (N = 46)} 
\label{tab:application_broad_each_6}
\end{table}

\section{Conclusion}

Empirical studies often use RD designs with small samples, where
estimation and inference are particularly challenging.
Methods that rely on large-sample properties may not perform well in such settings.
Several finite-sample minimax estimators have been proposed. However, these estimators typically require knowledge of the variance, which is generally unavailable. Furthermore, the finite-sample validity of inference procedures based on these estimators requires the normality of the outcome variable.

In this study, we provide a minimax optimal estimator for RD designs with a binary outcome variable, together with its inference procedure.
The only tuning parameter of our proposed estimator is the Lipschitz constant; our estimator does not require specification of the conditional variance function, unlike existing minimax estimators in RD designs. Our estimator is also applicable to any bounded outcome variable. Thus, we offer a practical method that serves as a last resort for RD studies with relatively small effective sample sizes.

We demonstrate that the proposed estimator is superior to existing estimators in finite samples through numerical and simulation exercises. In a numerical exercise, we show that our estimator is $5$ to $20$\% more efficient in the worst-case root MSEs than a feasible version of the existing minimax optimal estimators for extremely small samples. Through simulation studies, we show that our estimator has much smaller MSEs than existing methods for sufficiently small sample sizes. Furthermore, we demonstrate that our inference procedure generates shorter confidence intervals with guaranteed coverage rates compared with existing large-sample methods. In the empirical application to a small-sample RD study, we document that our method generates similar results to the standard large-sample procedure for sufficiently large samples but provides much more informative results for small samples.

Our contribution provides a critical baseline for developing estimation procedures for binary or limited outcome variables in RD designs. Recent studies, such as \cite{noack_bias_aware_2024}, have considered bias-aware inference for fuzzy RD designs. As mentioned in the Introduction, the binary treatment status is a primary dependent variable in the first stage of the fuzzy design. Applying our results is not necessarily straightforward in that context because the first-stage estimand appears in the denominator of the target parameter. We reserve further extensions and generalizations of our results for future research.

\bibliographystyle{ecta}
\bibliography{reference} 

\appendix
\setcounter{table}{0}
\setcounter{figure}{0}
\renewcommand{\thetable}{A.\arabic{table}}
\renewcommand{\thefigure}{A.\arabic{figure}}
\section{Proofs}\label{sec:proofs}
\begin{proof}[Proof of Lemma \ref{lem:max_prob_1}]
\if0
We consider the following two cases: (i) $\sum_{i=1}^n w_i \theta_i - \theta_0 \geq 0$ and (ii) $\sum_{i=1}^n w_i \theta_i - \theta_0 < 0$.

\textbf{Case (i).} Define 
$$
\tilde{\bm{\theta}} \equiv (\tilde{\theta}_0, \tilde{\theta}_1, \ldots , \tilde{\theta}_n)' \equiv  ( \theta_0, \max \{\theta_1, \theta_0\}, \ldots , \max \{\theta_n, \theta_0\} )'.
$$
Then $\tilde{\bm{\theta}}$ satisfies $\tilde{\bm{\theta}} \in \Theta$ and $\tilde{\theta}_i \geq \tilde{\theta}_0$ for all $i$. This implies that $\sum_{i=1}^n w_i \theta_i \leq \sum_{i=1}^n w_i \tilde{\theta}_i$ because $w_i \geq 0$ for all $i$. Because $\sum_{i=1}^n w_i \theta_i - \theta_0 \geq 0$ and $\tilde{\theta}_0 = \theta_0$, we obtain
\begin{equation*}
    \left( \sum_{i=1}^n w_i \theta_i - \theta_0 \right)^2 \leq \left( \sum_{i=1}^n w_i \tilde{\theta}_i - \tilde{\theta}_0 \right)^2.
\end{equation*}
If $\theta_i < \theta_0 \leq 0$, then we have $1/4 - \theta_i^2 < 1/4 - \theta_0^2 = 1/4 - \tilde{\theta}_i^2$. This means that $1/4 - \theta_i^2 \leq 1/4 - \tilde{\theta}_i^2$ for all $i$. Hence, $\text{MSE}(\bm{w},\bm{\theta}) \leq \text{MSE}(\bm{w}, \tilde{\bm{\theta}})$ holds.
\fi
Let $\tilde{\bm{\theta}} = ( \theta_0, \theta_1 + 2 |\theta_0 - \theta_1 |_{+}, \ldots , \theta_n + 2 |\theta_0 - \theta_n |_{+} )'$, where $|a|_{+} \equiv \max\{a,0\}$. If $\theta_i \geq \theta_0$, then $\tilde{\theta}_i - \tilde{\theta}_0 = \theta_i - \theta_0 \geq 0$. If $\theta_i < \theta_0$, then $\tilde{\theta}_i - \tilde{\theta}_0 = \theta_i + 2(\theta_0 - \theta_i) - \theta_0 = \theta_0 - \theta_i \geq 0$. Hence, $\tilde{\bm{\theta}}$ satisfies $\tilde{\theta}_i \geq \tilde{\theta}_0$.

Next, we show $\tilde{\bm{\theta}} \in \Theta$. If $\theta_i \geq \theta_0$, then we have $\tilde{\theta}_i = \theta_i \in [-1/2,1/2]$. If $\theta_i < \theta_0$, then we have $\tilde{\theta}_i = \theta_i + 2(\theta_0 - \theta_i) = \theta_0 +(\theta_0 - \theta_i) \in [-1/2,1/2]$ because $\theta_0 \in [-1/2, 0]$ and $\theta_0 - \theta_i \in [0,1/2]$. Hence, $\tilde{\bm{\theta}} \in [-1/2,1/2]^{n+1}$. It suffices to show that $|\tilde{\theta}_i - \tilde{\theta}_j| \leq C \|R_i-R_j\|$ for all $i$ and $j$. We consider the following three cases: (i) $\theta_i \geq \theta_0$ and $\theta_j \geq \theta_0$, (ii) $\theta_i \geq \theta_0$ and $\theta_j < \theta_0$, (iii) $\theta_i < \theta_0$ and $\theta_j< \theta_0$. In case (i), we have $|\tilde{\theta}_i - \tilde{\theta}_j| = |\theta_i - \theta_j| \leq C \|R_i-R_j\|$. In case (ii), we have
\begin{eqnarray*}
    |\tilde{\theta}_i - \tilde{\theta}_j| &=& \left| \theta_i - ( 2\theta_0 - \theta_j ) \right| \ = \ \left| (\theta_i - \theta_0) + (\theta_j - \theta_0 ) \right| \\
    & \leq & (\theta_i - \theta_0) + (\theta_0 - \theta_j ) \ = \ (\theta_i - \theta_j) \ \leq \ C \|R_i-R_j\|.
\end{eqnarray*}
Similarly, in case (iii), we have
\begin{eqnarray*}
    |\tilde{\theta}_i - \tilde{\theta}_j| &=& \left| ( 2\theta_0 - \theta_i ) - ( 2\theta_0 - \theta_j ) \right| \ = \ |\theta_i - \theta_j| \ \leq \ C \|R_i-R_j\|.
\end{eqnarray*}
Therefore, we obtain $\tilde{\bm{\theta}} \in \Theta$.

Finally, we show that $\text{MSE}(\bm{w},\bm{\theta}) \leq \text{MSE}(\bm{w}, \tilde{\bm{\theta}})$. Because we have $\theta_i \leq \tilde{\theta}_i$ and $\tilde{\theta}_0 = \theta_0$, we obtain $( \sum_{i=1}^n w_i \tilde{\theta}_i - \tilde{\theta}_0 )^2 \geq ( \sum_{i=1}^n w_i \theta_i - \theta_0 )^2$ when $\sum_{i=1}^n w_i \theta_i - \theta_0 \geq 0$. In addition, as shown above, we have $\tilde{\theta}_i - \tilde{\theta}_0 = |\theta_i - \theta_0|$ for all $i$. Because $\sum_{i=1}^n w_i \leq 1$ and $\theta_0 \leq 0$, we obtain
\begin{eqnarray*}
    \sum_{i=1}^n w_i \tilde{\theta}_i - \tilde{\theta}_0 &=& \sum_{i=1}^n w_i (\tilde{\theta}_i - \tilde{\theta}_0) - \left( 1 - \sum_{i=1}^n w_i \right) \theta_0 \ \geq \ \sum_{i=1}^n w_i (\tilde{\theta}_i - \tilde{\theta}_0) \\
    &=& \sum_{i=1}^n w_i |\theta_i-\theta_0| \ \geq \ \sum_{i=1}^n w_i (\theta_0-\theta_i) \\
    &=& \theta_0 - \sum_{i=1}^n w_i \theta_i - \left( 1 - \sum_{i=1}^n w_i \right) \theta_0 \ \geq \ \theta_0 - \sum_{i=1}^n w_i \theta_i.
\end{eqnarray*}
This implies that $( \sum_{i=1}^n w_i \tilde{\theta}_i - \tilde{\theta}_0 )^2 \geq ( \sum_{i=1}^n w_i \theta_i - \theta_0 )^2$ also holds when $\sum_{i=1}^n w_i \theta_i - \theta_0 \leq 0$. Furthermore, if $\theta_i < \theta_0$, then we have
\begin{eqnarray*}
    \tilde{\theta}_i^2 &=& (2\theta_0 - \theta_i)^2 \ = \ \theta_i^2 - 4\theta_0 \theta_i + 4 \theta_0^2 \ = \ \theta_i^2 + 4 \theta_0 (\theta_0 - \theta_i) \ \leq \ \theta_i^2.
\end{eqnarray*}
Because $\theta_i \geq \theta_0$ implies $\tilde{\theta}_i = \theta_i$, we obtain $1/4 - \tilde{\theta}_i^2 \geq 1/4 - \theta_i^2$. Therefore, we obtain $\text{MSE}(\bm{w},\bm{\theta}) \leq \text{MSE}(\bm{w}, \tilde{\bm{\theta}})$.
\end{proof}

\begin{proof}[Proof of Theorem \ref{thm:max_problem}]
As discussed in Section \ref{sec:minimax}, if $\bm{\theta} = (\theta_0, \ldots, \theta_n)'\in\Theta$ satisfies (\ref{theta_condition}), we obtain
\begin{equation*}
    \text{MSE}(\bm{w},\bm{\theta}) \ \leq \ \text{MSE}(\bm{w},\tilde{\bm{\theta}}(\theta_0)) \ \ \text{for all $\bm{w} \in \mathcal{W}$}. 
\end{equation*}
Because $\tilde{\bm{\theta}}(\theta_0) \in \Theta$ holds for all $\theta_0 \in [-1/2, 0]$, we obtain  (\ref{max_problem}).
\end{proof}

\begin{proof}[Proof of Lemma \ref{lem:monotone}]
Suppose that $\bm{w} \equiv (w_1, \ldots , w_n)' \in \mathcal{W}$ satisfies $w_j < w_{j+1}$ for some $j$. Letting $\tilde{\bm{w}} \equiv (w_1, \ldots, w_{j-1}, w_{j+1}, w_j, w_{j+2}, \ldots, w_n)'$, we have $\tilde{\bm{w}} \in \mathcal{W}$. For any $\bm{\theta} \in \Theta$, we observe that
\begin{eqnarray*}
    & & \text{MSE}(\bm{w},\bm{\theta}) - \text{MSE}(\tilde{\bm{w}}, \bm{\theta}) \\
    &=& \left( \sum_{i=1}^n w_i \theta_i - \theta_0 \right)^2 - \left( \sum_{i=1}^n w_i \theta_i - w_j \theta_j - w_{j+1} \theta_{j+1} + w_{j+1} \theta_j + w_{j} \theta_{j+1}  - \theta_0 \right)^2 \\
    & & + w_j^2 \left( 1/4 - \theta_j^2 \right) + w_{j+1}^2 \left( 1/4 - \theta_{j+1}^2 \right) - w_{j+1}^2 \left( 1/4 - \theta_j^2 \right) - w_j^2 \left( 1/4 - \theta_{j+1}^2 \right) \\
    &=&  \left( \sum_{i=1}^n w_i \theta_i - \theta_0 \right)^2 - \left\{ \left( \sum_{i=1}^n w_i \theta_i - \theta_0 \right) - (w_j - w_{j+1})(\theta_j - \theta_{j+1}) \right\}^2 \\
    & & - (w_j^2 - w_{j+1}^2)(\theta_j^2 - \theta_{j+1}^2) \\
    &=& 2 \left( \sum_{i=1}^n w_i \theta_i - \theta_0 \right) (w_j - w_{j+1})(\theta_j - \theta_{j+1}) - (w_j - w_{j+1})^2(\theta_j - \theta_{j+1})^2 \\
    & &  - (w_j - w_{j+1})(\theta_j - \theta_{j+1})(w_j + w_{j+1})(\theta_j + \theta_{j+1}) \\
    &=& (w_j - w_{j+1})(\theta_j - \theta_{j+1}) \left\{ 2 \left( \sum_{i=1}^n w_i \theta_i - \theta_0 \right) - 2 (w_j \theta_j + w_{j+1} \theta_{j+1}) \right\}.
\end{eqnarray*}
If $\bm{\theta}$ satisfies (\ref{theta_condition}), we obtain
\begin{eqnarray*}
    & & \left( \sum_{i=1}^n w_i \theta_i - \theta_0 \right) - (w_j \theta_j + w_{j+1} \theta_{j+1}) \\
    &=& \sum_{i \neq j, \, j+1} w_i \theta_i - \theta_0 \ \geq \ \left( \sum_{i \neq j, \, j+1} w_i - 1 \right) \theta_0 \ \geq \ 0.
\end{eqnarray*}
Because $\tilde{\bm{\theta}}(\theta_0)$ satisfies (\ref{theta_condition}) for all $\theta_0 \in [-1/2,0]$, we obtain
\begin{equation*}
    \text{MSE}(\bm{w},\tilde{\bm{\theta}}(\theta_0)) \ \geq \ \text{MSE}(\tilde{\bm{w}}, \tilde{\bm{\theta}}(\theta_0)) \ \ \text{for all $\theta_0 \in [-1/2,0]$.}
\end{equation*}
It follows from Theorem \ref{thm:max_problem} that we obtain
\[
\max_{\bm{\theta} \in \Theta} \text{MSE}(\bm{w},\bm{\theta}) \ \geq \ \max_{\bm{\theta} \in \Theta} \text{MSE}(\tilde{\bm{w}},\bm{\theta}).
\]
Hence, if $w_j < w_{j+1}$, then we can reduce the maximum MSE by exchanging $w_j$ for $w_{j+1}$. Therefore, by repeating this procedure until the weight vector becomes monotone, we can obtain $\tilde{\bm{w}} \in \mathcal{W}_0$ such that $\max_{\bm{\theta} \in \Theta} \text{MSE}(\tilde{\bm{w}},\bm{\theta}) \leq \max_{\bm{\theta} \in \Theta} \text{MSE}(\bm{w},\bm{\theta})$.
\end{proof}

\begin{proof}[Proof of Lemma \ref{lem:zero_weight}]
We observe that
\begin{eqnarray*}
    \frac{\partial}{\partial w_j} \text{MSE}(\bm{w},\bm{\theta}) &=& 2 \theta_j \left( \sum_{i=1}^n w_i \theta_i - \theta_0 \right) + 2 w_j \left( 1/4 - \theta_j^2 \right) \\
    &=& 2 \theta_j \left( \sum_{i \neq j} w_i \theta_i - \theta_0 \right) + w_j / 2,
\end{eqnarray*}
where $\sum_{i \neq j} w_i \theta_i - \theta_0 \geq 0$ when (\ref{theta_condition}) holds. If $\bm{\theta}$ satisfies (\ref{theta_condition}) and $\theta_j \geq 0$, $\frac{\partial}{\partial w_j} \text{MSE}(\bm{w},\bm{\theta})$ is nonnegative for all $\bm{w} \in \mathcal{W}$. If $C \|R_j\| \geq 1/2$, then the $j$-th element of $\tilde{\bm{\theta}}(\theta_0)$ is nonnegative for any $\theta_0 \in [-1/2,0]$. Hence, we have
\begin{equation*}
    \frac{\partial}{\partial w_j} \text{MSE}(\bm{w},\tilde{\bm{\theta}}(\theta_0)) \ \geq \ 0 \ \ \text{for any $\theta_0 \in [-1/2,0]$ and $\bm{w} \in \mathcal{W}$.}
\end{equation*}
Therefore, if $C \|R_j\| \geq 1/2$, then we obtain $\text{MSE}(\bm{w},\tilde{\bm{\theta}}(\theta_0)) \geq \text{MSE}(\tilde{\bm{w}},\tilde{\bm{\theta}}(\theta_0))$, where $\tilde{\bm{w}} \equiv (w_1, \ldots, w_{j-1}, 0, w_{j+1}, \ldots, w_n)'$. As a result, combined with Lemma \ref{lem:monotone}, we obtain $\min_{\bm{w} \in \mathcal{W}} \max_{\bm{\theta} \in \Theta} \text{MSE}(\bm{w},\bm{\theta}) = \min_{\bm{w} \in \mathcal{W}_1} \max_{\bm{\theta} \in \Theta} \text{MSE}(\bm{w},\bm{\theta})$.
\end{proof}

\begin{proof}[Proof of Lemma \ref{lem:compare_gauss}]
Because $\hat{\bm{w}}$ minimizes $\max_{\bm{\theta} \in \Theta} \mathrm{MSE}(\bm{w},\bm{\theta})$, the lower bound is trivial. Hence, we consider the upper bound. Because $\mathrm{MSE}(\bm{w},\bm{\theta}) \leq \mathrm{MSE}_g(\bm{w},\bm{\theta})$ and $\Theta \subset \Theta_g$, we have
\begin{eqnarray*}
\frac{\max_{\bm{\theta} \in \Theta} \mathrm{MSE}(\tilde{\bm{w}},\bm{\theta})}{\max_{\bm{\theta} \in \Theta} \mathrm{MSE}(\hat{\bm{w}},\bm{\theta})} & \leq & \frac{\max_{\bm{\theta} \in \Theta_g} \mathrm{MSE}_g(\tilde{\bm{w}},\bm{\theta})}{\max_{\bm{\theta} \in \Theta} \mathrm{MSE}(\hat{\bm{w}},\bm{\theta})} \ = \ \frac{\min_{\bm{w} \in \mathcal{W}} \max_{\bm{\theta} \in \Theta_g} \mathrm{MSE}_g(\bm{w},\bm{\theta})}{\max_{\bm{\theta} \in \Theta} \mathrm{MSE}(\hat{\bm{w}},\bm{\theta})}.
\end{eqnarray*}

First, we derive a lower bound of $\max_{\bm{\theta} \in \Theta} \mathrm{MSE}(\hat{\bm{w}},\bm{\theta})$. 
Since $\hat{w}_i = 0$ if $C\|R_i\| \geq 1/2$ by Lemma \ref{lem:zero_weight}, we obtain
\begin{eqnarray*}
\max_{\bm{\theta} \in \Theta} \mathrm{MSE}(\hat{\bm{w}},\bm{\theta}) &\ge&  \mathrm{MSE}(\hat{\bm{w}},\tilde{\bm{\theta}}(0)) \\
&=& C^2 \left( \sum_{i=1}^n \hat{w}_i \|R_i\| \right)^2 + \sum_{i=1}^n \hat{w}_i^2 \left( \frac{1}{4} - C^2 \|R_i\|^2 \right).
\end{eqnarray*}
Next, we derive an upper bound of $\min_{\bm{w} \in \mathcal{W}} \max_{\bm{\theta} \in \Theta_g} \mathrm{MSE}_g(\bm{w},\bm{\theta})$. If $\bm{w} \in \mathcal{W}$ satisfies $\sum_{i=1}^n w_i = 1$, then $\max_{\bm{\theta} \in \Theta_g} \mathrm{MSE}_g(\bm{w},\bm{\theta})$ can be written as follows:
\[
\max_{\bm{\theta} \in \Theta_g} \mathrm{MSE}_g(\bm{w},\bm{\theta}) \ = \ C^2 \left( \sum_{i=1}^n w_i \|R_i\| \right)^2 + \frac{1}{4} \sum_{i=1}^n w_i^2.
\]
Because $\tilde{\bm{w}}$ satisfies $\sum_{i=1}^n \tilde{w}_i = 1$, we obtain
\begin{eqnarray*}
\min_{\bm{w} \in \mathcal{W}} \max_{\bm{\theta} \in \Theta_g} \mathrm{MSE}_g(\bm{w},\bm{\theta}) &=& \min_{\bm{w} \in \mathcal{W}: \sum_{i=1}^n w_i = 1} \max_{\bm{\theta} \in \Theta_g} \mathrm{MSE}_g(\bm{w},\bm{\theta}) \\
&=& \min_{\bm{w} \in \mathcal{W}: \sum_{i=1}^n w_i = 1} \left\{ C^2 \left( \sum_{i=1}^n w_i \|R_i\| \right)^2 + \frac{1}{4} \sum_{i=1}^n w_i^2 \right\} \\
& \leq & C^2 \left( \sum_{i=1}^n (\hat{w}_i / \hat{u}) \|R_i\| \right)^2 + \frac{1}{4} \sum_{i=1}^n (\hat{w}_i / \hat{u})^2 \\
&=& \hat{u}^{-2} \left\{ C^2 \left( \sum_{i=1}^n \hat{w}_i \|R_i\| \right)^2 + \frac{1}{4} \sum_{i=1}^n \hat{w}_i^2 \right\}.
\end{eqnarray*}
Therefore, we obtain
\begin{eqnarray*}
\frac{\max_{\bm{\theta} \in \Theta} \mathrm{MSE}(\tilde{\bm{w}},\bm{\theta})}{\max_{\bm{\theta} \in \Theta} \mathrm{MSE}(\hat{\bm{w}},\bm{\theta})} &\le& \left( 1 + \frac{C^2 \sum_{i=1}^n \hat{w}_i^2 \|R_i\|^2}{C^2 \left( \sum_{i=1}^n \hat{w}_i \|R_i\| \right)^2 + \sum_{i=1}^n \hat{w}_i^2 \left( \frac{1}{4} - C^2 \|R_i\|^2 \right)} \right) \hat{u}^{-2} \\
&=& \left( 1 + \frac{C^2 \sum_{i=1}^n \hat{w}_i^2 \|R_i\|^2}{C^2 \sum_{i \neq j} \hat{w}_i \hat{w}_j \|R_i\| \|R_j\|  + \frac{1}{4} \sum_{i=1}^n \hat{w}_i^2 } \right) \hat{u}^{-2} \\
& \leq & \left( 1 + \frac{C^2 \sum_{i=1}^n \hat{w}_i^2 \|R_i\|^2}{\frac{1}{4} \sum_{i=1}^n \hat{w}_i^2 } \right) \hat{u}^{-2}.
\end{eqnarray*}
Because $\hat{w}_i = 0$ holds if $C\|R_i\| \geq 1/2$, we have $C^2 \sum_{i=1}^n \hat{w}_i^2 \|R_i\|^2 \leq \frac{1}{4} \sum_{i=1}^n \hat{w}_i^2$. As a result, the upper bound of (\ref{compare_gauss}) is bounded above by $2 \hat{u}^{-2}$.
\end{proof}

\begin{proof}[Proof of Theorem \ref{thm:asymptotic}]
We present and prove a more general version of Theorem \ref{thm:asymptotic}, which covers the H\"{o}lder class (\ref{holder_class}) described in Remark \ref{rem:holder}.
For a fixed pair of the smoothness constants $\gamma\in (0,1]$ and $C$, we redefine $\Theta$ and $\Theta_g$ as
$$
\Theta \equiv \{\bm{\theta} \in [-1/2,1/2]^{n+1} : |\theta_i - \theta_j| \leq C \| R_i-R_j \|^\gamma \ \text{for all $i$ and $j$} \}
$$
and
$$
\Theta_g \ \equiv \ \left\{ \bm{\theta} \in \mathbb{R}^{n+1} : |\theta_i - \theta_j| \leq C \| R_i-R_j \|^\gamma \ \text{for all $i$ and $j$} \right\},
$$
respectively.
Let
\[
\hat{\bm{w}} \in \mathrm{arg} \min_{\bm{w} \in \mathcal{W}} \max_{\bm{\theta} \in \Theta} \mathrm{MSE}(\bm{w},\bm{\theta}) \ \ \text{and} \ \ \tilde{\bm{w}} \in \mathrm{arg} \min_{\bm{w} \in \mathcal{W}} \max_{\bm{\theta} \in \Theta_g} \mathrm{MSE}_g(\bm{w},\bm{\theta}).
\]
We generalize Assumption \ref{ass:asymptotic} and Theorem \ref{thm:asymptotic} as follows.

\begin{assumption}\label{ass:asymptotic_holder}
The running variables $\{R_1, \ldots, R_n\} \in \mathbb{R}$ satisfy the following conditions:
\begin{itemize}
\item[(i)] $0 \leq \|R_1\| \leq \ldots \leq \|R_n\| \leq 1$.
\item[(ii)] There exist constants $c_1 > c_0 > 0$ such that, for any sufficiently large $n \in \mathbb{N}$, $c_0 x - n^{-\frac{1}{2\gamma+1}} \leq F_n(x) \leq c_1 x  + n^{-\frac{1}{2\gamma+1}}$ for all $x \in [0,1]$, where $F_n(\cdot)$ is the empirical distribution of $\|R_i\|$ when the sample size is $n$, that is,
$F_n(x) \ \equiv \ \frac{1}{n} \sum_{i=1}^n 1\{\|R_i\| \leq x\}$.
\end{itemize}
\end{assumption}

\begin{theorem}\label{thm:asymptotic_holder}
Under Assumption \ref{ass:asymptotic_holder}, we obtain $\max_{\bm{\theta} \in \Theta} \text{MSE}(\hat{\bm{w}},\bm{\theta}) = O(n^{-\frac{2\gamma}{2\gamma+1}})$ and
\begin{equation}
\frac{\max_{\bm{\theta} \in \Theta} \text{MSE}(\tilde{\bm{w}},\bm{\theta})}{\max_{\bm{\theta} \in \Theta} \text{MSE}(\hat{\bm{w}},\bm{\theta})} \ \to \ 1 . \nonumber
\end{equation}
\end{theorem}

In the following, we prove Theorem \ref{thm:asymptotic_holder}.
As a preliminary step, we note that the results of Theorem \ref{thm:max_problem} and Lemmas \ref{lem:monotone} and \ref{lem:zero_weight} also hold for the H\"{o}lder class. These can be established using similar arguments, after redefining $\tilde{\bm{\theta}}(t)$ and $\mathcal{W}_1$ as
\[
\tilde{\bm{\theta}}(t) \equiv \left( \tilde{\theta}_0(t), \tilde{\theta}_1(t), \ldots, \tilde{\theta}_n(t) \right)' \ \text{and} \ \tilde{\theta}_i(t) \equiv \min \{t + C \|R_i\|^\gamma, 1/2 \} \ \text{for $i = 0,1, \ldots, n$,}
\]
and $\mathcal{W}_1 \equiv \left\{ \bm{w} \in \mathcal{W}_0 : w_i = 0 \ \text{if $C\|R_i\|^\gamma \geq 1/2$} \right\}$, respectively.
Similarly, the bounds derived in Lemma \ref{lem:compare_gauss} hold with the upper bound replaced as follows:
\begin{align}
1 \ \leq \ \frac{\max_{\bm{\theta} \in \Theta} \text{MSE}(\tilde{\bm{w}},\bm{\theta})}{\max_{\bm{\theta} \in \Theta} \text{MSE}(\hat{\bm{w}},\bm{\theta})} \ \leq \ \hat{u}^{-2} \left( 1 + \frac{C^2 \sum_{i=1}^n \hat{w}_i^2 \|R_i\|^{2\gamma}}{\frac{1}{4} \sum_{i=1}^n \hat{w}_i^2 } \right). \label{compare_gauss_holder}
\end{align}
Also, the optimal weights that minimize $\max_{\bm{\theta} \in \Theta_g} \mathrm{MSE}_g(\bm{w},\bm{\theta})$ over $\bm{w}\in\mathbb{R}^n$ for the H\"{o}lder class satisfy $\sum_{i=1}^n w_i = 1$ and $w_i \geq 0$ for all $i$, similar to the optimal weights for the Lipschitz class.
This follows from the arguments in Appendix \ref{sec:gauss_weights}.
Finally, the maximum MSE of the estimator $\hat{p}_0(\bm{w})$ with $\sum_{i=1}^n w_i = 1$ over $\Theta_g$ is given by
$$
\max_{\bm{\theta} \in \Theta_g} \mathrm{MSE}_g(\bm{w},\bm{\theta}) =C^2 \left( \sum_{i=1}^n w_i \|R_i\|^\gamma \right)^2 + \frac{1}{4}\sum_{i=1}^n w_i^2.
$$

Now, let $\alpha=\frac{1}{2\gamma+1}$.
We consider a sufficiently large $n \in \mathbb{N}$ such that $c_0 x - n^{-\alpha} \leq F_n(x) \leq c_1 x  + n^{-\alpha}$ for all $x \in [0,1]$.
For any $\epsilon \ge 0$, let $N(\epsilon) \equiv \max\{ i \in \{1,\ldots,n\} : \|R_{i}\| \leq \epsilon \}$. Because $\|R_1\| \leq \cdots \leq \|R_n\|$, we have $N(\epsilon) = n F_n(\epsilon)$ for $\epsilon \in [0,1]$. Hence, under Assumption \ref{ass:asymptotic_holder}, we obtain
\[
c_0 n \epsilon - n^{1-\alpha} \ \leq \ N(\epsilon) \ \leq \ c_1 n \epsilon + n^{1-\alpha}, \ \ \ \forall \epsilon \in [0,1].
\]

First, we discuss the convergence rate of $\hat{p}_0(\hat{\bm{w}})$. Since $\mathrm{MSE}(\bm{w},\bm{\theta}) \leq \mathrm{MSE}_g(\bm{w},\bm{\theta})$ and $\Theta \subset \Theta_g$, we have
\begin{eqnarray}
\max_{\bm{\theta} \in \Theta} \mathrm{MSE}(\hat{\bm{w}},\bm{\theta})  &=& \min_{\bm{w} \in \mathcal{W}} \max_{\bm{\theta} \in \Theta} \mathrm{MSE}(\bm{w},\bm{\theta}) \nonumber \\
& \leq & \min_{\bm{w} \in \mathcal{W}} \max_{\bm{\theta} \in \Theta_g} \mathrm{MSE}_g(\bm{w},\bm{\theta}) \ = \ \max_{\bm{\theta} \in \Theta_g} \mathrm{MSE}_g(\tilde{\bm{w}},\bm{\theta}).\label{upper_minimax}\nonumber
\end{eqnarray}
This implies that if $\max_{\bm{\theta} \in \Theta_g} \mathrm{MSE}_g(\tilde{\bm{w}},\bm{\theta})$ converges to zero as $n\rightarrow\infty$, $\max_{\bm{\theta} \in \Theta} \mathrm{MSE}(\hat{\bm{w}},\bm{\theta})$ converges to zero no slower than $\max_{\bm{\theta} \in \Theta_g} \mathrm{MSE}_g(\tilde{\bm{w}},\bm{\theta})$.
If $2 c_0^{-1} n^{-\alpha} \leq \epsilon \leq 1$, then $N(\epsilon) \geq c_0 n \epsilon-n^{1-\alpha} \geq n^{1-\alpha}>0$, and we obtain
\begin{eqnarray*}
\max_{\bm{\theta} \in \Theta_g} \mathrm{MSE}_g(\tilde{\bm{w}},\bm{\theta}) &=& \min_{\bm{w} \in \mathcal{W}: \sum_{i=1}^n w_i = 1} \max_{\bm{\theta} \in \Theta_g} \mathrm{MSE}_g(\bm{w},\bm{\theta}) \\
&=& \min_{\bm{w} \in \mathcal{W}: \sum_{i=1}^n w_i = 1} \left\{ C^2 \left( \sum_{i=1}^n w_i \|R_i\|^\gamma \right)^2 + \frac{1}{4} \sum_{i=1}^n w_i^2 \right\} \\
& \leq & C^2 \left( \frac{1}{N(\epsilon)} \sum_{i=1}^{N(\epsilon)} \|R_i\|^\gamma \right)^2 + \frac{1}{4N(\epsilon)} \\
& \leq & C^2 \epsilon^{2\gamma} + \frac{1}{4N(\epsilon)} \ \leq \ C^2 \epsilon^{2\gamma} + \frac{1}{4(c_0 n \epsilon-n^{1-\alpha})},
\end{eqnarray*}
where the first equality holds since $\tilde{\bm{w}}$ satisfies $\sum_{i=1}^n \tilde{w}_i = 1$,
the first inequality follows by setting 
$$
\bm{w} = \left( \underbrace{\frac{1}{N(\epsilon)}, \ldots, \frac{1}{N(\epsilon)}}_{N(\epsilon)}, 0, \ldots , 0 \right)',
$$ 
and the second inequality holds since $\|R_i\| \le \epsilon$ for $i=1,\dots,N(\epsilon)$.
If we set $\epsilon = O(n^{-\frac{1}{2\gamma+1}})$ satisfying $\epsilon \geq 2 c_0^{-1} n^{-\alpha}$, which exists for $\alpha\ge \frac{1}{2\gamma+1}$, then the right-hand side becomes $O(n^{-\frac{2\gamma}{2\gamma+1}})$.
For example, if we set $\epsilon=2 c_0^{-1} n^{-\frac{1}{2\gamma+1}}$, which satisfies $\epsilon \geq 2 c_0^{-1} n^{-\alpha}$ for $\alpha\ge \frac{1}{2\gamma+1}$, then the right-hand side becomes
$$
4^\gamma C^2  c_0^{-2\gamma} n^{-\frac{2\gamma}{2\gamma+1}} + \frac{1}{4(2 n^{\frac{2\gamma}{2\gamma+1}}-n^{1-\alpha})}=\left(4^\gamma C^2  c_0^{-2\gamma}+\frac{1}{4(2-n^{\frac{1}{2\gamma+1}-\alpha})}\right)n^{-\frac{2\gamma}{2\gamma+1}}=O(n^{-\frac{2\gamma}{2\gamma+1}}).
$$
Hence, $\max_{\bm{\theta} \in \Theta} \mathrm{MSE}(\hat{\bm{w}},\bm{\theta})\le \max_{\bm{\theta} \in \Theta_g} \mathrm{MSE}_g(\tilde{\bm{w}},\bm{\theta})=O(n^{-\frac{2\gamma}{2\gamma+1}})$.

Next, we show that $\frac{\max_{\bm{\theta} \in \Theta} \text{MSE}(\tilde{\bm{w}},\bm{\theta})}{\max_{\bm{\theta} \in \Theta} \text{MSE}(\hat{\bm{w}},\bm{\theta})} \ \to \ 1$.
For any $\bm{w} \in \mathcal{W}$, we observe that
\begin{eqnarray*}
\max_{\bm{\theta} \in \Theta} \mathrm{MSE}(\hat{\bm{w}},\bm{\theta}) & \geq & \max_{\bm{\theta} \in \Theta} \left( \sum_{i=1}^n \hat{w}_i \theta_i - \theta_0 \right)^2 \ \geq \ \max_{\bm{\theta} \in \Theta : \theta_0 = -1/2} \left( \sum_{i=1}^n \hat{w}_i \theta_i + \frac{1}{2} \right)^2 \\
& = & \max_{\bm{\theta} \in \Theta: \theta_0 = -1/2} \left\{ \sum_{i=1}^n \hat{w}_i (\theta_i + 1/2) + \frac{1}{2} \left( 1 - \sum_{i=1}^n \hat{w}_i \right)  \right\}^2 \\
& \geq & \frac{1}{4} \left( 1 - \sum_{i=1}^n \hat{w}_i \right)^2 \ = \ \frac{1}{4} \left( 1 - \hat{u} \right)^2,
\end{eqnarray*}
where the last inequality follows from $\theta_i + 1/2 \geq 0$. This implies that $\hat{u}$ converges to one because $\max_{\bm{\theta} \in \Theta} \mathrm{MSE}(\hat{\bm{w}},\bm{\theta})$ converges to zero.
From the finite-sample bounds (\ref{compare_gauss_holder}), it suffices to show  that 
\begin{equation}
\frac{C^2 \sum_{i=1}^n \hat{w}_{i}^2 \|R_{i}\|^{2\gamma}}{\frac{1}{4} \sum_{i=1}^n \hat{w}_{i}^2 } \ \to \ 0. \label{sc_asymptotics}
\end{equation}

If $2 c_0^{-1} n^{-\alpha} \leq \epsilon \leq 1$, we can bound the left-hand side of (\ref{sc_asymptotics}) as follows:
\begin{eqnarray}
\frac{C^2 \sum_{i=1}^n \hat{w}_{i}^2 \|R_{i}\|^{2\gamma}}{\frac{1}{4} \sum_{i=1}^n \hat{w}_{i}^2 } & = & \frac{4C^2 \sum_{i=1}^{N(\epsilon)} \hat{w}_i^2 \|R_i\|^{2\gamma} + 4C^2 \sum_{i=N(\epsilon)+1}^n \hat{w}_i^2 \|R_i\|^{2\gamma} }{\sum_{i=1}^n \hat{w}_{i}^2} \nonumber\\
& \leq & 4C^2 \left\{ \frac{\epsilon^{2\gamma} \left( \sum_{i=1}^{N(\epsilon)} \hat{w}_i^2 \right) + \hat{w}_{N(\epsilon)}^2 \left( \sum_{i=N(\epsilon)+1}^n \|R_i\|^{2\gamma} \right)}{\sum_{i=1}^n \hat{w}_{i}^2} \right\} \nonumber\\
& \leq & 4 C^2 \left\{ \epsilon^{2\gamma} + \frac{n \hat{w}_{N(\epsilon)}^2}{\sum_{i=1}^n \hat{w}_{i}^2} \right\}, \label{sc_asymptotics_ineq}
\end{eqnarray}
where the first inequality follows since $\|R_i\| \le \epsilon$ for $i=1,\dots,N(\epsilon)$ and $\hat{w}_i \le \hat{w}_{N(\epsilon)}$ for $i=N(\epsilon)+1,\ldots,n$ (by the extension of Lemma \ref{lem:monotone} for the H\"{o}lder class), and the second inequality follows since $\|R_i\|\leq 1$ for all $i=N(\epsilon)+1,\ldots,n$ by Assumption \ref{ass:asymptotic_holder}.

To further bound the right-hand side of (\ref{sc_asymptotics_ineq}), we obtain a lower bound on $\sum_{i=1}^n \hat{w}_{i}^2$ and an upper bound on $\hat{w}_{N(\epsilon)}^2$.
A lower bound on $\sum_{i=1}^n \hat{w}_{i}^2$ is given by
\begin{align}
\sum_{i=1}^n \hat{w}_{i}^2 = \hat{u}^{2} \sum_{i=1}^n (\hat{w}_{i} / \hat{u})^2 \geq \hat{u}^2 n^{-1}, \label{sc_asymptotics_lb}
\end{align}
where the inequality follows from the fact that $\sum_{i=1}^nw_i^2\ge n^{-1}$ for any $(w_1,\ldots,w_n)\in\mathbb{R}^n$ such that $\sum_{i=1}^nw_i=1$.
To obtain an upper bound on $\hat{w}_{N(\epsilon)}^2$, we observe that, if $2 c_0^{-1} n^{-\alpha} \leq \epsilon \leq 1$,
\begin{eqnarray*}
O(n^{-\frac{2\gamma}{2\gamma+1}}) & = & \max_{\bm{\theta} \in \Theta} \text{MSE}(\hat{\bm{w}},\bm{\theta}) \ \geq \ \text{MSE}(\hat{\bm{w}},\tilde{\bm{\theta}}(0)) \ge \left( \sum_{i=1}^n \hat{w}_i \min \{C \|R_i\|^\gamma, 1/2 \} \right)^2\\
& = & C^2 \left( \sum_{i=1}^n \hat{w}_i \| R_i \|^\gamma \right)^2 \ \geq \ C^2 \hat{w}_{N(\epsilon)}^2 \left( \sum_{i=1}^{N(\epsilon)} \| R_i \|^\gamma \right)^2,
\end{eqnarray*}
where the second equality follows since $\hat w_i=0$ if $C \| R_i \|^\gamma\ge 1/2$ (by the extension of Lemma \ref{lem:zero_weight} for the H\"{o}lder class) and the last inequality follows since $\hat{w}_i \ge \hat{w}_{N(\epsilon)}$ for $i=1,\ldots,N(\epsilon)$ (by the extension of Lemma \ref{lem:monotone} for the H\"{o}lder class).
Below, we show that
\begin{align}
\sum_{i=1}^{N(\epsilon)} \| R_i \|^\gamma \ge \frac{(N(\epsilon)-n^{1-\alpha})^2}{2c_1n} \ \ \text{if } \epsilon > 2 c_0^{-1} n^{-\alpha}. \label{sc_asymptotics_integral}
\end{align}
Once it is shown, it follows from the assumption $N(\epsilon)\ge c_0 n \epsilon - n^{1-\alpha}$ that
\begin{eqnarray*}
O(n^{-\frac{2\gamma}{2\gamma+1}}) & \geq & C^2 \hat{w}_{N(\epsilon)}^2 \frac{(N(\epsilon)-n^{1-\alpha})^4}{4c_1^2n^2} \\
& \geq & C^2 \hat{w}_{N(\epsilon)}^2 \frac{(c_0n\epsilon-2n^{1-\alpha})^4}{4c_1^2n^2}=\frac{C^2}{4c_1^2} \hat{w}_{N(\epsilon)}^2(c_0n^{1/2}\epsilon-2n^{1/2-\alpha})^4,
\end{eqnarray*}
which implies that there exists $c_2>0$ (which is independent of $\epsilon$ and $n$) such that
\begin{align}
\hat{w}_{N(\epsilon)}^2\le\frac{c_2 n^{-\frac{2\gamma}{2\gamma+1}}}{(c_0n^{1/2}\epsilon-2n^{1/2-\alpha})^{4}} \ \ \text{if } \epsilon > 2 c_0^{-1} n^{-\alpha}. \label{sc_asymptotics_ub}
\end{align}
Now we show (\ref{sc_asymptotics_integral}).
Since $\|R_i\|\leq 1$ for all $i=1,\ldots,N(\epsilon)$ by Assumption \ref{ass:asymptotic_holder}, $\sum_{i=1}^{N(\epsilon)} \| R_i \|^\gamma\ge \sum_{i=1}^{N(\epsilon)} \| R_i \|$.
Therefore, it suffices to show that, if $\epsilon > 2 c_0^{-1} n^{-\alpha}$, $\sum_{i=1}^{N(\epsilon)} \| R_i \| \ge \frac{(N(\epsilon)-n^{1-\alpha})^2}{2c_1n}$.
Let $x^*(\epsilon)\equiv\frac{N(\epsilon)-n^{1-\alpha}}{c_1n}$, which is the unique solution to $c_1nx+n^{1-\alpha}=N(\epsilon)$ with respect to $x$.
If $\epsilon > 2 c_0^{-1} n^{-\alpha}$, we must have $0< x^*(\epsilon) \le \|R_{N(\epsilon)}\|$, since $c_1n\cdot 0+n^{1-\alpha}< c_0 n\epsilon -n^{1-\alpha}\le N(\epsilon)$ and $c_1n\|R_{N(\epsilon)}\|+n^{1-\alpha}\ge N(\|R_{N(\epsilon)}\|)= N(\epsilon)$ under Assumption \ref{ass:asymptotic_holder}.
Here, $N(\|R_{N(\epsilon)}\|)= N(\epsilon)$ holds by the definition of $N(\epsilon)$.
Also, let
\begin{eqnarray*}
g_\epsilon(x)\equiv\begin{cases}
    c_1 nx+n^{1-\alpha} & \text{if } 0< x\le x^*(\epsilon),\\
    N(\epsilon) & \text{if } x^*(\epsilon)< x\le 1.
    \end{cases}
\end{eqnarray*}
Note that $g_\epsilon(x)\ge N(x)$ for all $x\in (0,\|R_{N(\epsilon)}\|]$, since $g_\epsilon(x)=c_1 nx+n^{1-\alpha}\ge N(x)$ if $0< x\le x^*(\epsilon)$ by Assumption \ref{ass:asymptotic_holder} and $g_\epsilon(x)=N(\epsilon)=N(\|R_{N(\epsilon)}\|)\ge N(x)$ if $x^*(\epsilon) < x \le \|R_{N(\epsilon)}\|$.
Therefore, we have
$$
\int_0^{\|R_{N(\epsilon)}\|}(N(\epsilon)-N(x))dx\ge \int_0^{\|R_{N(\epsilon)}\|}(N(\epsilon)-g_\epsilon(x))dx.
$$
We calculate each of both sides of the above inequality:
\begin{eqnarray*}
    &&\int_0^{\|R_{N(\epsilon)}\|}(N(\epsilon)-N(x))dx\\
    &&=\int_0^{\|R_{1}\|}(N(\epsilon)-0)dx+\int_{\|R_{1}\|}^{\|R_{2}\|}(N(\epsilon)-1)dx+\cdots+\int_{\|R_{N(\epsilon)-1}\|}^{\|R_{N(\epsilon)}\|}(N(\epsilon)-(N(\epsilon)-1))dx\\
    &&=N(\epsilon)\|R_1\|+(N(\epsilon)-1)(\|R_2\|-\|R_1\|)+\cdots+(\|R_{N(\epsilon)}\|-\|R_{N(\epsilon)-1}\|)\\
    &&=\sum_{i=1}^{N(\epsilon)} \| R_i \|
\end{eqnarray*}
and
\begin{eqnarray*}
    \int_0^{\|R_{N(\epsilon)}\|}(N(\epsilon)-g_\epsilon(x))dx
    & = &\int_0^{x^*(\epsilon)}(N(\epsilon)-c_1 nx-n^{1-\alpha})dx = \frac{(N(\epsilon)-n^{1-\alpha})^2}{2c_1n}.
\end{eqnarray*}
Thus, we obtain $\sum_{i=1}^{N(\epsilon)} \| R_i \| \ge \frac{(N(\epsilon)-n^{1-\alpha})^2}{2c_1n}$.
See Figure \ref{fig:bound_N_ep} for the intuition for this argument.

\begin{figure}[H]
    \centering
    \includegraphics[width=12cm]{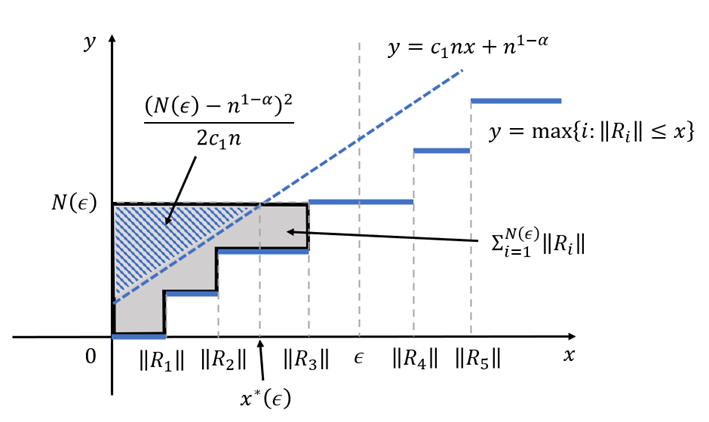}
    \caption{The blue solid line denotes a function $y = \max\{i: \|R_i\| \leq x\}$ and the blue dotted line denotes a function $y = c_1 n x + n^{1-\alpha}$. The area of the gray region is $\sum_{i=1}^{N(\epsilon)} \| R_i \|$ and the area of the shaded triangle is $\frac{(N(\epsilon)-n^{1-\alpha})^2}{2c_1 n}$.}
    \label{fig:bound_N_ep}
\end{figure}

Finally, combining the lower bound (\ref{sc_asymptotics_lb}) on $\sum_{i=1}^n \hat{w}_{i}^2$ and the upper bound (\ref{sc_asymptotics_ub}) on $\hat{w}_{N(\epsilon)}^2$ obtained above yields the following bound on the right-hand side of (\ref{sc_asymptotics_ineq}): if $\epsilon > 2 c_0^{-1} n^{-\alpha}$,
\begin{eqnarray*}
4 C^2 \left\{ \epsilon^{2\gamma} + \frac{n \hat{w}_{N(\epsilon)}^2}{\sum_{i=1}^n \hat{w}_{i}^2} \right\} & \leq & 4 C^2 \left\{ \epsilon^{2\gamma} + \frac{c_2 n^{\frac{2\gamma+2}{2\gamma+1}}}{(c_0n^{1/2}\epsilon-2n^{1/2-\alpha})^{4}}\hat u^{-2} \right\}\\
& = & 4 C^2 \left\{ \epsilon^{2\gamma} + \frac{c_2}{(c_0n^{\frac{\gamma}{4\gamma+2}}\epsilon-2n^{\frac{\gamma}{4\gamma+2}-\alpha})^{4}}\hat u^{-2} \right\}.
\end{eqnarray*}
If we set $\epsilon = n^{-\beta}$ for some $0<\beta<\min\{\frac{\gamma}{4\gamma+2},\alpha\}$ (which is equivalent to $0<\beta<\frac{\gamma}{4\gamma+2}$ as $\alpha=\frac{1}{2\gamma+1}\ge \frac{\gamma}{4\gamma+2}$), then $\epsilon > 2 c_0^{-1} n^{-\alpha}$ for any sufficiently large $n$, and we have
\begin{eqnarray*}
4 C^2 \left\{ \epsilon^{2\gamma} + \frac{c_2}{(c_0n^{\frac{\gamma}{4\gamma+2}}\epsilon-2n^{\frac{\gamma}{4\gamma+2}-\alpha})^{4}}\hat u^{-2} \right\}&=&4 C^2 \left\{ n^{-2\beta\gamma} + \frac{c_2}{(c_0n^{\frac{\gamma}{4\gamma+2}-\beta}-2n^{\frac{\gamma}{4\gamma+2}-\alpha})^{4}}\hat u^{-2} \right\}\\
&=&o(1).
\end{eqnarray*}
Therefore, we obtain the desired result because (\ref{sc_asymptotics}) holds.

\end{proof}

\begin{proof}[Proof of Theorem \ref{thm:inference}]
Fix $\bm{p} = (\bm{p}_{+}',\bm{p}_{-}')' = (p_{0,+}, \ldots, p_{n_{+},+}, p_{0,-}, \ldots, p_{n_{-},-})' \in \mathcal{P}(\tau_0)$. Define
\begin{eqnarray*}
\bm{Y} &\equiv& (Y_{1,+}, \ldots, Y_{n_{+},+}, Y_{1,-}, \ldots, Y_{n_{-},-})', \\
\tilde{\bm{Y}} &\equiv& (\tilde{Y}_{1,+}, \ldots, \tilde{Y}_{n_{+},+}, \tilde{Y}_{1,-}, \ldots, \tilde{Y}_{n_{-},-})', \\
\hat{\tau}(\bm{Y}) &\equiv& \sum_{i=1}^{n_{+}}w_{i,+} \left( Y_{i,+} - \frac{1}{2} \right) - \sum_{i=1}^{n_{-}}w_{i,-} \left( Y_{i,-} - \frac{1}{2} \right),
\end{eqnarray*}
where $\bm{Y}$ and $\tilde{\bm{Y}}$ follow Bernoulli distribution with parameters $\bm{p}$ and $\tilde{\bm{p}}(p_{0,+},\tau_0)$, respectively. Then, $\hat{\tau}(\bm{Y})$ is increasing function in $Y_{i,+}$ and decreasing in $Y_{i,-}$. Because $\tilde{Y}_{i,+}$ has first-order stochastic dominance over $Y_{i,+}$ and $-\tilde{Y}_{i,-}$ has first-order stochastic dominance over $-Y_{i,-}$, it follows from Lemma 1 of \cite{ishihara2023bandwidth} that we have
\begin{eqnarray*}
    P(\hat{\tau}(\bm{Y}) > \gamma) \ \leq \ P(\hat{\tau}(\tilde{\bm{Y}}) > \gamma).
\end{eqnarray*}
In addition, we have $\tilde{\bm{p}}(p,\tau_0) \in \mathcal{P}(\tau_0)$ for any $p \in [\max \{0,\tau_0\}, \min \{1,1+\tau_0 \}]$. Hence, we obtain (\ref{inference_max}).
\end{proof}

\section{Minimax estimation for the average treatment effect}\label{sec:ATE}
In this section, we consider the same setting in Remark \ref{rem:ATE} at the end of Section \ref{sec:main} and provide how to compute the maximum MSE for the ATE. We consider the following estimator of the ATE:
\begin{eqnarray*}
\hat{\tau}(\bm{w}_{+}, \bm{w}_{-}) &\equiv& \hat{p}_{0,+}(\bm{w}_{+}) - \hat{p}_{0,-}(\bm{w}_{-}) \ = \ \sum_{i=1}^{n_{+}}w_{i,+} \left( Y_{i,+} - \frac{1}{2} \right) - \sum_{i=1}^{n_{-}}w_{i,-} \left( Y_{i,-} - \frac{1}{2} \right) ,
\end{eqnarray*}
where $\bm{w}_{+} \equiv (w_{1,+}, \ldots, w_{n_{+},+})'$ and $\bm{w}_{-} \equiv (w_{1,-}, \ldots, w_{n_{-},-})'$. Similar to Section \ref{sec:main}, 
we assume that $\bm{w}_{+} \in \mathcal{W}_{+}$ and $\bm{w}_{-} \in \mathcal{W}_{-}$ hold, where
\begin{eqnarray*}
\mathcal{W}_{+} & \equiv & \left\{ \bm{w}_{+} \in \mathbb{R}^{n_{+}} : \sum_{i=1}^{n_{+}} w_{i,+} \leq 1 \ \text{and} \ w_{i,+} \geq 0 \ \text{for all $i$} \right\}, \\
\mathcal{W}_{-} & \equiv & \left\{ \bm{w}_{-} \in \mathbb{R}^{n_{-}} : \sum_{i=1}^{n_{-}} w_{i,-} \leq 1 \ \text{and} \ w_{i,-} \geq 0 \ \text{for all $i$} \right\}.
\end{eqnarray*}
Suppose that $Y_{i,+}$ and $Y_{i,-}$ follow Bernoulli distribution with parameters $p_{i,+}$ and $p_{i,-}$, respectively. Letting $\theta_{i,+} \equiv p_{i,+} - 1/2$, $\theta_{i,-} \equiv p_{i,-} - 1/2$, $\bm{\theta}_{+} \equiv (\theta_{0,+}, \ldots, \theta_{n_{+},+})'$, and $\bm{\theta}_{-} \equiv (\theta_{0,-}, \ldots, \theta_{n_{-},-})'$, we consider the following parameter spaces:
\begin{eqnarray*}
\Theta_{+} &\equiv & \left\{ \bm{\theta}_{+} \in [-1/2,1/2]^{n_{+}+1} : |\theta_{i,+} - \theta_{j,+} | \leq C \|R_{i,+} - R_{j,+}\| \ \text{for all $i$ and $j$} \right\}, \\
\Theta_{-} &\equiv & \left\{ \bm{\theta}_{-} \in [-1/2,1/2]^{n_{-}+1} : |\theta_{i,-} - \theta_{j,-} | \leq C \|R_{i,-} - R_{j,-}\| \ \text{for all $i$ and $j$} \right\},
\end{eqnarray*}
where $R_{0,+} = R_{0,-} = 0$. Similar to Section \ref{sec:main}, we assume $\|R_{0,+}\| \leq \|R_{1,+}\| \leq \cdots \leq \|R_{n_{+},+}\|$ and $\|R_{0,-}\| \leq \|R_{1,-}\| \leq \cdots \leq \|R_{n_{-},-}\|$.

Because the ATE is $\tau \equiv \theta_{0,+} - \theta_{0,-}$, the MSE of $\hat{\tau}(\bm{w}_{+}, \bm{w}_{-})$ can be written as follows:
\begin{eqnarray*}
& &\text{MSE}_{ate}(\bm{w}_{+},\bm{w}_{-}, \bm{\theta}_{+},\bm{\theta}_{-}) \ \equiv \ E\left[ \left\{ \hat{\tau}(\bm{w}_{+}, \bm{w}_{-}) - \tau \right\}^2 \right] \\
&=& E\left[ \left\{ \sum_{i=1}^{n_{+}}w_{i,+} \left( Y_{i,+} - \frac{1}{2} \right) - \theta_{0,+} \right\}^2 \right] + E\left[ \left\{ \sum_{i=1}^{n_{-}}w_{i,-} \left( Y_{i,-} - \frac{1}{2} \right) - \theta_{0,-} \right\}^2 \right] \\
& & \hspace{0.8in} - 2 E\left[ \left\{ \sum_{i=1}^{n_{+}}w_{i,+} \left( Y_{i,+} - \frac{1}{2} \right) - \theta_{0,+} \right\} \left\{ \sum_{i=1}^{n_{-}}w_{i,-} \left( Y_{i,-} - \frac{1}{2} \right) - \theta_{0,-} \right\}  \right] \\
&=& b_{+}(\bm{w}_{+},\bm{\theta}_{+})^2 + V_{+}(\bm{w}_{+},\bm{\theta}_{+}) \\
& & \hspace{0.5in} + b_{-}(\bm{w}_{-}, \bm{\theta}_{-})^2 + V_{-}(\bm{w}_{-}, \bm{\theta}_{-}) - 2 b_{+}(\bm{w}_{+},\bm{\theta}_{+}) b_{-}(\bm{w}_{-}, \bm{\theta}_{-}),
\end{eqnarray*}
where 
\begin{eqnarray*}
& & b_{+}(\bm{w}_{+},\bm{\theta}_{+}) \equiv \sum_{i=1}^{n_{+}}w_{i,+}\theta_{i,+} - \theta_{0,+}, \ \ \ \ V_{+}(\bm{w}_{+},\bm{\theta}_{+}) \equiv \sum_{i=1}^{n_{+}}w_{i,+}^2 \left( \frac{1}{4} - \theta_{i,+}^2 \right),  \\
& & b_{-}(\bm{w}_{-}, \bm{\theta}_{-}) \equiv \sum_{i=1}^{n_{-}}w_{i,-}\theta_{i,-} - \theta_{0,-}, \ \ \ \ V_{-}(\bm{w}_{-}, \bm{\theta}_{-}) \equiv \sum_{i=1}^{n_{-}}w_{i,-}^2 \left( \frac{1}{4} - \theta_{i,-}^2 \right).
\end{eqnarray*}
We define
\begin{eqnarray*}
\tilde{\bm{\theta}}_{+}(\theta_{0,+}) & \equiv & \left(\theta_{0,+}, \min\{\theta_{0,+} + C \|R_{1,+}\|, 1/2 \}, \ldots, \min\{\theta_{0,+} + C \|R_{n_{+},+}\|, 1/2 \} \right)', \\
\tilde{\bm{\theta}}_{-}(\theta_{0,-}) & \equiv & \left(\theta_{0,-}, \max\{\theta_{0,-} - C \|R_{1,-}\|, -1/2 \}, \ldots, \max\{\theta_{0,-} - C \|R_{n_{-},-}\|, -1/2 \} \right)'.
\end{eqnarray*}
Similar to Theorem \ref{thm:max_problem}, we obtain the following theorem.

\begin{theorem}\label{thm:max_ate}
For $\bm{w}_{+} \in \mathcal{W}_{+}$ and $\bm{w}_{-} \in \mathcal{W}_{-}$, we obtain
\begin{eqnarray}
& & \max_{\bm{\theta}_{+} \in \Theta_{+}, \bm{\theta}_{-} \in \Theta_{-}} \mathrm{MSE}_{ate}(\bm{w}_{+},\bm{w}_{-}, \bm{\theta}_{+},\bm{\theta}_{-}) \nonumber \\
&=& \max_{\theta_{0,+} \in [-1/2,0], \ \theta_{0,-} \in [0,1/2]} \mathrm{MSE}_{ate} \left( \bm{w}_{+},\bm{w}_{-}, \tilde{\bm{\theta}}_{+}(\theta_{0,+}),\tilde{\bm{\theta}}_{-}(\theta_{0,-}) \right). \label{max_ate}
\end{eqnarray}
\end{theorem}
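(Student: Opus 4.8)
The plan is to replay, in the two-sample problem, the chain of reductions behind Theorem~\ref{thm:max_problem} --- a symmetry/reflection step followed by a monotonicity step --- with the only new ingredient being the cross-term coupling the two samples. First I would rewrite the objective by expanding the square:
$$
\mathrm{MSE}_{ate}(\bm{w}_{+},\bm{w}_{-},\bm{\theta}_{+},\bm{\theta}_{-})\;=\;\bigl(b_{+}(\bm{w}_{+},\bm{\theta}_{+})-b_{-}(\bm{w}_{-},\bm{\theta}_{-})\bigr)^{2}+V_{+}(\bm{w}_{+},\bm{\theta}_{+})+V_{-}(\bm{w}_{-},\bm{\theta}_{-}),
$$
so the bias enters only through the scalar $b\equiv b_{+}-b_{-}$, while each $V_{\pm}$ depends on $\bm{\theta}_{\pm}$ only through the squares $\theta_{i,\pm}^{2}$.

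Second, the symmetry and reflection reduction. Each $\Theta_{\pm}$ is centrosymmetric and each $V_{\pm}$ is invariant under $\bm{\theta}_{\pm}\mapsto-\bm{\theta}_{\pm}$; since the maximum is taken over both groups we may flip the sign of each group separately, and because $(b_{+}-b_{-})^{2}\le(|b_{+}|+|b_{-}|)^{2}$, the maximum is attained at a point where $b_{+}$ and $b_{-}$ point in opposite directions, say $b_{+}\ge 0$ and $b_{-}\le 0$. On top of this I would apply, within the $+$ group, the reflection $\theta_{i,+}\mapsto\max\{\theta_{i,+},\,2\theta_{0,+}-\theta_{i,+}\}$ of Lemma~\ref{lem:max_prob_1} and, within the $-$ group, its mirror image $\theta_{i,-}\mapsto\min\{\theta_{i,-},\,2\theta_{0,-}-\theta_{i,-}\}$, together with joint-sign-flip and translation moves that place $\theta_{0,+}$ in $[-1/2,0]$ and $\theta_{0,-}$ in $[0,1/2]$, carried out in an order that keeps $(\bm{\theta}_{+},\bm{\theta}_{-})$ inside $\Theta_{+}\times\Theta_{-}$ and never decreases $\mathrm{MSE}_{ate}$. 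The outcome is that it suffices to maximize over configurations satisfying $\theta_{0,+}\in[-1/2,0]$, $\theta_{i,+}\ge\theta_{0,+}$ for all $i$, $\theta_{0,-}\in[0,1/2]$, $\theta_{i,-}\le\theta_{0,-}$ for all $i$; these in turn force $b_{+}\ge 0$ and $b_{-}\le 0$, hence $b\ge 0$. The feasibility and monotonicity checks for these moves are of exactly the kind carried out in the proofs of Lemma~\ref{lem:max_prob_1} and Theorem~\ref{thm:max_problem}.

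Third, the monotonicity step. On the reduced region,
$$
\frac{\partial}{\partial\theta_{j,+}}\mathrm{MSE}_{ate}=2w_{j,+}\Bigl(\textstyle\sum_{i\ne j}w_{i,+}\theta_{i,+}-\theta_{0,+}-b_{-}\Bigr),\qquad
\frac{\partial}{\partial\theta_{j,-}}\mathrm{MSE}_{ate}=-2w_{j,-}\Bigl(b_{+}-\textstyle\sum_{i\ne j}w_{i,-}\theta_{i,-}+\theta_{0,-}\Bigr).
$$
Using $\sum_{i}w_{i,+}\le 1$ together with $\theta_{0,+}\le 0$, $\theta_{i,+}\ge\theta_{0,+}$ and $b_{-}\le 0$ shows the first expression is nonnegative; symmetrically $\sum_{i}w_{i,-}\le 1$ together with $\theta_{0,-}\ge 0$, $\theta_{i,-}\le\theta_{0,-}$ and $b_{+}\ge 0$ shows the second is nonpositive. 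Hence $\mathrm{MSE}_{ate}$ is nondecreasing in each $\theta_{j,+}$ and nonincreasing in each $\theta_{j,-}$, so for each fixed $(\theta_{0,+},\theta_{0,-})$ the maximum is attained by raising every $\theta_{i,+}$ to $\min\{\theta_{0,+}+C\|R_{i,+}\|,1/2\}$ and lowering every $\theta_{i,-}$ to $\max\{\theta_{0,-}-C\|R_{i,-}\|,-1/2\}$, i.e., to $\tilde{\bm{\theta}}_{+}(\theta_{0,+})$ and $\tilde{\bm{\theta}}_{-}(\theta_{0,-})$. A triangle-inequality argument identical to the one in the proof of Theorem~\ref{thm:max_problem} confirms $\tilde{\bm{\theta}}_{\pm}(\cdot)\in\Theta_{\pm}$ for all admissible arguments, and taking the maximum over $(\theta_{0,+},\theta_{0,-})\in[-1/2,0]\times[0,1/2]$ yields \eqref{max_ate}.

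The hard part is the second step. For a single sample, centrosymmetry alone already gives the one-sided restriction $\theta_{0}\le 0$; here the squared bias $(b_{+}-b_{-})^{2}$ ties the two samples together, so a naive per-group centrosymmetry reduction fails, and the reflection of Lemma~\ref{lem:max_prob_1} is not by itself feasible when $\theta_{0,+}>0$ or $\theta_{0,-}<0$. The way around it is to exploit the two \emph{independent} sign flips first, making the two biases point in opposite directions --- which is precisely when $(b_{+}-b_{-})^{2}=(|b_{+}|+|b_{-}|)^{2}$ --- and only afterwards run the within-group reflections and the normalization of $\theta_{0,\pm}$; keeping careful track of which operations remain inside $\Theta_{+}\times\Theta_{-}$ and monotone for the coupled objective (in particular, pinning down $\theta_{0,-}\ge 0$ once the $+$ side has already been normalized) is where the work concentrates.
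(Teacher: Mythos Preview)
Your Step~1 decomposition and Step~3 monotonicity computation are correct, and the partial-derivative argument in Step~3 is a clean way to finish \emph{once} you are on the reduced region $\theta_{0,+}\le 0$, $\theta_{i,+}\ge\theta_{0,+}$, $\theta_{0,-}\ge 0$, $\theta_{i,-}\le\theta_{0,-}$. The gap is in Step~2: the sequence of moves you outline cannot be ordered so as to be simultaneously feasible in $\Theta_{+}\times\Theta_{-}$ and weakly $\mathrm{MSE}_{ate}$-increasing. After your independent sign flips you have $b_{+}\ge 0$ and $b_{-}\le 0$, but $\theta_{0,+}$ may still be positive; then the reflection of Lemma~\ref{lem:max_prob_1} can leave $[-1/2,1/2]$ (e.g.\ $\theta_{0,+}>0$, $\theta_{i,+}=-1/2$ gives $2\theta_{0,+}-\theta_{i,+}>1/2$). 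There are no ``translation moves'' either: shifting all coordinates of $\bm\theta_{+}$ preserves the Lipschitz constraints but not the box constraint, and it changes both $b_{+}$ and $V_{+}$ without any monotone effect on the coupled objective. Finally, inserting a joint sign flip at that stage restores $\theta_{0,+}\le 0$ but simultaneously sends $(b_{+},b_{-})\to(-b_{+},-b_{-})$, undoing your first normalization; a subsequent within-group reflection can then \emph{decrease} the cross term $-2b_{+}b_{-}$ and hence the MSE.

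The paper sidesteps this by reversing the logic: it first uses only the \emph{joint} flip $(\bm\theta_{+},\bm\theta_{-})\mapsto(-\bm\theta_{+},-\bm\theta_{-})$, which leaves $\mathrm{MSE}_{ate}$ exactly invariant, to place $\theta_{0,+}\le 0$; it then, for each fixed $(\theta_{0,+},\theta_{0,-})$, bounds the three pieces of $(b_{+}^{2}+V_{+})+(b_{-}^{2}+V_{-})-2b_{+}b_{-}$ \emph{separately} over $\bm\theta_{\pm}$ (not via a chain of moves). The point is that each piece is individually maximized at the same extremal configuration: Theorem~\ref{thm:max_problem} gives the first two, and the cross term follows from $-b_{+}b_{-}\le|b_{+}|\,|b_{-}|$ together with the observation that, when $\theta_{0,+}\le 0$, the maximum of $|b_{+}|$ over $\bm\theta_{+}$ with this $\theta_{0,+}$ is the nonnegative number $b_{+}(\bm w_{+},\tilde{\bm\theta}_{+}(\theta_{0,+}))$ (and symmetrically for the $-$ side). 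The residual case $\theta_{0,-}<0$ is then folded into $\theta_{0,-}\ge 0$ via $-\bar{\bm\theta}_{-}(\theta_{0,-})=\tilde{\bm\theta}_{-}(-\theta_{0,-})$. If you want to keep your derivative argument, you can, but you still need the paper's device---joint flip first, then the term-by-term bound on the cross term---to reach your reduced region.
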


\begin{proof}
Fix $\theta_{0,+} \in \Theta_{+}$ and $\theta_{0,-} \in \Theta_{-}$. Without loss of generality, we assume $\theta_{0,+} \leq 0$. Define
\begin{eqnarray*}
\bar{\bm{\theta}}_{+}(\theta_{0,+}) & \equiv & \left(\theta_{0,+}, \max\{\theta_{0,+} - C \|R_{1,+}\|, -1/2 \}, \ldots, \max\{\theta_{0,+} - C \|R_{n_{+},+}\|, -1/2 \} \right)', \\
\bar{\bm{\theta}}_{-}(\theta_{0,-}) & \equiv & \left(\theta_{0,-}, \min\{\theta_{0,-} + C \|R_{1,-}\|, 1/2 \}, \ldots, \min\{\theta_{0,-} + C \|R_{n_{-},-}\|, 1/2 \} \right)'.
\end{eqnarray*}

First, we consider the case where $\theta_{0,+} \in [-1/2,0]$ and $\theta_{0,-} \in [0,1/2]$. In this case, Theorem \ref{thm:max_problem} implies that $b_{+}(\bm{w}_{+},\bm{\theta}_{+})^2 + V_{+}(\bm{w}_{+},\bm{\theta}_{+})$ is maximized at $\bm{\theta}_{+} = \tilde{\bm{\theta}}_{+}(\theta_{0,+})$ and $b_{-}(\bm{w}_{-}, \bm{\theta}_{-})^2 + V_{-}(\bm{w}_{-}, \bm{\theta}_{-})$ is maximized at $\bm{\theta}_{-} = \tilde{\bm{\theta}}_{-}(\theta_{0,-})$. 
Note that
$$
b_{+}(\bm{w}_{+},\bm{\theta}_{+})=\sum_{i=1}^{n_{+}}w_{i,+}(\theta_{i,+} - \theta_{0,+})-\left(1-\sum_{i=1}^{n_{+}}w_{i,+}\right)\theta_{0,+}
$$
and
$$
b_{-}(\bm{w}_{-},\bm{\theta}_{-})=\sum_{i=1}^{n_{-}}w_{i,-}(\theta_{i,-} - \theta_{0,-})-\left(1-\sum_{i=1}^{n_{-}}w_{i,-}\right)\theta_{0,-}.
$$
Then $|b_{+}(\bm{w}_{+},\bm{\theta}_{+})|$ is maximized at $\bm{\theta}_{+} = \tilde{\bm{\theta}}_{+}(\theta_{0,+})$ or $\bar{\bm{\theta}}_{+}(\theta_{0,+})$ and $|b_{-}(\bm{w}_{-}, \bm{\theta}_{-})|$ is maximized at $\bm{\theta}_{-} = \tilde{\bm{\theta}}_{-}(\theta_{0,-})$ or $\bar{\bm{\theta}}_{-}(\theta_{0,-})$. Because $\bm{w}_{+} \in \mathcal{W}_{+}$ and $\theta_{0,+} \leq 0$, we have $|b_{+}(\bm{w}_{+}, \bar{\bm{\theta}}_{+}(\theta_{0,+}))| \leq |b_{+}(\bm{w}_{+}, \tilde{\bm{\theta}}_{+}(\theta_{0,+}))|=b_{+}(\bm{w}_{+}, \tilde{\bm{\theta}}_{+}(\theta_{0,+}))$. Similarly, we have $|b_{-}(\bm{w}_{-},\bar{\bm{\theta}}_{-}(\theta_{0,-}))| \leq |b_{-}(\bm{w}_{-}, \tilde{\bm{\theta}}_{-}(\theta_{0,-}))|=-b_{-}(\bm{w}_{-}, \tilde{\bm{\theta}}_{-}(\theta_{0,-}))$. These results imply that $-b_{+}(\bm{w}_{+},\bm{\theta}_{+}) b_{-}(\bm{w}_{-}, \bm{\theta}_{-})$ is maximized at $\bm{\theta}_{+} = \tilde{\bm{\theta}}_{+}(\theta_{0,+})$ and $\bm{\theta}_{-} = \tilde{\bm{\theta}}_{-}(\theta_{0,-})$. Therefore, if $\theta_{0,+} \in [-1/2,0]$ and $\theta_{0,-} \in [0,1/2]$ hold, we obtain
\begin{equation*}
\mathrm{MSE}_{ate}(\bm{w}_{+},\bm{w}_{-},\bm{\theta}_{+},\bm{\theta}_{-}) \ \leq \ \mathrm{MSE}_{ate} \left( \bm{w}_{+},\bm{w}_{-}, \tilde{\bm{\theta}}_{+}(\theta_{0,+}),\tilde{\bm{\theta}}_{-}(\theta_{0,-}) \right). 
\end{equation*}

Next, we consider the case where $\theta_{0,+} \in [-1/2,0]$ and $\theta_{0,-} \in [-1/2,0)$. From Theorem \ref{thm:max_problem}, we have
\begin{eqnarray*}
& & \mathrm{MSE}_{ate}(\bm{w}_{+},\bm{w}_{-},\bm{\theta}_{+},\bm{\theta}_{-}) \\
& \leq & b_{+}(\bm{w}_{+}, \tilde{\bm{\theta}}_{+}(\theta_{0,+}))^2 + V_{+} (\bm{w}_{+}, \tilde{\bm{\theta}}_{+}(\theta_{0,+})) \\
& & \hspace{0.2in} + b_{-}(\bm{w}_{-}, \bar{\bm{\theta}}_{-}(\theta_{0,-}))^2 + V_{-} (\bm{w}_{-}, \bar{\bm{\theta}}_{-}(\theta_{0,-})) - 2 b_{+}(\bm{w}_{+}, \bm{\theta}_{+}) b_{-}(\bm{w}_{-}, \bm{\theta}_{-}).
\end{eqnarray*}
Moreover, since $\theta_{0,+} \le 0 $ and $\theta_{0,-} \le 0$, it follows from an argument similar to the one above that $|b_{+}(\bm{w}_{+}, \bm{\theta}_{+})|\le |b_{+}(\bm{w}_{+}, \tilde{\bm{\theta}}_{+}(\theta_{0,+}))|=b_{+}(\bm{w}_{+}, \tilde{\bm{\theta}}_{+}(\theta_{0,+}))$ and $|b_{-}(\bm{w}_{-}, \bm{\theta}_{-})|\le |b_{-}(\bm{w}_{-}, \bar{\bm{\theta}}_{-}(\theta_{0,-}))|=b_{-}(\bm{w}_{-}, \bar{\bm{\theta}}_{-}(\theta_{0,-}))$.
Hence,
\begin{eqnarray*}
& & \mathrm{MSE}_{ate}(\bm{w}_{+},\bm{w}_{-},\bm{\theta}_{+},\bm{\theta}_{-}) \\
& \leq & b_{+}(\bm{w}_{+}, \tilde{\bm{\theta}}_{+}(\theta_{0,+}))^2 + V_{+} (\bm{w}_{+}, \tilde{\bm{\theta}}_{+}(\theta_{0,+})) \\
& & \hspace{0.2in} + b_{-}(\bm{w}_{-}, \bar{\bm{\theta}}_{-}(\theta_{0,-}))^2 + V_{-} (\bm{w}_{-}, \bar{\bm{\theta}}_{-}(\theta_{0,-})) + 2 b_{+}(\bm{w}_{+}, \tilde{\bm{\theta}}_{+}(\theta_{0,+})) b_{-}(\bm{w}_{-}, \bar{\bm{\theta}}_{-}(\theta_{0,-})).
\end{eqnarray*}
Because $b_{-}(\bm{w}_{-},\bm{\theta}_{-}) = - b_{-}(\bm{w}_{-},-\bm{\theta}_{-})$, $V_{-}(\bm{w}_{-},\bm{\theta}_{-}) = V_{-}(\bm{w}_{-},-\bm{\theta}_{-})$, and $-\bar{\bm{\theta}}_{-}(\theta_{0,-})=\tilde{\bm{\theta}}_{-}(-\theta_{0,-})$, we obtain
\begin{eqnarray*}
& & \mathrm{MSE}_{ate}(\bm{w}_{+},\bm{w}_{-},\bm{\theta}_{+},\bm{\theta}_{-}) \\
& \leq & b_{+}(\bm{w}_{+}, \tilde{\bm{\theta}}_{+}(\theta_{0,+}))^2 + V_{+} (\bm{w}_{+}, \tilde{\bm{\theta}}_{+}(\theta_{0,+})) + b_{-}(\bm{w}_{-}, -\bar{\bm{\theta}}_{-}(\theta_{0,-}))^2 + V_{-} (\bm{w}_{-}, -\bar{\bm{\theta}}_{-}(\theta_{0,-})) \\
& & \hspace{0.2in} - 2 b_{+}(\bm{w}_{+}, \tilde{\bm{\theta}}_{+}(\theta_{0,+})) b_{-}(\bm{w}_{-}, -\bar{\bm{\theta}}_{-}(\theta_{0,-}))\\
&=& \mathrm{MSE}_{ate} \left( \bm{w}_{+},\bm{w}_{-}, \tilde{\bm{\theta}}_{+}(\theta_{0,+}),-\bar{\bm{\theta}}_{-}(\theta_{0,-}) \right)\\
&=& \mathrm{MSE}_{ate} \left( \bm{w}_{+},\bm{w}_{-}, \tilde{\bm{\theta}}_{+}(\theta_{0,+}),\tilde{\bm{\theta}}_{-}(-\theta_{0,-}) \right).
\end{eqnarray*}
Hence, it is enough to consider the maximization of $\mathrm{MSE}_{ate} \left( \bm{w}_{+},\bm{w}_{-}, \tilde{\bm{\theta}}_{+}(\theta_{0,+}),\tilde{\bm{\theta}}_{-}(\theta_{0,-}) \right)$ over $\theta_{0,+} \in [-1/2,0]$ and $\theta_{0,-} \in [0,1/2]$. As a result, we obtain (\ref{max_ate}).
\end{proof}

Next, we derive the weight vector $(\bm{w}_{+}',\bm{w}_{-}')'$ that minimizes the maximum MSE. Similar to Lemmas \ref{lem:monotone} and \ref{lem:zero_weight}, we obtain the following lemma.

\begin{lemma}
We obtain
\begin{eqnarray}
& &  \min_{\bm{w}_{+} \in \mathcal{W}_{+}, \bm{w}_{-} \in \mathcal{W}_{-}} \ \max_{\bm{\theta}_{+} \in \Theta_{+}, \bm{\theta}_{-} \in \Theta_{-}} \mathrm{MSE}_{ate}(\bm{w}_{+},\bm{w}_{-},\bm{\theta}_{+},\bm{\theta}_{-}) \nonumber \\
& & \hspace{0.8in} = \  \min_{\bm{w}_{+} \in \mathcal{W}_{+}^1, \bm{w}_{-} \in \mathcal{W}_{-}^1} \ \max_{\bm{\theta}_{+} \in \Theta_{+}, \bm{\theta}_{-} \in \Theta_{-}} \mathrm{MSE}_{ate}(\bm{w}_{+},\bm{w}_{-},\bm{\theta}_{+},\bm{\theta}_{-}), \label{monotone_zero_ate}
\end{eqnarray}
where
\begin{eqnarray*}
\mathcal{W}_{+}^1 & \equiv & \left\{ \bm{w}_{+} \in \mathcal{W}_{+} : w_{1,+} \geq \cdots \geq w_{n_{+},+} \ \text{and} \ w_{i,+} = 0 \ \text{if $C \| R_{i,+} \| \geq 1/2$} \right\}, \\
\mathcal{W}_{-}^1 & \equiv & \left\{ \bm{w}_{-} \in \mathcal{W}_{-} : w_{1,-} \geq \cdots \geq w_{n_{-},-} \ \text{and} \ w_{i,-} = 0 \ \text{if $C \| R_{i,-} \| \geq 1/2$} \right\}.
\end{eqnarray*}
\end{lemma}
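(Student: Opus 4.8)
The plan is to mirror the proofs of Lemmas~\ref{lem:monotone} and~\ref{lem:zero_weight}, but carried out through the reduced maximization of Theorem~\ref{thm:max_ate}, so that it suffices to control $\mathrm{MSE}_{ate}(\bm{w}_{+},\bm{w}_{-},\tilde{\bm{\theta}}_{+}(\theta_{0,+}),\tilde{\bm{\theta}}_{-}(\theta_{0,-}))$ pointwise for $(\theta_{0,+},\theta_{0,-})\in[-1/2,0]\times[0,1/2]$. The preliminary observation that tames the coupling between the two subsamples is a sign fact: for $\theta_{0,+}\in[-1/2,0]$ we have $\tilde{\theta}_{i,+}(\theta_{0,+})\ge\theta_{0,+}$ for all $i$, hence $b_{+}(\bm{w}_{+},\tilde{\bm{\theta}}_{+}(\theta_{0,+}))\ge(\sum_{i}w_{i,+}-1)\theta_{0,+}\ge 0$; symmetrically, $b_{-}(\bm{w}_{-},\tilde{\bm{\theta}}_{-}(\theta_{0,-}))\le 0$ for $\theta_{0,-}\in[0,1/2]$. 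So the cross term equals $-2b_{+}b_{-}=2b_{+}|b_{-}|\ge 0$ and is monotone in $b_{+}$ (with the control side fixed) and in $|b_{-}|$ (with the treated side fixed).

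For the reduction to monotone weights, suppose $w_{j,+}<w_{j+1,+}$ and let $\tilde{\bm{w}}_{+}$ be $\bm{w}_{+}$ with coordinates $j$ and $j+1$ interchanged; fix any $(\theta_{0,+},\theta_{0,-})$ in the box above. Since $\|R_{j,+}\|\le\|R_{j+1,+}\|$ forces $\tilde{\theta}_{j,+}(\theta_{0,+})\le\tilde{\theta}_{j+1,+}(\theta_{0,+})$, the algebra in the proof of Lemma~\ref{lem:monotone} shows $b_{+}^{2}+V_{+}$ does not increase under the swap, while $b_{+}(\bm{w}_{+},\cdot)-b_{+}(\tilde{\bm{w}}_{+},\cdot)=(w_{j,+}-w_{j+1,+})(\tilde{\theta}_{j,+}-\tilde{\theta}_{j+1,+})\ge 0$ shows $b_{+}$ weakly decreases while staying nonnegative, so $-2b_{+}b_{-}$ does not increase either; as $b_{-}^{2}+V_{-}$ is untouched, $\mathrm{MSE}_{ate}(\bm{w}_{+},\bm{w}_{-},\tilde{\bm{\theta}}_{+}(\theta_{0,+}),\tilde{\bm{\theta}}_{-}(\theta_{0,-}))$ does not increase, and Theorem~\ref{thm:max_ate} upgrades this to the worst-case MSE. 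The analogous statement for the control side follows from $V_{-}(\bm{w}_{-},\bm{\theta}_{-})=V_{-}(\bm{w}_{-},-\bm{\theta}_{-})$ and $b_{-}(\bm{w}_{-},\bm{\theta}_{-})=-b_{-}(\bm{w}_{-},-\bm{\theta}_{-})$, which let one apply Lemma~\ref{lem:monotone} to $-\tilde{\bm{\theta}}_{-}(\theta_{0,-})$ (which has the Section~\ref{sec:main} shape), together with the fact that the swap weakly shrinks $|b_{-}|$. Repeating such swaps on both weight vectors gives the reduction to monotone weights in $\mathcal{W}_{+}\times\mathcal{W}_{-}$.

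For the reduction to zero weights on far-away points, I differentiate the reduced objective, obtaining $\partial\mathrm{MSE}_{ate}/\partial w_{j,+}=2\tilde{\theta}_{j,+}(b_{+}-b_{-})+2w_{j,+}(1/4-\tilde{\theta}_{j,+}^{2})$ at $(\tilde{\bm{\theta}}_{+}(\theta_{0,+}),\tilde{\bm{\theta}}_{-}(\theta_{0,-}))$. If $C\|R_{j,+}\|\ge 1/2$ then $\tilde{\theta}_{j,+}(\theta_{0,+})=\min\{\theta_{0,+}+C\|R_{j,+}\|,1/2\}\in[0,1/2]$ for every $\theta_{0,+}\in[-1/2,0]$, so $\tilde{\theta}_{j,+}\ge 0$, $1/4-\tilde{\theta}_{j,+}^{2}\ge 0$, and $b_{+}-b_{-}\ge 0$ by the sign fact; hence this derivative is nonnegative for all $w_{j,+}\ge 0$ and all $(\theta_{0,+},\theta_{0,-})$ in the box, so lowering $w_{j,+}$ to $0$ does not increase the reduced objective pointwise, hence not the worst-case MSE by Theorem~\ref{thm:max_ate}, and it preserves monotonicity of $\bm{w}_{+}$ since $C\|R_{i,+}\|\ge 1/2$ then holds for all $i\ge j$. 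The control-side version uses $\partial\mathrm{MSE}_{ate}/\partial w_{j,-}=2\tilde{\theta}_{j,-}(b_{-}-b_{+})+2w_{j,-}(1/4-\tilde{\theta}_{j,-}^{2})$ with $\tilde{\theta}_{j,-}\le 0$ and $b_{-}-b_{+}\le 0$ when $C\|R_{j,-}\|\ge 1/2$. Combining the two reductions yields \eqref{monotone_zero_ate}.

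The main obstacle is the cross term $-2b_{+}b_{-}$: because it couples $\bm{w}_{+}$ and $\bm{w}_{-}$, one cannot simply invoke Lemmas~\ref{lem:monotone}--\ref{lem:zero_weight} on each subsample in isolation. The resolution is exactly the sign analysis on the reduced domain supplied by Theorem~\ref{thm:max_ate}, which forces $b_{+}\ge 0$ and $b_{-}\le 0$ and thereby makes the cross term move in the favorable direction under both the coordinate swap and the coordinate-zeroing operations; everything else is a rerun of the one-sample arguments.
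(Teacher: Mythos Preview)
Your proof is correct and follows essentially the same route as the paper's: both reduce via Theorem~\ref{thm:max_ate} to the box $[-1/2,0]\times[0,1/2]$, exploit the sign facts $b_{+}\ge 0$ and $b_{-}\le 0$ there to control the cross term $-2b_{+}b_{-}$, then rerun the swap argument of Lemma~\ref{lem:monotone} and the derivative argument of Lemma~\ref{lem:zero_weight} on each side. The only cosmetic difference is that you write the $w_{j,+}$-derivative as $2\tilde{\theta}_{j,+}(b_{+}-b_{-})+2w_{j,+}(1/4-\tilde{\theta}_{j,+}^{2})$, whereas the paper absorbs the $w_{j,+}\tilde{\theta}_{j,+}^{2}$ term to get $2\tilde{\theta}_{j,+}\bigl[(\sum_{i\neq j}w_{i,+}\tilde{\theta}_{i,+}-\theta_{0,+})-b_{-}\bigr]+w_{j,+}/2$; either form yields nonnegativity once $\tilde{\theta}_{j,+}\ge 0$ and $b_{-}\le 0$.
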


\begin{proof}
By Theorem \ref{thm:max_ate}, it suffices to consider the MSE at $(\tilde{\bm{\theta}}_{+}(\theta_{0,+}),\tilde{\bm{\theta}}_{-}(\theta_{0,-}))$ for $\theta_{0,+} \in [-1/2,0]$ and $\theta_{0,-} \in [0,1/2]$.
Suppose that $\bm{w}_{+} \equiv (w_{1,+} , \ldots, w_{n_{+},+})' \in \mathcal{W}_{+}$ satisfies $w_{j,+} < w_{j+1, +}$ for some $j$. Letting $\tilde{\bm{w}}_{+} \equiv (w_{1,+}, \ldots, w_{j-1,+}, w_{j+1,+}, w_{j,+}, w_{j+2,+}, \ldots, w_{n_{+},+})'$, we have $\tilde{\bm{w}}_{+} \in \mathcal{W}_{+}$.
From the proof of Lemma \ref{lem:monotone}, for any $\theta_{0,+} \in [-1/2,0]$,
\[
b_{+}(\bm{w}_{+}, \tilde{\bm{\theta}}_{+}(\theta_{0,+}))^2 + V_{+} (\bm{w}_{+}, \tilde{\bm{\theta}}_{+}(\theta_{0,+})) \ \geq \ b_{+}(\tilde{\bm{w}}_{+}, \tilde{\bm{\theta}}_{+}(\theta_{0,+}))^2 + V_{+} (\tilde{\bm{w}}_{+}, \tilde{\bm{\theta}}_{+}(\theta_{0,+})).
\]
In addition, noting that
$$
b_{+}(\bm{w}_{+}, \bm{\theta}_{+}) - b_{+}(\tilde{\bm{w}}_{+}, \bm{\theta}_{+})=(w_{j,+}-w_{j+1,+})(\theta_{j,+}-\theta_{j+1,+})
$$
and
$$
b_{+}(\tilde{\bm{w}}_{+},\bm{\theta}_{+})=\sum_{i=1}^{n_{+}}\tilde w_{i,+}(\theta_{i,+} - \theta_{0,+})-\left(1-\sum_{i=1}^{n_{+}}\tilde w_{i,+}\right)\theta_{0,+},
$$
we have
$b_{+}(\bm{w}_{+}, \tilde{\bm{\theta}}_{+}(\theta_{0,+}))\ge b_{+}(\tilde{\bm{w}}_{+}, \tilde{\bm{\theta}}_{+}(\theta_{0,+}))\ge 0$ for any $\theta_{0,+} \in [-1/2,0]$.
Also, $b_{-}(\bm{w}_{-}, \tilde{\bm{\theta}}_{-}(\theta_{0,-}))\le 0$ for any $\theta_{0,-} \in [0,1/2]$.
Hence, we have
\[
\mathrm{MSE}_{ate}\left( \bm{w}_{+},\bm{w}_{-},\tilde{\bm{\theta}}_{+}(\theta_{0,+}),\tilde{\bm{\theta}}_{-}(\theta_{0,-}) \right) \ \geq \ \mathrm{MSE}_{ate}\left( \tilde{\bm{w}}_{+},\bm{w}_{-},\tilde{\bm{\theta}}_{+}(\theta_{0,+}),\tilde{\bm{\theta}}_{-}(\theta_{0,-}) \right).
\]
Therefore, if $w_{j,+} < w_{j+1,+}$, then we can reduce the maximum MSE by exchanging $w_{j,+}$ for $w_{j+1,+}$. Similar argument holds for $\bm{w}_{-}$. Therefore, for any $\bm{w}_{+} \in \mathcal{W}_{+}$ and $\bm{w}_{-} \in \mathcal{W}_{-}$, there exists $(\tilde{\bm{w}}_{+}, \tilde{\bm{w}}_{-}) \in \mathcal{W}_{+} \times \mathcal{W}_{-}$ such that $\tilde{w}_{1,+} \geq \cdots \geq \tilde{w}_{n_{+},+}$, $\tilde{w}_{1,-} \geq \cdots \geq \tilde{w}_{n_{-},-}$, and
\[
\mathrm{MSE}_{ate}\left( \bm{w}_{+},\bm{w}_{-},\tilde{\bm{\theta}}_{+}(\theta_{0,+}),\tilde{\bm{\theta}}_{-}(\theta_{0,-}) \right) \ \geq \ \mathrm{MSE}_{ate}\left( \tilde{\bm{w}}_{+},\tilde{\bm{w}}_{-},\tilde{\bm{\theta}}_{+}(\theta_{0,+}),\tilde{\bm{\theta}}_{-}(\theta_{0,-}) \right).
\]

Next, observe that
\begin{eqnarray*}
& & \frac{\partial}{\partial w_{j,+}} \mathrm{MSE}_{ate}(\bm{w}_{+},\bm{w}_{-},\bm{\theta}_{+},\bm{\theta}_{-}) \\ 
&=& 2 \theta_{j,+} \left( \sum_{i=1}^{n_{+}} w_{i,+} \theta_{i,+} - \theta_{0,+} \right) + 2 w_{j,+} \left( 1/4 - \theta_{j,+}^2 \right) -2 \theta_{j,+} b_{-}(\bm{w}_{-}, \bm{\theta}_{-}) \\
&=& 2 \theta_{j,+} \left\{ \left( \sum_{i \neq j} w_{i,+} \theta_{i,+} - \theta_{0,+} \right) - b_{-}(\bm{w}_{-}, \bm{\theta}_{-}) \right\} + w_{j,+}/2\\
&=& 2 \theta_{j,+} \left\{ \left( \sum_{i \neq j} w_{i,+} (\theta_{i,+} - \theta_{0,+})-\left(1- \sum_{i \neq j} w_{i,+}\right)\theta_{0,+} \right) - b_{-}(\bm{w}_{-}, \bm{\theta}_{-}) \right\} + w_{j,+}/2.
\end{eqnarray*}
If $C\|R_{j,+}\| \geq 1/2$, then $j$-th element of $\tilde{\bm{\theta}}_{+}(\theta_{0,+})$ is nonnegative for any $\theta_{0,+} \in [-1/2,0]$. Because $b_{-}(\bm{w}_{-}, \tilde{\bm{\theta}}_{-}(\theta_{0,-})) \leq 0$ for any $\theta_{0,-} \in [0,1/2]$, we obtain
\begin{eqnarray*}
& & \frac{\partial}{\partial w_{j,+}} \mathrm{MSE}_{ate} \left( \bm{w}_{+},\bm{w}_{-}, \tilde{\bm{\theta}}_{+}(\theta_{0,+}),\tilde{\bm{\theta}}_{-}(\theta_{0,-}) \right) \ \geq \ 0 \\ 
& & \hspace{1in} \text{for any $\bm{w}_{+} \in \mathcal{W}_{+}$, $\bm{w}_{-} \in \mathcal{W}_{-}$, $\theta_{0,+} \in [-1/2,0]$, and $\theta_{0,-} \in [0,1/2]$.}
\end{eqnarray*}
Hence, if $C\|R_{j,+}\| \geq 1/2$, we can reduce the maximum MSE by replacing $w_{j,+}$ with $0$. Similarly, if $C\|R_{j,-}\| \geq 1/2$, we can reduce the maximum MSE by replacing $w_{j,-}$ with $0$. As a result, we obtain (\ref{monotone_zero_ate}).
\end{proof}

We now present how one can numerically solve the minimax problem
\[
\min_{\bm{w}_{+} \in \mathcal{W}_{+}, \bm{w}_{-} \in \mathcal{W}_{-}} \ \max_{\bm{\theta}_{+} \in \Theta_{+}, \bm{\theta}_{-} \in \Theta_{-}} \mathrm{MSE}_{ate}(\bm{w}_{+},\bm{w}_{-},\bm{\theta}_{+},\bm{\theta}_{-}).
\]
The MSE of $\hat{\tau}(\bm{w}_{+},\bm{w}_{-})$ can be written as
\begin{eqnarray*}
\mathrm{MSE}_{ate}(\bm{w}_{+},\bm{w}_{-},\bm{\theta}_{+},\bm{\theta}_{-}) &=& \left\{ b_{+}(\bm{w}_{+},\bm{\theta}_{+}) - b_{-}(\bm{w}_{-}, \bm{\theta}_{-}) \right\}^2 \\
& & \hspace{0.8in} + V_{+}(\bm{w}_{+},\bm{\theta}_{+}) + V_{-}(\bm{w}_{-}, \bm{\theta}_{-}),
\end{eqnarray*}
where both $\left\{ b_{+}(\bm{w}_{+},\bm{\theta}_{+}) - b_{-}(\bm{w}_{-}, \bm{\theta}_{-}) \right\}^2$ and $V_{+}(\bm{w}_{+},\bm{\theta}_{+}) + V_{-}(\bm{w}_{-}, \bm{\theta}_{-})$ are convex with respect to $\bm{w} \equiv (\bm{w}_{+},\bm{w}_{-})$. This implies that $\mathrm{MSE}_{ate}(\bm{w}_{+},\bm{w}_{-},\bm{\theta}_{+},\bm{\theta}_{-})$ is convex with respect to $\bm{w}$ for all $\bm{\theta}_{+} \in \Theta_{+}$ and $\bm{\theta}_{-} \in \Theta_{-}$. We define
\begin{eqnarray*}
g(\bm{w}; \theta_{0,+}, \theta_{0,-}) & \equiv & \mathrm{MSE}_{ate} \left( \bm{w}_{+},\bm{w}_{-}, \tilde{\bm{\theta}}_{+}(\theta_{0,+}),\tilde{\bm{\theta}}_{-}(\theta_{0,-}) \right), \\
\overline{g}(\bm{w}) & \equiv & \max_{\theta_{0,+} \in [-1/2,0], \ \theta_{0,-} \in [0,1/2]} g(\bm{w}; \theta_{0,+}, \theta_{0,-}).
\end{eqnarray*}
Because $g(\bm{w}; \theta_{0,+}, \theta_{0,-})$ is convex with respect to $\bm{w}$ for all $\theta_{0,+}$ and $\theta_{0,-}$, $\overline{g}(\bm{w})$ is also convex. We can solve the minimax problem by minimizing $\overline{g}(\bm{w})$ subject to $\bm{w} \in \mathcal{W}_{+}^1 \times \mathcal{W}_{-}^1$.

\section{Confidence intervals with general bounded outcomes}\label{sec:inf_bounded}

\subsection{One-sided confidence interval}

Suppose that $Y_{i,+}\in [0,1]$ for $i=1,\ldots,n_+$ and $Y_{i,-}\in [0,1]$ for $i=1,\ldots,n_-$. We keep assuming that the observed outcomes are independent.
Let $\mathcal{Q}(\bm{p})$ denote the set of distributions of $(Y_{1,+},\ldots,Y_{n_+,+},Y_{1,-},\ldots,Y_{n_-,-})\in [0,1]^{n_++n_-}$ such that $E[Y_{i,+}]=p_{i,+}$ for $i=1,\ldots,n_+$ and $E[Y_{i,-}]=p_{i,-}$ for $i=1,\ldots,n_-$.

We consider a one-sided $100\cdot (1-\alpha)\%$ CI 
$[\hat\tau-\gamma,\infty)$ satisfying
\begin{equation*}
    \inf_{\bm{p} \in \mathcal{P}, Q\in \mathcal{Q}(\bm{p})} P_{\bm{p},Q} \left( \tau \in [\hat\tau-\gamma,\infty) \right) \ \geq \ 1-\alpha, \ \text{ or } \ \sup_{\bm{p} \in \mathcal{P}, Q\in \mathcal{Q}(\bm{p})} P_{\bm{p},Q} \left( \hat\tau - \tau > \gamma \right) \ \leq \ \alpha,
\end{equation*}
where $\mathcal{P}\equiv \mathcal{P}_+\times \mathcal{P}_-$.

We construct a CI using an upper bound on $\sup_{\bm{p} \in \mathcal{P}, Q\in \mathcal{Q}(\bm{p})} P_{\bm{p},Q} \left( \hat\tau - \tau > \gamma \right)$.
Define
$$
{\rm Bias}_{\bm{p}}(\hat\tau)\equiv E_{\bm{p}}[\hat\tau]-\tau=\sum_{i=1}^{n_{+}}w_{i,+} \left( p_{i,+} - \frac{1}{2} \right) - \sum_{i=1}^{n_{-}}w_{i,-} \left( p_{i,-} - \frac{1}{2} \right)-\tau
$$
and
$\overline{{\rm Bias}}_{\mathcal{P}}(\hat\tau)\equiv\max_{\bm{p} \in \mathcal{P}}{\rm Bias}_{\bm{p}}(\hat\tau)$.
First, fix $\bm{p} \in \mathcal{P}$ and $Q\in \mathcal{Q}(\bm{p})$.
For any $\gamma>{\rm Bias}_{\bm{p}}(\hat\tau)$,
\begin{eqnarray*}
    & & P_{\bm{p},Q} \left( \hat\tau - \tau > \gamma \right) \\
    &=& P_{\bm{p},Q} \left( \hat\tau - E_{\bm{p}}[\hat\tau] > \gamma - {\rm Bias}_{\bm{p}}(\hat\tau) \right) \\
    &=& P_{\bm{p},Q} \left( \sum_{i=1}^{n_{+}}w_{i,+} Y_{i,+} + \sum_{i=1}^{n_{-}}(-w_{i,-} Y_{i,-}) - E_{\bm{p}}\left[\sum_{i=1}^{n_{+}}w_{i,+} Y_{i,+} + \sum_{i=1}^{n_{-}}(-w_{i,-} Y_{i,-})\right]> \gamma - {\rm Bias}_{\bm{p}}(\hat\tau) \right) \\
    & \leq & \exp\left(-\frac{2(\gamma - {\rm Bias}_{\bm{p}}(\hat\tau))^2}{\sum_{i=1}^{n_+}w_{i,+}^2+\sum_{i=1}^{n_-}w_{i,-}^2}\right),
\end{eqnarray*}
where the last inequality is obtained by Hoeffding's inequality since $w_{i,+}Y_{i,+}\in [0,w_{i,+}]$ and $-w_{i,-}Y_{i,-}\in [-w_{i,-},0]$.
It follows that for any $\gamma>\overline{{\rm Bias}}_{\mathcal{P}}(\hat\tau)$,
$$
    \sup_{\bm{p} \in \mathcal{P}, Q\in \mathcal{Q}(\bm{p})} P_{\bm{p},Q} \left( \hat\tau - \tau > \gamma \right) 
    \ \leq \ \exp\left(-\frac{2(\gamma - \overline{{\rm Bias}}_{\mathcal{P}}(\hat\tau))^2}{\sum_{i=1}^{n_+}w_{i,+}^2+\sum_{i=1}^{n_-}w_{i,-}^2}\right).
$$
Solving
$$
\exp\left(-\frac{2(\gamma - \overline{{\rm Bias}}_{\mathcal{P}}(\hat\tau))^2}{\sum_{i=1}^{n_+}w_{i,+}^2+\sum_{i=1}^{n_-}w_{i,-}^2}\right)=\alpha
$$
yields
$$
\gamma^*=\overline{{\rm Bias}}_{\mathcal{P}}(\hat\tau)+\left(\frac{\log(1/\alpha)\left(\sum_{i=1}^{n_+}w_{i,+}^2+\sum_{i=1}^{n_-}w_{i,-}^2\right)}{2}\right)^{1/2}.
$$
Under the Lipschitz constraint, $\overline{{\rm Bias}}_{\mathcal{P}}(\hat\tau)={\rm Bias}_{(\tilde{\bm{p}}_{+}(0)',\tilde{\bm{p}}_{-}(1)')'}(\hat\tau)$, where
\begin{eqnarray*}
    \tilde{\bm{p}}_{+}(0) &\equiv & \left( 0, \min\{C\|R_{1,+}\|, 1 \}, \ldots , \min\{C\|R_{n_{+},+}\|, 1 \} \right)', \\
    \tilde{\bm{p}}_{-}(1) &\equiv & \left( 1, \max\{1-C\|R_{1,-}\|, 0 \}, \ldots , \max\{1-C\|R_{n_{-},-}\|, 0 \} \right)'.
\end{eqnarray*}
Therefore, $\gamma^*$ has a closed-form expression.

\subsection{Two-sided confidence interval}

We consider a two-sided $100\cdot (1-\alpha)\%$ CI 
$[\hat\tau-\gamma,\hat\tau+\gamma]$ satisfying
\begin{equation*}
    \inf_{\bm{p} \in \mathcal{P}, Q\in \mathcal{Q}(\bm{p})} P_{\bm{p},Q} \left( \tau \in [\hat\tau-\gamma,\hat\tau+\gamma] \right) \ \geq \ 1-\alpha, \ \text{ or } \ \sup_{\bm{p} \in \mathcal{P}, Q\in \mathcal{Q}(\bm{p})} P_{\bm{p},Q} \left( |\hat\tau - \tau| > \gamma \right) \ \leq \ \alpha.
\end{equation*}
By symmetry of $\mathcal{P}$ with respect to $\bm{p}=(1/2,\ldots,1/2)'$, the minimum bias $\min_{\bm{p} \in \mathcal{P}}{\rm Bias}_{\bm{p}}(\hat\tau)$ is given by $-\overline{{\rm Bias}}_{\mathcal{P}}(\hat\tau)$.
The result from the previous subsection implies that
\begin{equation}
    \sup_{\bm{p} \in \mathcal{P}, Q\in \mathcal{Q}(\bm{p})} P_{\bm{p},Q} \left( |\hat\tau - \tau| > \gamma \right)
    \ \leq \ 2\exp\left(-\frac{2(\gamma - \overline{{\rm Bias}}_{\mathcal{P}}(\hat\tau))^2}{\sum_{i=1}^{n_+}w_{i,+}^2+\sum_{i=1}^{n_-}w_{i,-}^2}\right). \label{eq:hoeffding_two_sided}
\end{equation}
Setting $\gamma=\overline{{\rm Bias}}_{\mathcal{P}}(\hat\tau)+\left(\frac{\log(1/(\alpha/2))\left(\sum_{i=1}^{n_+}w_{i,+}^2+\sum_{i=1}^{n_-}w_{i,-}^2\right)}{2}\right)^{1/2}$ leads to a valid CI.
This CI is computationally attractive, but it can be too conservative since the bias cannot be equal to $\overline{{\rm Bias}}_{\mathcal{P}}(\hat\tau)$ and $-\overline{{\rm Bias}}_{\mathcal{P}}(\hat\tau)$ at once.

To construct a less conservative CI, observe that for any $\gamma>|{\rm Bias}_{\bm{p}}(\hat\tau)|$,
\begin{eqnarray*}
    & & P_{\bm{p},Q} \left( |\hat\tau - \tau| > \gamma \right) \\
    &=& P_{\bm{p},Q} \left( |\hat\tau - E_{\bm{p}}[\hat\tau] + {\rm Bias}_{\bm{p}}(\hat\tau)| > \gamma  \right) \\
    &=& P_{\bm{p},Q} \left( \hat\tau - E_{\bm{p}}[\hat\tau] + {\rm Bias}_{\bm{p}}(\hat\tau)> \gamma  \right) + P_{\bm{p},Q} \left( \hat\tau - E_{\bm{p}}[\hat\tau] + {\rm Bias}_{\bm{p}}(\hat\tau) < - \gamma  \right) \\
    &=& P_{\bm{p},Q} \left( \hat\tau - E_{\bm{p}}[\hat\tau] > \gamma -{\rm Bias}_{\bm{p}}(\hat\tau)\right) + P_{\bm{p},Q} \left( -\hat\tau + E_{\bm{p}}[\hat\tau] >  \gamma + {\rm Bias}_{\bm{p}}(\hat\tau) \right) \\
    & \leq & \exp\left(-\frac{2(\gamma - {\rm Bias}_{\bm{p}}(\hat\tau))^2}{\sum_{i=1}^{n_+}w_{i,+}^2+\sum_{i=1}^{n_-}w_{i,-}^2}\right) + \exp\left(-\frac{2(\gamma + {\rm Bias}_{\bm{p}}(\hat\tau))^2}{\sum_{i=1}^{n_+}w_{i,+}^2+\sum_{i=1}^{n_-}w_{i,-}^2}\right),
\end{eqnarray*}
where the last inequality is obtained by Hoeffding's inequality.
It follows that for any $\gamma>\overline{{\rm Bias}}_{\mathcal{P}}(\hat\tau)$,
\begin{eqnarray*}
    & & \sup_{\bm{p} \in \mathcal{P}, Q\in \mathcal{Q}(\bm{p})} P_{\bm{p},Q} \left( |\hat\tau - \tau| > \gamma \right) \\
    & \leq & \overline{\pi}(\gamma)\equiv \max_{b\in[0,\overline{{\rm Bias}}_{\mathcal{P}}(\hat\tau)]}\left[\exp\left(-\frac{2(\gamma - b)^2}{\sum_{i=1}^{n_+}w_{i,+}^2+\sum_{i=1}^{n_-}w_{i,-}^2}\right) + \exp\left(-\frac{2(\gamma + b)^2}{\sum_{i=1}^{n_+}w_{i,+}^2+\sum_{i=1}^{n_-}w_{i,-}^2}\right)\right].
\end{eqnarray*}
We propose using
$$
\gamma^*=\inf\{\gamma>\overline{{\rm Bias}}_{\mathcal{P}}(\hat\tau): \overline{\pi}(\gamma)\le \alpha\}.
$$

Note that $\overline{\pi}(\gamma)$ is a tighter upper bound than the bound given in (\ref{eq:hoeffding_two_sided}), since
\begin{align*}
    \overline{\pi}(\gamma) &\le \max_{b\in[0,\overline{{\rm Bias}}_{\mathcal{P}}(\hat\tau)]}\exp\left(-\frac{2(\gamma - b)^2}{\sum_{i=1}^{n_+}w_{i,+}^2+\sum_{i=1}^{n_-}w_{i,-}^2}\right)\\
    & \quad \quad +\max_{b\in[0,\overline{{\rm Bias}}_{\mathcal{P}}(\hat\tau)]}\exp\left(-\frac{2(\gamma + b)^2}{\sum_{i=1}^{n_+}w_{i,+}^2+\sum_{i=1}^{n_-}w_{i,-}^2}\right)\\
    &=\exp\left(-\frac{2(\gamma - \overline{{\rm Bias}}_{\mathcal{P}}(\hat\tau))^2}{\sum_{i=1}^{n_+}w_{i,+}^2+\sum_{i=1}^{n_-}w_{i,-}^2}\right)+\exp\left(-\frac{2\gamma^2}{\sum_{i=1}^{n_+}w_{i,+}^2+\sum_{i=1}^{n_-}w_{i,-}^2}\right)\\
    &\le 2\exp\left(-\frac{2(\gamma - \overline{{\rm Bias}}_{\mathcal{P}}(\hat\tau))^2}{\sum_{i=1}^{n_+}w_{i,+}^2+\sum_{i=1}^{n_-}w_{i,-}^2}\right).
\end{align*}
This implies that $\gamma^*\le\overline{{\rm Bias}}_{\mathcal{P}}(\hat\tau)+\left(\frac{\log(1/(\alpha/2))\left(\sum_{i=1}^{n_+}w_{i,+}^2+\sum_{i=1}^{n_-}w_{i,-}^2\right)}{2}\right)^{1/2}$, which leads to a less conservative CI.

\section{Optimal Weights in Gaussian Models}\label{sec:gauss_weights}
We use \cite{donoho1994}'s results to derive optimal weights that solve the minimax problem (\ref{minimax_normal}) and show that the weights satisfy $\sum_{i=1}^nw_i=1$ and $w_i\ge 0$ for all $i$.
See \cite{Armstrong2021ATE} for an application of \cite{donoho1994}'s results in a related setting.
Our Gaussian setting falls into the framework of \cite{donoho1994}.
Specifically, in the notation of \cite{donoho1994}, we observe $\mathbf{y}$ of the form $\mathbf{y}=K\mathbf{x}+\mathbf{z}$ with $\mathbf{x}\in \mathbf{X}$.
Here, $\mathbf{y}=(Y_1/\sigma_1,\ldots,Y_n/\sigma_n)'$, $\mathbf{z}\sim N(0,I_n)$, where $I_n$ is an $n\times n$ identity matrix, $\mathbf{x}=\bm\theta$, $\mathbf{X}=\Theta_g$, and $K\mathbf{x}=(\theta_1/\sigma_1,\ldots,\theta_n/\sigma_n)'$.
The parameter of interest is the linear functional $L\mathbf{x}=\theta_0$.
We derive an affine estimator that minimizes the maximum MSE among all affine estimators (i.e, estimators of form $\hat\theta_0=c+\bm w'\mathbf{y}=c+\sum_{i=1}^n(w_i/\sigma_i)Y_i$ with $c\in\mathbb{R}$ and $\bm w\in\mathbb{R}^n$).

To specialize the results in \cite{donoho1994} to our setting, define the modulus of continuity of $L$:
$$
\omega(\varepsilon)\equiv\sup_{\bm\theta,\bm{\tilde\theta}\in\Theta_g}\left\{L\bm\theta-L\tilde{\bm\theta}:\|K\bm\theta-K\tilde{\bm\theta}\|_2\le \varepsilon\right\}=\sup_{\bm\theta,\bm{\tilde\theta}\in\Theta_g}\left\{\theta_0-\tilde\theta_0:\sum_{i=1}^n\frac{(\theta_i-\tilde\theta_i)^2}{\sigma_i^2}\le \varepsilon^2\right\},
$$
where $\|\cdot\|_2$ is the Euclidean norm on $\mathbb{R}^n$.
Since $\Theta_g$ is convex and centrosymmetric, for any $(\bm\theta,\tilde{\bm\theta})\in \Theta_g\times\Theta_g$, there exists $(\bar{\bm\theta},-\bar{\bm\theta})\in \Theta_g\times\Theta_g$ such that $\bar{\bm\theta}-(-\bar{\bm\theta})=\bm\theta-\tilde{\bm\theta}$ (specifically, set $\bar{\bm\theta}=\frac{1}{2}(\bm\theta-\tilde{\bm\theta})$).
Therefore, the supremum $\omega(\varepsilon)$ is attained at a symmetric pair $(\bm\theta,\tilde{\bm\theta})=(\bm\theta_\varepsilon,-\bm\theta_\varepsilon)$, where $\bm\theta_\varepsilon=(\theta_{\varepsilon,0},\theta_{\varepsilon,1},\ldots,\theta_{\varepsilon,n})'$ solves
\begin{align}
\max_{\bm\theta\in\Theta_g}\left\{2\theta_0:\sum_{i=1}^n\frac{\theta_i^2}{\sigma_i^2}\le \frac{\varepsilon^2}{4}\right\},\label{modulus}
\end{align}
provided that this problem has a solution.
Indeed, it has a solution since the constrained set $\{\bm\theta\in\Theta_g:\sum_{i=1}^n\theta_i^2/\sigma_i^2\le \varepsilon^2/4\}$ is bounded and closed.
Later, in Lemma \ref{lem:modulus}, we will show that $\bm\theta_\varepsilon$ satisfies the inequality constraint with equality (i.e., $\sum_{i=1}^n\theta_{\varepsilon,i}^2/\sigma_i^2= \varepsilon^2/4$) and $\theta_{\varepsilon,i}\ge 0$ for all $i=1,\ldots,n$.
We will also show that $\omega(\cdot)$ is differentiable at $\varepsilon>0$ with $\omega'(\varepsilon)=\frac{\varepsilon}{2\sum_{i=1}^n\theta_{\varepsilon,i}/\sigma_i^2}$.

The results in \cite{donoho1994} (in particular, the arguments in the proof of Theorem 1) then yield the following result.
Let $\varepsilon_0>0$ be a solution to $\frac{(\varepsilon/2)^2}{(\varepsilon/2)^2+1}=\frac{\varepsilon\omega'(\varepsilon)}{\omega(\epsilon)}$ and let $\bm\theta_{\varepsilon_0}$ solve (\ref{modulus}) at $\varepsilon=\varepsilon_0$.
Then, the following estimator minimizes the maximum MSE among all affine estimators:
$$
\hat\theta_0=\omega'(\varepsilon_0)\left(\frac{K\bm\theta_{\varepsilon_0}-K(-\bm\theta_{\varepsilon_0})}{\|K\bm\theta_{\varepsilon_0}-K(-\bm\theta_{\varepsilon_0})\|_2}\right)'\mathbf{y}=\omega'(\varepsilon_0)\frac{\sum_{i=1}^n\theta_{\varepsilon_0,i}Y_i/\sigma_i^2}{\sqrt{\sum_{i=1}^n\theta_{\varepsilon_0,i}^2/\sigma_i^2}}.
$$
Since $\sum_{i=1}^n\theta_{\varepsilon_0,i}^2/\sigma_i^2= \varepsilon_0^2/4$ and $\omega'(\varepsilon_0)=\frac{\varepsilon_0}{2\sum_{i=1}^n\theta_{\varepsilon_0,i}/\sigma_i^2}$ by Lemma \ref{lem:modulus} below, we obtain a simplified form of $\hat\theta_0$:
$$
\hat\theta_0=\frac{\sum_{i=1}^n (\theta_{\varepsilon_0,i}/\sigma_i^2)Y_i}{\sum_{i=1}^n\theta_{\varepsilon_0,i}/\sigma_i^2}=\sum_{i=1}^n\tilde w_iY_i,
$$
where $\tilde w_i=\frac{\theta_{\varepsilon_0,i}/\sigma_i^2}{\sum_{j=1}^n\theta_{\varepsilon_0,j}/\sigma_j^2}$.
Therefore, the minimax affine MSE estimator has no intercept, and the optimal weights satisfy $\sum_{i=1}^n\tilde w_i=1$.
Furthermore, since $\theta_{\varepsilon_0,i}\ge 0$ for all $i=1,\ldots,n$ by Lemma \ref{lem:modulus}, we obtain $\tilde w_i\ge 0$ for all $i=1,\ldots,n$.

\begin{lemma}\label{lem:modulus}
    Let $\varepsilon>0$ and $\bm\theta_\varepsilon$ solve (\ref{modulus}).
    Then, the following holds: (i) $\sum_{i=1}^n\theta_{\varepsilon,i}^2/\sigma_i^2= \varepsilon^2/4$; (ii) $\theta_{\varepsilon,i}\ge 0$ for all $i=1,\ldots,n$; and (iii) $\omega(\cdot)$ is differentiable at $\varepsilon>0$ with $\omega'(\varepsilon)=\frac{\varepsilon}{2\sum_{i=1}^n\theta_{\varepsilon,i}/\sigma_i^2}$.
\end{lemma}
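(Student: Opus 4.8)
The plan is to reduce the $(n+1)$-dimensional problem (\ref{modulus}) to a scalar problem and then read all three claims off that reduction. The key observation is that, for any $\bm\theta\in\Theta_g$ and any $i\ge 1$, the constraint $|\theta_0-\theta_i|\le C\|R_0-R_i\|=C\|R_i\|$ forces $\theta_i\ge\theta_0-C\|R_i\|$, hence $\theta_i^2\ge(\max\{0,\theta_0-C\|R_i\|\})^2$. Setting $\phi(t)\equiv\sum_{i=1}^n(\max\{0,t-C\|R_i\|\})^2/\sigma_i^2$, this gives $\sum_{i=1}^n\theta_i^2/\sigma_i^2\ge\phi(\theta_0)$ for every feasible $\bm\theta$. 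Conversely, for every $t\ge 0$ the vector with zeroth coordinate $t$ and $i$th coordinate $\max\{0,t-C\|R_i\|\}$ lies in $\Theta_g$: since $s\mapsto\max\{0,t-Cs\}$ is $C$-Lipschitz and $\big|\|R_i\|-\|R_j\|\big|\le\|R_i-R_j\|$, the pairwise bounds among the coordinates $1,\dots,n$ hold, while $t-\max\{0,t-C\|R_i\|\}=\min\{t,C\|R_i\|\}\in[0,C\|R_i\|]$ handles the bounds involving the zeroth coordinate; moreover this vector uses budget exactly $\phi(t)$. Combining the two directions, and using that $\phi$ is continuous, nondecreasing, equal to $0$ on $[0,C\|R_1\|]$, strictly increasing on $[C\|R_1\|,\infty)$, and divergent as $t\to\infty$, I would conclude that $\omega(\varepsilon)=2\,t^*(\varepsilon)$, where $t^*(\varepsilon)$ is the unique solution of $\phi(t)=\varepsilon^2/4$, and that $t^*(\varepsilon)>C\|R_1\|\ge 0$ whenever $\varepsilon>0$.

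Given this reduction, parts (i) and (ii) are immediate. If $\bm\theta_\varepsilon$ solves (\ref{modulus}) then $\theta_{\varepsilon,0}=\omega(\varepsilon)/2=t^*(\varepsilon)$, so $\sum_i\theta_{\varepsilon,i}^2/\sigma_i^2\ge\phi(t^*(\varepsilon))=\varepsilon^2/4$, while feasibility gives $\sum_i\theta_{\varepsilon,i}^2/\sigma_i^2\le\varepsilon^2/4$; equality is (i). The equality also forces each of the nonnegative termwise gaps $\theta_{\varepsilon,i}^2-(\max\{0,t^*(\varepsilon)-C\|R_i\|\})^2$ to vanish, and combined with $\theta_{\varepsilon,i}\ge t^*(\varepsilon)-C\|R_i\|$ this yields $\theta_{\varepsilon,i}=\max\{0,t^*(\varepsilon)-C\|R_i\|\}\ge 0$, which is (ii) (and shows the maximizer is unique).

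For (iii), note that each $g_i(t)\equiv(\max\{0,t-C\|R_i\|\})^2/\sigma_i^2$ is continuously differentiable with $g_i'(t)=2\max\{0,t-C\|R_i\|\}/\sigma_i^2$, so $\phi\in C^1$ and $\phi'(t^*(\varepsilon))=2\sum_{i=1}^n\theta_{\varepsilon,i}/\sigma_i^2$, which is strictly positive because $\theta_{\varepsilon,1}=t^*(\varepsilon)-C\|R_1\|>0$. By the inverse function theorem $\phi^{-1}$ is differentiable at $\varepsilon^2/4$ with derivative $1/\phi'(t^*(\varepsilon))$, and since $\omega(\delta)=2\phi^{-1}(\delta^2/4)$ for $\delta$ in a neighbourhood of $\varepsilon$ (using continuity of $\phi'$ and positivity of $\phi'(t^*(\varepsilon))$), the chain rule gives $\omega'(\varepsilon)=2\cdot\frac{1}{\phi'(t^*(\varepsilon))}\cdot\frac{\varepsilon}{2}=\frac{\varepsilon}{2\sum_{i=1}^n\theta_{\varepsilon,i}/\sigma_i^2}$.

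I expect the main obstacle to be the first paragraph: correctly establishing the equivalence with the scalar problem requires verifying that the candidate vector meets \emph{all} the pairwise Lipschitz constraints of $\Theta_g$ (not only those involving $\theta_0$) and confirming that $\phi$ is strictly increasing past $C\|R_1\|$, so that the budget constraint genuinely binds and $t^*(\varepsilon)>C\|R_1\|$ — both of which lean on the ordering $\|R_1\|\le\dots\le\|R_n\|$. Once that reduction is in hand, (i)--(iii) amount to bookkeeping plus a routine inverse-function-theorem argument.
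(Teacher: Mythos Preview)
Your proof is correct and takes a genuinely different route from the paper's. The paper handles (i) and (ii) by short contradiction arguments exploiting the translation invariance of $\Theta_g$: for (i), if the budget were slack one could shift every coordinate up by $\delta$; for (ii), if some $\theta_{\varepsilon,i}<0$ one could replace each coordinate by $\max\{0,\theta_{\varepsilon,i}\}+\delta$. For (iii) the paper simply invokes Lemma~D.1 of \cite{Armstrong.Kolesar2018}. By contrast, you reduce (\ref{modulus}) to the scalar equation $\phi(t)=\varepsilon^2/4$, which lets you read off (i) and (ii) directly and, as a bonus, pins down the maximizer uniquely as $\theta_{\varepsilon,i}=\max\{0,t^*(\varepsilon)-C\|R_i\|\}$; (iii) then follows from the inverse function theorem with no external citation. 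Your approach is more constructive and fully self-contained, at the modest cost of verifying the pairwise Lipschitz constraints for the candidate vector; the paper's approach is shorter for (i)--(ii) but leans on an outside result for (iii).
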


\begin{proof}[Proof of Lemma \ref{lem:modulus}]

We prove (i) by contradiction. Suppose $\sum_{i=1}^n\theta_{\varepsilon,i}^2/\sigma_i^2< \varepsilon^2/4$.
Let $\tilde{\bm\theta}(\delta)\in\mathbb{R}^{n+1}$ be such that $\theta_i(\delta)=\theta_{\varepsilon,i}+\delta$ for all $i=0,1,\ldots,n$.
Then, obviously, $\tilde{\bm\theta}(\delta)\in\Theta_g$.
Furthermore, there exists a sufficiently small $\delta>0$ such that $\sum_{i=1}^n\tilde\theta_{i}(\delta)^2/\sigma_i^2\le \varepsilon^2/4$.
For any such $\delta>0$, $\tilde\theta_0(\delta)>\theta_{\varepsilon,0}$.
This contradicts the assumption that $\bm\theta_\epsilon$ solves (\ref{modulus}).

Next, we prove (ii) by contradiction. Suppose there exists $i\ge 1$ such that $\theta_{\varepsilon,i}<0$.
Let $\tilde{\bm \theta}(\delta)\in\mathbb{R}^{n+1}$ be such that $\tilde\theta_i(\delta)=\max\{0,\theta_{\varepsilon,i}\}+\delta$ for all $i=0,1,\ldots,n$.
For any $\delta\ge 0$, $\tilde{\bm \theta}(\delta)\in \Theta_g$, since for all $i$ and $j$,
$$
|\tilde\theta_i(\delta)-\tilde\theta_j(\delta)|=|\max\{0,\theta_{\varepsilon,i}\}-\max\{0,\theta_{\varepsilon,j}\}|\le |\theta_{\varepsilon,i}-\theta_{\varepsilon,j}|\le C\|R_i-R_j\|.
$$
Furthermore, we have $\tilde\theta_i(0)^2=\theta_{\varepsilon,i}^2$ if $\theta_{\varepsilon,i}\ge 0$, and $\tilde\theta_i(0)^2=0<\theta_{\varepsilon,i}^2$ if $\theta_{\varepsilon,i}< 0$.
Since $\theta_{\varepsilon,i}<0$ for some $i\ge 1$, it follows that
$\sum_{i=1}^n\tilde\theta_i(0)^2/\sigma_i^2<\sum_{i=1}^n\theta_{\varepsilon,i}^2/\sigma_i^2\le \varepsilon^2/4$.
As a result, there exists a sufficiently small $\delta>0$ such that $\sum_{i=1}^n\tilde\theta_{i}(\delta)^2/\sigma_i^2\le \varepsilon^2/4$.
For any such $\delta>0$, $\tilde\theta_0(\delta)>\theta_{\varepsilon,0}$.
This contradicts the assumption that $\bm\theta_\epsilon$ solves (\ref{modulus}).

To prove (iii), we apply Lemma D.1 in Supplemental Appendix D of \cite{Armstrong.Kolesar2018}.
Our setting falls into their framework where $f=\bm\theta$, $\mathcal{F}=\mathcal{G}=\Theta_g$, $Kf=(\theta_1/\sigma_1,\ldots,\theta_n/\sigma_n)'$, and $Lf=\theta_0$ in their notation.
To apply their Lemma D.1, let $\iota\in\mathbb{R}^{n+1}$ denote the vector of ones. Then, we have $\iota\in\Theta_g$, $L\iota=1$, and $\bm\theta_{\varepsilon}+c\iota\in\Theta_g$ for all $c\in\mathbb{R}$.
By their Lemma D.1, $\omega(\cdot)$ is differentiable at $\varepsilon>0$ with
$$
\omega'(\varepsilon)=\frac{\varepsilon}{(K\iota)'(K\bm\theta_\varepsilon-K(-\bm\theta_\varepsilon))}=\frac{\varepsilon}{2\sum_{i=1}^n\theta_{\varepsilon,i}/\sigma_i^2}.
$$
\end{proof}
\newpage

\begin{center}
\Large
    \textit{Online Appendix}
    \normalsize
\end{center}

\section{Additional Tables for the empirical application}

\begin{table}[ht]
\centering
\begin{tabular}{llll}
  \hline
estimator & C & point & CI \\ 
  \hline
rdrobust &  & 0.138 & [-0.410, 0.686] \\ 
  rdbinary & C=0.5*Crot & 0.097 & [-0.185, 0.385] \\ 
  rdbinary & C=Crot & 0.103 & [-0.271, 0.469] \\ 
  rdbinary & C=1.5*Crot & 0.107 & [-0.323, 0.529] \\ 
   \hline
\end{tabular}
\caption{Narrow corruption at the cutoff 1 (N = 385)} 
\label{tab:application_narrow0}
\end{table}

\begin{table}[ht]
\centering
\begin{tabular}{llll}
  \hline
estimator & C & point & CI \\ 
  \hline
rdrobust &  & 0.534 & [ 0.168, 0.900] \\ 
  rdbinary & C=0.5*Crot & 0.070 & [-0.208, 0.345] \\ 
  rdbinary & C=Crot & 0.106 & [-0.246, 0.447] \\ 
  rdbinary & C=1.5*Crot & 0.087 & [-0.315, 0.471] \\ 
   \hline
\end{tabular}
\caption{Narrow corruption at the cutoff 2 (N = 218)} 
\label{tab:application_narrow1}
\end{table}

\begin{table}[ht]
\centering
\begin{tabular}{llll}
  \hline
estimator & C & point & CI \\ 
  \hline
rdrobust &  & -0.419 & [-1.133, 0.295] \\ 
  rdbinary & C=0.5*Crot &  0.293 & [-0.017, 0.606] \\ 
  rdbinary & C=Crot &  0.270 & [-0.129, 0.671] \\ 
  rdbinary & C=1.5*Crot &  0.215 & [-0.251, 0.671] \\ 
   \hline
\end{tabular}
\caption{Narrow corruption at the cutoff 3 (N = 225)} 
\label{tab:application_narrow2}
\end{table}

\begin{table}[ht]
\centering
\begin{tabular}{llll}
  \hline
estimator & C & point & CI \\ 
  \hline
rdrobust &  & -0.637 & [-1.382, 0.108] \\ 
  rdbinary & C=0.5*Crot & -0.131 & [-0.494, 0.228] \\ 
  rdbinary & C=Crot & -0.128 & [-0.564, 0.301] \\ 
  rdbinary & C=1.5*Crot & -0.092 & [-0.591, 0.386] \\ 
   \hline
\end{tabular}
\caption{Narrow corruption at the cutoff 4 (N = 139)} 
\label{tab:application_narrow3}
\end{table}

\begin{table}[ht]
\centering
\begin{tabular}{llll}
  \hline
estimator & C & point & CI \\ 
  \hline
rdrobust &  & 0.755 & [-0.641, 2.150] \\ 
  rdbinary & C=0.5*Crot & 0.142 & [-0.293, 0.576] \\ 
  rdbinary & C=Crot & 0.221 & [-0.351, 0.795] \\ 
  rdbinary & C=1.5*Crot & 0.300 & [-0.376, 0.940] \\ 
   \hline
\end{tabular}
\caption{Narrow corruption at the cutoff 5 (N = 116)} 
\label{tab:application_narrow4}
\end{table}

\begin{table}[ht]
\centering
\begin{tabular}{llll}
  \hline
estimator & C & point & CI \\ 
  \hline
rdrobust &  &  0.738 & [-0.016, 1.492] \\ 
  rdbinary & C=0.5*Crot & -0.004 & [-0.315, 0.306] \\ 
  rdbinary & C=Crot &  0.031 & [-0.339, 0.408] \\ 
  rdbinary & C=1.5*Crot &  0.080 & [-0.332, 0.494] \\ 
   \hline
\end{tabular}
\caption{Narrow corruption at the cutoff 6 (N = 73)} 
\label{tab:application_narrow5}
\end{table}

\begin{table}[ht]
\centering
\begin{tabular}{llll}
  \hline
estimator & C & point & CI \\ 
  \hline
rdrobust &  & 1.954 & [-0.238, 4.146] \\ 
  rdbinary & C=0.5*Crot & 0.306 & [-0.339, 0.903] \\ 
  rdbinary & C=Crot & 0.360 & [-0.446, 1.000] \\ 
  rdbinary & C=1.5*Crot & 0.395 & [-0.524, 1.000] \\ 
   \hline
\end{tabular}
\caption{Narrow corruption at the cutoff 7 (N = 46)} 
\label{tab:application_narrow6}
\end{table}


\end{document}